\providecommand{\keywords}[1]
{
  \small	
  \textbf{Keywords: } #1
}
\newtheorem{theorem}{Theorem}
\newtheorem{proposition}[theorem]{Proposition}
\newtheorem{corollary}[theorem]{Corollary}
\newtheorem{definition}[theorem]{Definition}
\newtheorem{claim}{Claim}[theorem]
\newcommand{\smallqed}{{\tiny ($\Box$)}}
\newenvironment{claimproof}{\noindent\emph{Proof of claim.}}{~\smallqed\newline\medskip}
\newcommand{\fullversion}[1]{}%in the fullversion: add #1 in the brackets
\DeclareMathOperator{\sep}{sep}
\DeclareMathOperator{\sepRB}{sep_{RB}}
\DeclareMathOperator{\maxsepRB}{max-sep_{RB}}
\newcommand{\PBSEP}{\textsc{Red-Blue Separation}}
\newcommand{\PBMAXSEP}{\textsc{Max Red-Blue Separation}}
\newcommand{\x}{\overline{x}}
\newcommand{\y}{\overline{y}}
\newcommand{\Pb}[4]{%
%\begin{center}
\smallskip
  \noindent\begin{tabular}{|l|}%
  \hline
    \begin{minipage}[c]{\textwidth}
      \smallskip%
      \par\noindent%
      #1%
      \par\noindent%
      %$\bullet$
      \textbf{\textsf{Input}}: #2% 
      \par\noindent%
      %$\bullet$
      \textbf{\textsf{#3}}: #4 
      \smallskip%
      \par\noindent%
    \end{minipage}
  \\\hline
  \end{tabular}%
  \smallskip
%\end{center}
}%
\title{The RED-BLUE SEPARATION problem on graphs\thanks{This study has been carried out in the frame of the ``Investments for the future'' Programme IdEx Bordeaux - SysNum (ANR-10-IDEX-03-02). Ralf Klasing's research was partially supported by the ANR project TEMPOGRAL (ANR-22-CE48-0001). Florent Foucaud was partially financed by the IFCAM project ``Applications of graph homomorphisms'' (MA/IFCAM/18/39), the ANR project GRALMECO (ANR-21-CE48-0004) and the French government IDEX-ISITE initiative 16-IDEX-0001 (CAP 20-25). Tuomo Lehtil\"a's research was supported by the Finnish Cultural Foundation and by the Academy of Finland grant 338797.}}
\author{Subhadeep Ranjan Dev\footnote{ACM Unit, Indian Statistical Institute, Kolkata, India}
    \and Sanjana Dey \footnote{National University of Singapore, Singapore}~~\footnote{Most of the work of this author was done when she was in Indian Statistical Institute, Kolkata}
    \and Florent Foucaud\footnote{Université Clermont-Auvergne, CNRS, Mines de Saint-Étienne, Clermont-Auvergne-INP, LIMOS, 63000 Clermont-Ferrand, France}~\footnote{Univ. Orléans, INSA Centre Val de Loire, LIFO EA 4022, F-45067 Orléans Cedex 2, France}
    \and Ralf Klasing\footnote{Universit\'e de Bordeaux, Bordeaux INP, CNRS, LaBRI, UMR 5800, Talence, France}
    \and Tuomo Lehtil\"a\footnote{University of Turku, Department of Mathematics and Statistics, Turku, Finland}~~\footnote{Most of the work of this author was done when he was in Univ Lyon, Universit\'e Claude Bernard, CNRS, LIRIS - UMR 5205, F69622, France}
}
\begin{document}
\maketitle

\begin{abstract}
We introduce the \PBSEP{} problem on graphs, where we are given a graph $G = (V, E)$ whose vertices are colored either red or blue, and we want to select a (small) subset $S \subseteq V$, called \emph{red-blue separating set}, such that for every red-blue pair of vertices, there is a vertex $s \in S$ whose closed neighborhood contains exactly one of the two vertices of the pair. %This problem was previously studied in a geometric setting (where red and blue points in the plane are separated by lines), motivated by applications in machine learning. The problem without colors (where all pairs of vertices or points have to be separated) is extensively studied under various names such as \textsc{Test Cover} or \textsc{Identifying Code}. 
We study the computational complexity of \PBSEP, in which one asks whether a given red-blue colored graph has a red-blue separating set of size at most a given integer. We prove that the problem is NP-complete even for restricted graph classes. %such as planar bipartite sub-cubic graphs, in the setting where the two color classes have equal size. We also show that the problem is NP-hard to approximate within a factor of $(1-\epsilon)\ln n$ for every $\epsilon>0$, even for split graphs of order $n$, and when one color class has size~$1$. On the other hand, 
We also show that it is always approximable in polynomial time within a factor of $2\ln n$, where $n$ is the input graph's order.
In contrast, for triangle-free graphs and for graphs of bounded maximum degree, we show that \PBSEP{} is solvable in polynomial time when the size of the smaller color class is bounded by a constant. %(using algorithms that are in the parameterized class XP, with the size of the smallest color class as parameter). 
However, on general graphs, we show that the problem is $W[2]$-hard even when parameterized by the solution size plus the size of the smaller color class. %(This is in contrast with the geometric version of separating points by half-planes, for which both parameterizations are known to be fixed-parameter tractable.)
We also consider the problem \PBMAXSEP{} where the coloring is not part of the input. Here, given an input graph $G$, we want to determine the smallest integer~$k$ such that, \emph{for every possible red-blue coloring} of $G$, there is a red-blue separating set of size at most $k$. We derive tight bounds on the cardinality of an optimal solution of \PBMAXSEP, showing that it can range from logarithmic in the graph order, up to the order minus one. We also give bounds with respect to related parameters. For trees however we prove an upper bound of two-thirds the order. We then show that \PBMAXSEP{} is NP-hard, even for graphs of bounded maximum degree, but can be approximated in polynomial time within a factor of $O(\ln^ 2 n)$.
\end{abstract}

\keywords{separating sets, dominating sets, identifying codes}
\setcounter{footnote}{0}

\section{Introduction}

We introduce and study the \PBSEP{} problem for graphs. %The problem is motivated from its geometric counterpart, which is the red blue separation of points in the Euclidean plane.
Separation problems for discrete structures have been studied extensively from various perspectives. In the 1960s, R\'enyi~\cite{R61} introduced the \textsc{Separation} problem for set systems (a set system is a collection of sets over a set of vertices), which has been rediscovered by various authors in different contexts, see e.g.~\cite{B72,CCCHL08,HY14,MS85}. %(it is sometimes called \textsc{Test Cover} or \textsc{Discriminating Code}).
In this problem, one aims at selecting a solution subset $\mathcal S$ of sets from the input set system to separate every pair of vertices, in the sense that the subset of $\mathcal S$ corresponding to those sets to which each vertex belongs to, is unique. The graph version of this problem (where the sets of the input set system are the closed neighborhoods of a graph), called \textsc{Identifying Code}~\cite{KCL98}, is also extensively studied. These problems have numerous applications in areas such as monitoring and fault-detection in networks~\cite{UTS04}, biological testing~\cite{MS85}, and machine learning~\cite{KE07}. The \PBSEP{} problem which we study here is a red-blue colored version of \textsc{Separation}, where instead of all pairs we only need to separate red vertices from blue vertices.

In the general version of the \PBSEP{} problem, one is given a set system $(V,\mathcal S)$ consisting of a set $\mathcal S$ of subsets of a set $V$ of vertices which are either blue or red; one wishes to separate every blue from every red vertex using a solution subset $\mathcal C$ of $\mathcal S$ (here a set of $\mathcal C$ separates two vertices if it contains exactly one of them). Motivated by machine learning applications, a geometric-based special case of \PBSEP{} has been studied in the literature, where the vertices of $V$ are points in the plane and the sets of $\mathcal S$ are half-planes~\cite{CN98}. %%In this geometric setting, the problem is equivalent to the \textsc{Discretization} problem for two continuous variables, studied in the area of machine learning~\cite{CN98,KMMPS21,KE07}. %\PBSEP{} is a variation of the \textsc{Test Cover} problem, where instead of separating red elements from blue elements, one has to separate all pairs of elements.
The classic problem \textsc{Set Cover} over set systems generalizes both \textsc{Geometric Set Cover} problems and graph problem \textsc{Dominating Set} (similarly, the set system problem \textsc{Separation} generalizes both \textsc{Geometric Discriminating Code} and the graph problem \textsc{Identifying Code}). It thus seems natural to study the graph version of \PBSEP.

\paragraph{Problem definition.} In the graph setting, we are given a graph $G$ and a red-blue coloring $c:V(G)\to \{\text{red},\text{blue}\}$ of its vertices, and we want to select a (small) subset $S$ of vertices, called \emph{red-blue separating set}, such that for every red-blue pair $r,b$ of vertices, there is a vertex from $S$ whose closed neighborhood contains exactly one of $r$ and $b$. Equivalently, $N[r]\cap S\neq N[b]\cap S$, where $N[x]$ denotes the closed neighborhood of vertex $x$; the set $N[x]\cap S$ is called the \emph{code} of $x$ (with respect to $S$), and thus all codes of blue vertices are different from all codes of red vertices. The smallest size of a red-blue separating set of $(G,c)$ is denoted by $\sepRB(G,c)$. Note that if a red and a blue vertex have the same closed neighborhood, they cannot be separated. Thus, for simplicity, we will consider only \emph{twin-free} graphs, that is, graphs where no two vertices have the same closed neighborhood. Also, for a twin-free graph, the vertex set $V(G)$ is always a red-blue separating set as all the vertices have a unique subset of neighbors. We have the following associated computational problem.

\Pb{\PBSEP}
{A red-blue colored twin-free graph $(G,c)$ and an integer $k$.}
{Question}
{Do we have $\sepRB(G,c)\leq k$?}
%es $(G,c)$ have a red-blue separating set of size at most $k$?, that is, $\sepRB(G,c)\leq k$?}

%\paragraph{Problem Definition.} A red-blue colored graph $(G,c)$ consists of a graph $G$ and a red-blue coloring $c : V(G) \rightarrow \{\text{Red}, \text{Blue}\}$ of the vertices of $G$. The red-blue separation problem is to find a subset $X \subseteq V(G)$ of minimum cardinality such that for all $u, v \in V(G)$, if $c(u) \neq c(v)$ then $N[u] \cap X \neq N[v] \cap X$.

It is also interesting to study the problem when the red-blue coloring is not part of the input. For a given graph $G$, we thus define the parameter $\maxsepRB(G)$ which denotes the largest size, over each possible red-blue coloring $c$ of $G$, of a smallest red-blue separating set of $(G,c)$. The associated decision problem is stated as follows.

\Pb{\PBMAXSEP}
{A twin-free graph $G$ and an integer $k$.}
{Question}
{Do we have $\maxsepRB(G) \leq k$?} %Is it true that for each red-blue coloring $c$ of $G$, $\sepRB(G,c) \leq k$, that is, $\maxsepRB(G) \leq k$?}
In Figure~\ref{fig:sepMaxsep}, to note the difference between $\sepRB$ and $\maxsepRB$, a path of 6 vertices $P_6$ is shown, where the vertices are colored red or blue.
\begin{figure}[h]
    \centering
    \includegraphics[scale = 0.9]{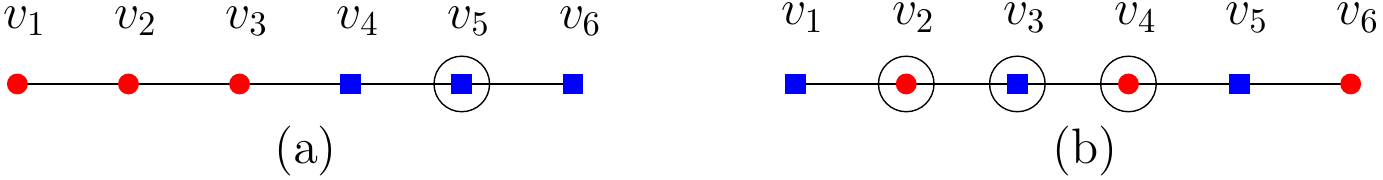}
    \caption{A path of 6 vertices where (a) $\sepRB(P_6, c) = 1$ and (b) $\maxsepRB(P_6) = 3$. The square vertices are blue, the round ones are red; the members of the red-blue separating set are circled.}
    \label{fig:sepMaxsep}
  \end{figure}

\paragraph{Our results.}
We show that \PBSEP{} is NP-complete even for restricted graph classes such as planar bipartite sub-cubic graphs, in the setting where the two color classes\footnote{One class consists of vertices colored \emph{red} and the other class consists of vertices colored \emph{blue}.} have equal size. We also show that the problem is NP-hard to approximate within a factor of $(1-\epsilon)\ln n$ for every $\epsilon>0$, even for split graphs\footnote{A graph $G = (V, E)$ is called a \emph{split graph} when the vertices in $V$ can be be partitioned into an independent set and a clique.} of order $n$, and when one color class has size~$1$. On the other hand, we show that \PBSEP{} is always approximable in polynomial time within a factor of $2\ln n$.
In contrast, for triangle-free graphs and for graphs of bounded maximum degree, we prove that \PBSEP{} is solvable in polynomial time when the smaller color class is bounded by a constant (using algorithms that are in the parameterized class XP, with the size of the smaller color class as parameter). However, on general graphs, the problem is shown to be $W[2]$-hard even when parameterized by the solution size plus the size of the smaller color class. (This is in contrast with the geometric version of separating points by half-planes, for which both parameterizations are known to be fixed-parameter tractable~\cite{BGL19,KMMPS21}.)% A similar XP algorithm can be obtained for the parameter maximum degree plus size of the smallest color class.

As the coloring is not specified, $\maxsepRB(G)$ is a parameter that is worth studying from a structural viewpoint. In particular, we study the possible values for $\maxsepRB(G)$. We show the existence of tight bounds on $\maxsepRB(G)$ in terms of the order $n$ of the graph $G$, proving that it can range from $\lfloor \log_2 n\rfloor$ up to $n-1$ (both bounds are tight). For trees however we prove bounds involving the number of support vertices (i.e. which have a leaf neighbor), which imply that $\maxsepRB(G)\leq\frac{2n}{3}$. We also give bounds in terms of the (non-colored) separation number. We then show that the associated decision problem \PBMAXSEP{} is NP-hard, even for graphs of bounded maximum degree, but can be approximated in polynomial time within a factor of $O(\ln^ 2 n)$.

An extended abstract of this paper was presented in the conference IWOCA'22 and appeared in the proceedings as~\cite{iwoca22}. The present paper includes all the proofs that were missing from the conference version.

\paragraph{Related work.} \PBSEP{} has been studied in the geometric setting of red and blue points in the Euclidean plane~\cite{BGL19,CDKW05,MMS20}. In this problem, one wishes to select a small set of (axis-parallel) lines such that any two red and blue points lie on the two sides of one of the solution lines. The motivation stems from the \textsc{Discretization} problem for two classes and two features in machine learning, where each point represents a data point whose coordinates correspond to the values of the two features, and each color is a data class. The problem is useful in a preprocessing step to transform the continuous features into discrete ones, with the aim of classifying the data points~\cite{CN98,KMMPS21,KE07}. This problem was shown to be NP-hard~\cite{CN98} but 2-approximable~\cite{CDKW05} and fixed-parameter tractable when parameterized by the size of a smallest color class~\cite{BGL19} and by the solution size~\cite{KMMPS21}. A polynomial time algorithm for a special case was recently given in~\cite{MMS20}.

The \textsc{Separation} problem for set systems (also known as \textsc{Test Cover} and \textsc{Discriminating Code}) was introduced in the 1960s~\cite{R61} and widely studied from a combinatorial point of view~\cite{BS07,B72,CCCHL08,HY14} as well as from the algorithmic perspective for the settings of classical, approximation and parameterized algorithms~\cite{CGJMY16,DHHHLRS03,MS85}. The associated graph problem is called \textsc{Identifying Code}~\cite{KCL98} and is also extensively studied (see~\cite{biblio} for an online bibliography with almost 500 references as of January 2022); geometric versions of \textsc{Separation} have been studied as well~\cite{GeomDC,GP19,HJ20}. The \textsc{Separation} problem is also closely related to the \textsc{VC Dimension} problem~\cite{VCdim} which is very important in the context of machine learning. In \textsc{VC Dimension}, for a given set system $(V,\mathcal S)$, one is looking for a (large) set $X$ of vertices that is \emph{shattered}, that is, for every possible subset of $X$, there is a set of $\mathcal S$ whose trace on $X$ is the subset. This can be seen as ''perfectly separating'' a subset of $\mathcal S$ using $X$; see~\cite{BLLPT15} for more details on this connection.

\paragraph{Structure of the paper.} We start with the algorithmic results on \PBSEP{} in Section~\ref{sec:complexity-SEP}. We then present the bounds on $\maxsepRB$ in Section~\ref{sec:bounds} and the hardness result for \PBMAXSEP{} in Section~\ref{sec:complexity-MAXSEP}. We conclude in Section~\ref{sec:conclu}. %Due to space constraints, we have omitted some proofs or parts of proofs.

%\section{red-blue Test Covers}
%\todo[inline]{Define the problem on hypergraphs, that generalizes both the geometric setting and the graph setting.}

\section{Complexity and algorithms for \PBSEP}\label{sec:complexity-SEP}

We will prove some algorithmic results for \PBSEP{} by reducing to or from the following problems.

\Pb{\textsc{Set Cover}}{A set of elements $U$, a family $\mathcal S$ of subsets of $U$ and an integer $k$.}{Question}{Does their exist a cover $\mathcal C \subseteq \mathcal S$, with $|\mathcal C| \leq k$ such that $\bigcup_{C \in \mathcal C} C = U$?}

\Pb{\textsc{Dominating Set}}{A graph $G = (V, E)$ and an integer $k$.}{Question}{Does there exist a set $D \subseteq V$ of size $k$ with $\forall v \in V, N[v] \cap D \neq \emptyset$?}

\subsection{Hardness}

\begin{theorem}\label{thm:split-hard}
    {\PBSEP} cannot be approximated within a factor of $(1-\epsilon)\cdot\ln n$ for any $\epsilon>0$ even when the smallest color class has size $1$ and the input is a split graph of order $n$, unless P = NP. Moreover, {\PBSEP} is W[2]-hard when parameterized by the solution size together with the size of the smallest color class, even on split graphs.
\end{theorem}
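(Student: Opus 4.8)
The plan is to reduce from \textsc{Set Cover}, exploiting the fact that when one color class is a single vertex, the separation condition becomes a domination-like condition.

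\medskip

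\noindent\textbf{Construction.} Given an instance $(U,\mathcal S,k)$ of \textsc{Set Cover} with $U=\{u_1,\dots,u_m\}$ and $\mathcal S=\{S_1,\dots,S_t\}$, build a split graph $G$ as follows. Create a clique on vertex set $\{v_{S} : S\in\mathcal S\}$ (one clique vertex per set) together with one extra clique vertex $w$. Create an independent set $\{x_{u} : u\in U\}$ (one vertex per element), and join $x_u$ to $v_S$ whenever $u\in S$. Finally colour $w$ blue and all other vertices red (so the smaller colour class is $\{w\}$, of size $1$). One should add a few gadget vertices if needed to kill twins and to make $w$ distinguishable, but the core idea is: a red-blue separating set $S^*$ must, for every red vertex $z$, contain a vertex whose closed neighbourhood contains exactly one of $z$ and $w$. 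Since $w$ is adjacent to every clique vertex, the code of $w$ with respect to $S^*$ is essentially ``all clique vertices of $S^*$''; the code of a red element-vertex $x_u$ is the set of clique vertices $v_S\in S^*$ with $u\in S$ (plus possibly $x_u$ itself if $x_u\in S^*$). The key point is to set the construction up so that separating $w$ from all the $x_u$'s while using only clique vertices forces $\{S : v_S\in S^*\}$ to be a set cover, and conversely a set cover of size $k$ yields a separating set of size roughly $k$ (plus a constant for gadget/bookkeeping vertices).

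\medskip

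\noindent\textbf{Correctness.} The forward direction: from a set cover $\mathcal C$, take $\{v_S:S\in\mathcal C\}$ together with a constant number of fixed gadget vertices; one checks that every red vertex gets a code distinct from that of $w$ (element vertices are covered, clique vertices are handled by the gadget), giving a separating set of size $|\mathcal C|+O(1)$. The reverse direction: from a separating set $S^*$, argue that we may assume $S^*$ uses only clique vertices (replacing an element vertex $x_u\in S^*$ by some $v_S$ with $u\in S$ does not hurt, or we argue directly about codes), and then that the sets indexed by $S^* $ must cover $U$, since an uncovered element $u$ would give $x_u$ the same code as $w$. This makes the optimum of the \textsc{Set Cover} instance and of the \PBSEP{} instance differ by only an additive constant and a bounded multiplicative overhead, so the $(1-\epsilon)\ln n$ inapproximability of \textsc{Set Cover} (Dinur--Steurer), together with $n$ being polynomial in $m+t$, transfers to \PBSEP{} on split graphs; the same reduction is a parameterized reduction from \textsc{Dominating Set} (equivalently \textsc{Set Cover} parameterized by solution size), which is $W[2]$-hard, with the new parameter being solution size $+\,1$, giving $W[2]$-hardness parameterized by solution size plus size of the smallest colour class.

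\medskip

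\noindent\textbf{Main obstacle.} The delicate part is the gadgeting: one must (i) make $G$ twin-free, (ii) ensure that in any optimal or near-optimal solution we can assume only clique vertices (or a controlled number of exceptions) are chosen, so that the ``code of $x_u$ vs.\ code of $w$'' analysis really does reduce to covering $U$, and (iii) keep the additive overhead a true constant so that the inapproximability ratio is preserved up to the $\ln n$ versus $\ln m$ discrepancy (which is harmless since $n=\mathrm{poly}(m,t)$, at the cost of adjusting $\epsilon$). Handling the clique vertices themselves---they also need to be separated from $w$, and from each other implicitly via the all-pairs-of-opposite-colours requirement---without blowing up the solution size is the technical crux; a small number of dedicated ``pendant-like'' red vertices attached to each clique vertex, or a single universal gadget, should accomplish this while adding only $O(1)$ to the solution.
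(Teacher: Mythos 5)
There is a genuine gap: your construction is oriented the wrong way round, and the constraints you rely on do not encode covering. In your graph, $w$ sits in the clique together with all the set-vertices $v_S$, so $N[x_u]\triangle N[w]=\{x_u,w\}\cup\{v_S: u\notin S\}$. Hence separating a red element-vertex $x_u$ from $w$ does \emph{not} require a selected set containing $u$; on the contrary, it is achieved by selecting any $v_S$ with $u\notin S$, or simply by putting $w$ itself into the solution, which separates \emph{every} $x_u$ from $w$ at cost $1$ (since $w\notin N[x_u]$). So your key claim that ``an uncovered element $u$ would give $x_u$ the same code as $w$'' is false. The only constraints in your instance that carry any covering-type content are the ones you relegate to the ``main obstacle'': separating each red clique vertex $v_S$ from $w$ requires a selected vertex in $N[v_S]\triangle N[w]=\{x_u:u\in S\}$, i.e.\ a transversal (hitting set) of $\mathcal S$ chosen among the element-vertices. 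If you now add the gadget you propose (pendant vertices or a single universal vertex adjacent to all $v_S$ but not $w$, so as to dispose of these separations with $O(1)$ extra cost), the optimum of your instance collapses to a constant --- e.g.\ $\{w,g\}$ would already be a red-blue separating set --- and the reduction proves nothing. (As literally written, without the gadget, your instance happens to have optimum $\tau$ or $\tau+1$ where $\tau$ is the minimum transversal, so it is salvageable as a reduction from \textsc{Hitting Set}; but that is not the argument you give, and the fix you propose would destroy it.)

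The paper's construction swaps the roles. The clique consists of the \emph{element}-vertices together with the single red vertex $r$, which is adjacent to all elements but to \emph{no} set-vertex; the set-vertices form the independent set. Then $N[u_i]\triangle N[r]=\{v_j: u_i\in S_j\}$, so once $r$ is in the solution (which is forced by two additional isolated blue vertices, preventing $r$ from escaping with the empty code), each element-vertex can only be separated from $r$ by a selected set-vertex that contains it --- this is exactly the set cover condition, and the overhead is a genuine additive $+1$. If you want to keep your orientation, you must either argue via \textsc{Hitting Set} through the $v_S$-versus-$w$ constraints (and \emph{not} neutralize them with a gadget), or restructure so that the unique off-color vertex is non-adjacent to the vertices playing the role of sets, as the paper does.
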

           
\begin{proof}
    For an instance $((U, \mathcal{S}),k)$ of \textsc{Set-Cover}, we construct in polynomial time an instance $((G,c), k)$ of {\PBSEP} where $G$ is a split graph and one color class has size~1. The statement will follow from the hardness of approximating \textsc{Min Set Cover} proved in~\cite{DS13}, and from the fact that \textsc{Set Cover} is W[2]-hard when parameterized by the solution size~\cite{DF99}. %We will reduce an instance $((U, \mathcal{S}),k)$ of \textsc{Set-Cover} to an instance of {\PBSEP} in a split graph $((G,c), k)$ when the size of the smaller color class (without loss of generality let us assume it to be red) is 1. We now describe the reduction.

    We create the graph $(G,c)$ by first creating vertices corresponding to all the sets and the elements. We connect a vertex $u_i$ corresponding to an element $i \in U$ to a vertex $v_j$ corresponding to a set $S_j \in \mathcal{S}$ if $u_i \in S_j$. We color all these vertices blue. We add two isolated blue vertices $b$ and $b'$. % signifying the empty sets.
    We connect all the vertices of type $u_i \in U$ to each other. Also, we add a red vertex $r$ and connect all vertices $u_i \in U$ to $r$. Now, note that the vertices $U \cup \{r\}$ form a clique whereas the vertices $v_j$ along with $b$ and $b'$ form an independent set. Thus, our constructed graph $(G,c)$ with the coloring $c$ is a split graph. See Figure~\ref{fig:setCoverToRBS}.

    \begin{figure}[!htpb]
        \centering
        \includegraphics[scale = 0.8]{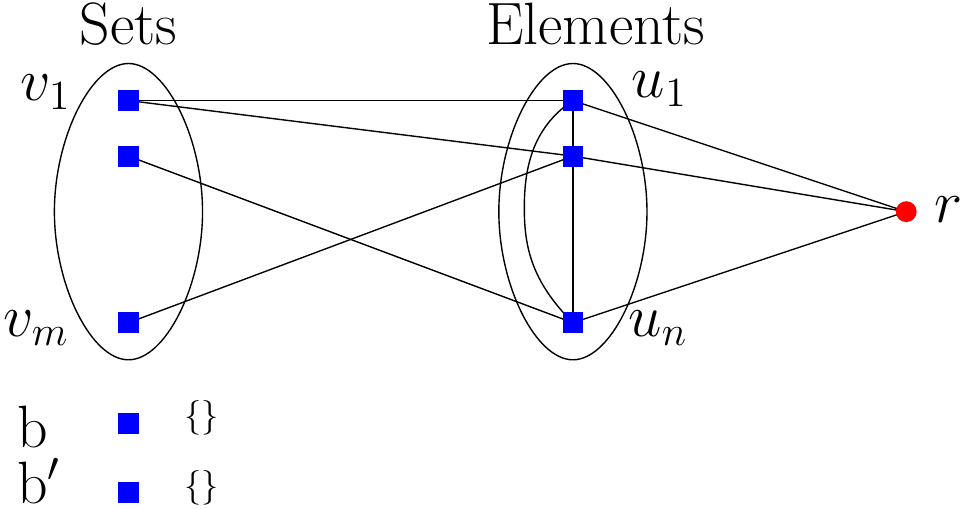}
        \caption{Reduction from \textsc{Set Cover} to {\PBSEP} of Theorem~\ref{thm:split-hard}. Vertex $r$ is red, the others are blue.}
        \label{fig:setCoverToRBS}
    \end{figure}

    \begin{claim}\label{claim:setcover-reduction}
        $\mathcal{S}$ has a set cover of size $k$ if and only if $G$ has a red-blue separating set of size at most $k + 1$.
    \end{claim}
    \begin{claimproof}
      Let $\mathcal{C}$ be a set cover of $(U, \mathcal{S})$ of size $k$. We construct a red-blue separating set $S$ of $(G, c)$ of size at most $k + 1$ as follows. For each set $s_j \in \mathcal{S}$ selected in the set cover $\mathcal{C}$, we choose the corresponding vertex $v_j$ in $S$. Also include the vertex $r$ in $S$. Observe that some blue vertices may have the empty code and the red vertex has itself as the code. Also, the vertices $u_i$ are dominated by the vertices $v_j$ and have some unique code different from $r$. Therefore, $S$ is a separating set of $(G, c)$ of size at most $k + 1$.
        
      Conversely, consider a red-blue separating set $S$ of $(G, c)$ of size $k'$. Since the vertices $U \cup \{r\}$ form a clique, choosing any vertex from this set will not separate any two vertices of the clique. Let us assume that the red vertex $r$ of $(G, c)$ gets the empty code. But then, we have the two isolated blue vertices, one of which also gets the empty code. Thus, the red vertex has to be selected. But $r$, being part of the clique, has to be separated from the blue vertices. Thus, we have to choose vertices from the independent set to separate the blue vertices of the clique from $r$. So, we dominate the blue vertices of the clique by using the vertices in the independent set, which gives our set cover of size at most $k'-1 = k$.
\end{claimproof}
This completes the proof of the theorem.
\end{proof}

\begin{theorem}\label{thm:red-DS}
  \PBSEP{} is NP-hard for bipartite planar sub-cubic graphs of girth at least 12 when the color classes have almost the same size.
\end{theorem}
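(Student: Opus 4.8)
The plan is to reduce from \textsc{Dominating Set}, exploiting the following observation: on a red-blue colored graph whose red and blue vertices lie in two distinct connected components, separating every red-blue pair is equivalent to \emph{entirely dominating one of the two monochromatic components}. Concretely, I would start from an instance $(H,k)$ of \textsc{Dominating Set} in which $H$ is already bipartite, planar, subcubic and of girth at least $12$ (justifying this is the delicate point, see below), and build $(G,c)$ as the disjoint union of two copies $H_1$ and $H_2$ of $H$, coloring all vertices of $H_1$ red and all vertices of $H_2$ blue, keeping the budget~$k$. Assuming (as we may) that $H$ is twin-free, $G$ is twin-free as well, since the closed neighborhood of a red vertex is contained in $V(H_1)$ and that of a blue vertex in $V(H_2)$, two disjoint sets, so two vertices from different components cannot be twins and two vertices inside the same copy would be twins in $H$; moreover $G$ inherits bipartiteness, planarity, maximum degree~$3$ and girth at least~$12$ from $H$, and its two color classes have exactly the same size $|V(H)|$.

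The heart of the proof is the identity $\sepRB(G,c)=\gamma(H)$, where $\gamma(H)$ denotes the domination number. For the upper bound, take a minimum dominating set $D$ of $H$ and let $S$ be the corresponding vertex set inside $H_1$: since $N_G[s]\subseteq V(H_1)$ for every $s\in S$, every blue vertex has empty code while every red vertex $u$ has code $N_H[u]\cap D\neq\emptyset$, hence all red codes differ from all (empty) blue codes and $S$ separates. For the lower bound, let $S$ be any red-blue separating set. The code of a red vertex is a subset of $S\cap V(H_1)$ and that of a blue vertex a subset of $S\cap V(H_2)$, so two such codes can coincide only when both are empty; therefore either no red vertex has empty code — which means $S\cap V(H_1)$ dominates $H_1$ — or, symmetrically, $S\cap V(H_2)$ dominates $H_2$. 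In both cases $|S|\geq\gamma(H)$. This yields the desired polynomial-time equivalence, and hence the NP-hardness, once the source problem is known hard on the required class.

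The main obstacle is thus to establish that \textsc{Dominating Set} remains NP-hard on bipartite planar subcubic graphs of girth at least~$12$. Unlike for \textsc{Independent Set}, one cannot merely subdivide edges: subdivision alters the domination number in an essentially uncontrolled way, and in fact tends to turn the problem into a polynomially solvable matching-type problem. So I would either invoke a known hardness result of this flavor, or prove it directly by starting from \textsc{Dominating Set} on planar subcubic graphs and replacing each edge (and, if needed, each vertex) by a carefully designed gadget — long enough to push the girth above $12$ and to restore bipartiteness, but equipped with forced private-neighbor vertices so that the domination number of the new instance equals $\gamma(H)$ plus an explicit, polynomially computable additive constant. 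Designing this gadget so that planarity, the degree bound, the girth bound and the exact control of $\gamma$ all hold simultaneously is the technical crux. If the available gadgetry only guarantees an approximate balance of the two color classes — for instance, if the blue side has to be taken as an auxiliary path of length a multiple of $\gamma(H)$ rather than a literal copy of $H$ (so that it is bipartite, planar, subcubic, of infinite girth, and of domination number at least $\gamma(H)$, making the same argument go through) — then one still obtains the statement with the two color classes of \emph{almost} the same size, exactly as claimed.
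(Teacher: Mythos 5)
Your proposal is correct and follows essentially the same route as the paper: a reduction from \textsc{Dominating Set} on bipartite planar subcubic graphs of girth at least~12 (a known hardness result of Zverovich and Zverovich, which the paper simply cites, so the ``technical crux'' you worry about is already settled in the literature) using two monochromatic copies of the instance, with separation forcing one copy to be dominated. The only difference is that the paper joins the two copies by a short path through a degree-2 vertex to keep the instance connected (at the cost of shifting the budget to $k+1$ and making the color classes differ by~2), whereas your disconnected two-copy variant --- which the paper itself explicitly notes as an admissible simplification --- gives exactly equal color classes and suffices for the statement as written.
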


\begin{proof}
  We reduce from \textsc{Dominating Set}, which is NP-hard for bipartite planar sub-cubic graphs with girth at least 12 that contain some degree-2 vertices~\cite{Zverovich}. We reduce any instance $(G, k)$ of \textsc{Dominating Set} to an instance $((H, c), k')$ of \PBSEP, where $k' = k + 1$ and the number of red and blue vertices in $c$ differ by at most 2.

\begin{figure}[!htpb]
    \centering
    \includegraphics[width = .8 \textwidth]{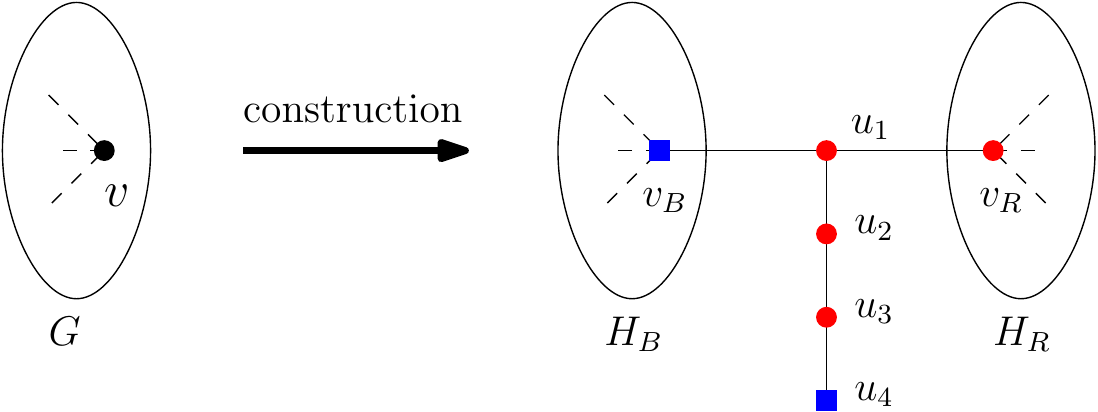}
    \caption{Reduction from \textsc{Dominating Set} to \PBSEP{} of Theorem~\ref{thm:red-DS}. Vertices $v_B$ and $u_4$ are blue, the others are red.}
    \label{fig:rbsreductionequal}
\end{figure}

\paragraph{Construction.} We create two disjoint copies of $G$ namely $H_B$ and $H_R$ and color all vertices of $H_B$ blue and all vertices of $H_R$ red. Select an arbitrary vertex $v$ of degree 2 in $G$  (we may assume such a vertex exists in $G$ by the reduction of \cite{Zverovich}) and look at its corresponding vertices $v_R \in V(H_R)$ and $v_B \in V(H_B)$. We connect $v_R$ and $v_B$ with the head of the path $u_1, u_2, u_3, u_4$ as shown in Figure~\ref{fig:rbsreductionequal}. The tail of the path, i.e. the vertex $u_4$, is colored blue and the remaining three vertices $u_1, u_2$ and $u_3$ are colored red. Our final graph $H$ is the union of $H_R, H_B$ and the path $u_1, u_2, u_3, u_4$ and the coloring $c$ as described. Note that if $G$ is a connected bipartite planar sub-cubic graph of girth at least $g$, then so is $H$ (since $v$ was selected as a vertex of degree 2). We make the following claim. %(see the appendix for a proof).

\begin{claim}\label{clm1}
The instance $(G, k)$ is a YES-instance of \textsc{Dominating Set} if and only if $\sepRB(H, c) \leq k' = k + 1$.
\end{claim}
\begin{claimproof}
$(\implies)$ Let $D$ be a dominating set of $G$ of size $k$. We construct a red-blue separating set $S$ of $(H, c)$ of size at most $k + 1$ as follows. For each vertex in $D$, include its corresponding vertex in $H_R$ in $S$. Also include in $S$, the vertex $u_2 \in V(H)$. Observe that all blue vertices have the empty code and all red vertices have some non-empty code. Therefore $S$ is a separating set of $(H, c)$ of size at most $k + 1$.

$(\impliedby)$ Consider a red-blue separating set $S$ of $(H, c)$ of size $k'$. Let us assume that the red vertices of $(H, c)$ did not get the empty code. The argument when the blue vertices do not get the empty code is similar. Then, the set $S$ also dominates $V(H_R) \cup \{u_1, u_2, u_3\}$. Since $u_3$ can only be dominated by a vertex $x$ in $\{u_2, u_3, u_4\}$, the vertices in $V(H_R)$ are dominated by $S' = S \setminus \{x\}$. If $S' = S' \cap V(H_G)$ then the set $D$ formed by choosing the corresponding vertices of $S'$ in $G$ is a dominating set of size $k = k' - 1$. Otherwise, $S' \setminus \{u_1\} = S' \cap V(H_R)$, and the set $D$ formed by choosing the corresponding vertices of $S' \setminus \{u_1\}$ in $G$ is a dominating set of $G$ of size $k = k' -1$.\end{claimproof}
This completes the proof of the theorem.
\end{proof}

In the previous reduction, we could choose any class of instances for which \textsc{Dominating Set} is known to be NP-hard. We could also simply take two copies of the original graph and obtain a coloring with two equal color class sizes (but then we obtain a disconnected instance). In contrast, in the geometric setting, the problem is fixed-parameter-tractable when parameterized by the size of the smallest color class~\cite{BGL19}, and by the solution size~\cite{KMMPS21}. It is also 2-approximable~\cite{CDKW05}.

%\todo[inline]{RK: It seems to me that we should either keep the construction from Theorem~4 or from Theorem~5, but probably not both.}

We can also provide another reduction, as follows.

\begin{theorem}
    \PBSEP{} is NP-hard even when the input is a subcubic planar bipartite graph of girth at least 12 and the size of the smallest color class is $k + 1$.
\end{theorem}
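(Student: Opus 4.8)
The plan is to reduce from \textsc{Dominating Set} restricted to connected subcubic planar bipartite graphs of girth at least $12$ having a degree-$2$ vertex, which is NP-hard by~\cite{Zverovich} and was already exploited in the proof of Theorem~\ref{thm:red-DS}. Write $\gamma(G)$ for the size of a smallest dominating set of $G$. The one observation that makes everything work is that an \emph{isolated} vertex is trivially easy to separate: if such a vertex $x$ lies in the solution $S$ its code is the singleton $\{x\}$, which no other vertex can possibly have, and if $x\notin S$ its code is $\emptyset$. So, given a \textsc{Dominating Set} instance $(G,k)$ with $G$ connected, subcubic, planar, bipartite, of girth at least $12$, and having some degree-$2$ vertex — and assuming $k\le n-2$, since otherwise $\gamma(G)\le n-1\le k$ and we may output a fixed positive instance — I would build $(H,c)$ by taking $G$, colouring all of $V(G)$ blue, and adding $k+1$ new isolated vertices $r_1,\dots,r_{k+1}$ coloured red; the output is $((H,c),k)$. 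Here $H$ is still subcubic, planar, bipartite and of girth at least $12$, and its smaller colour class is exactly $\{r_1,\dots,r_{k+1}\}$, of size $k+1$. Before anything else I would check that $H$ is a legal (i.e.\ twin-free) instance of \PBSEP{}: since $G$ is connected, of girth at least $12$ and not $K_2$, it has no two vertices with equal closed neighbourhood, and the added isolated vertices have pairwise distinct closed neighbourhoods that cannot coincide with that of any vertex of $G$.

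Next I would prove that $\sepRB(H,c)=\min(k+1,\gamma(G))$. For the upper bound, the set $\{r_1,\dots,r_{k+1}\}$ is red-blue separating, because each red vertex then receives its private singleton code while every blue vertex (a vertex of $G$) has a code contained in $V(G)$; and, on the other hand, any dominating set $D$ of $G$ is red-blue separating, because it gives every blue vertex a nonempty code while every red vertex, being isolated and outside $N[D]$, keeps the empty code. For the lower bound, let $S$ be a red-blue separating set with $|S|\le k$; then $S$ cannot contain all $k+1$ red vertices, so some $r_{i_0}$ has the empty code, which forces every blue vertex to have a nonempty code, i.e.\ $S\cap V(G)$ dominates $G$, whence $|S|\ge\gamma(G)$. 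Combining the two directions yields the claimed value.

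From $\sepRB(H,c)=\min(k+1,\gamma(G))$ the equivalence is immediate: if $\gamma(G)\le k$ then $\sepRB(H,c)=\gamma(G)\le k$, and if $\gamma(G)\ge k+1$ then $\sepRB(H,c)=k+1>k$. Hence $(G,k)$ is a positive instance of \textsc{Dominating Set} if and only if $\sepRB(H,c)\le k$, and since the construction is polynomial, \PBSEP{} is NP-hard on subcubic planar bipartite graphs of girth at least $12$ whose smaller colour class has size $k+1$.

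I do not expect a real obstacle in this argument; it is essentially \textsc{Dominating Set} padded with $k+1$ isolated vertices of the minority colour. The only points needing a little care are the twin-freeness check (handled by the girth and connectivity hypotheses on $G$) and the disposal of the degenerate range $k\ge n-1$ so that the red class is genuinely the smaller one. If one wanted a \emph{connected} target instance, an alternative would be to attach each $r_i$ via a sufficiently long induced path (of length a multiple of the required girth) to a fixed degree-$2$ vertex of $G$ and argue that an optimal separating set never needs to use a vertex inside these pendant paths; but for the statement as given the disconnected construction above is enough.
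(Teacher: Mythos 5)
Your construction is essentially identical to the paper's: the paper also reduces from \textsc{Dominating Set} on subcubic planar bipartite graphs of girth at least $12$, builds $H$ from a blue copy of $G$ plus an independent set of $k+1$ isolated red vertices, and argues that with at most $k$ vertices some red vertex keeps the empty code, forcing the solution restricted to $V(G)$ to be a dominating set. Your additional care about twin-freeness, the degenerate range of $k$, and the exact value $\sepRB(H,c)=\min(k+1,\gamma(G))$ goes slightly beyond what the paper writes, but the argument is the same.
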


\begin{proof}
  We reduce from \textsc{Dominating Set}.

\textsc{Dominating Set} is NP-hard when the input graph is a subcubic planar bipartite graph of girth at least 12~\cite{Zverovich}. We reduce an instance $(G, k)$ of the \textsc{Dominating Set} to an instance $((H, c_{k + 1}), k)$ of \PBSEP{} where $c_{k + 1}$ is a coloring of $H$ with the minimum color class being of size at most $k + 1$.

\begin{figure}[h]
  \centering
  \includegraphics[width = .65 \textwidth]{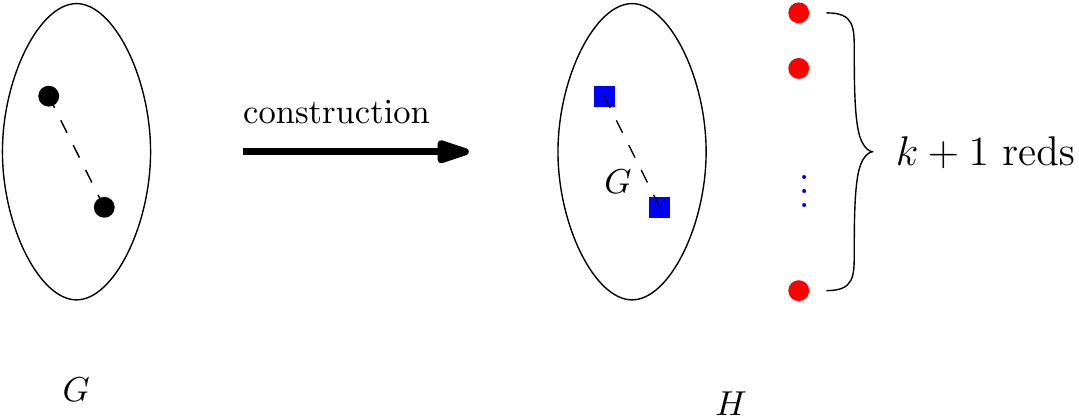}
  \caption{Construction from an instance of \textsc{Dominating Set} to an instance of \PBSEP{} where size of the smaller color class is bounded. The square vertices are blue, the round ones are red.}
  \label{fig:rbsreduction1}
\end{figure}

\paragraph{Construction.} Without loss of generality let us assume the smaller sized color class to be red. For $H$ we create a copy $G_H$ of $G$ with all its vertices colored blue and an independent set $I$ of size $k + 1$ all its vertices colored red vertices. We now make the following claim.

%\todo[inline]{RK: It seems to me that $k$ should not be part of the construction. This could e.g.~be achieved by replacing the independent set of $k+1$ red vertices with an independent set of $n$ red vertices.}

\begin{claim}
  $(G, k)$ is a YES-instance of \textsc{Dominating Set} if and only if $((H, c_{k + 1}), k)$ is a YES-instance of \PBSEP{}.
\end{claim}

\begin{claimproof}
($\implies$) Let $D$ be a dominating set of $G$. We construct a red-blue separating set of $(H, c_{k + 1})$ by selecting the corresponding vertices of $D$ in $G_H$. Since $D$ is a dominating set of $G$, all blue vertices receive a non-empty code. The red vertices on the other hand receive the empty code and we have a valid separating set.

  ($\impliedby$) Let $(H, c_{k + 1})$ have a separating set $C$ of size at most $k$. Since there are $k + 1$ independent red vertices, there will be at least one red vertex which receives the empty code. So, in order for $C$ to be a valid separating set, all blue vertices must receive some non-empty code. This implies that $C \cap V(G_H)$ is a dominating set of $G_H$ of size at most $k$ and which implies a dominating set of $G$ of size at most $k$.
  \end{claimproof}
  
  This completes the proof.
\end{proof}

\subsection{Positive algorithmic results}

We start with a reduction to \textsc{Set Cover} implying an approximation algorithm.
\begin{figure}[h]
  \centering
  \includegraphics[width = .65 \textwidth]{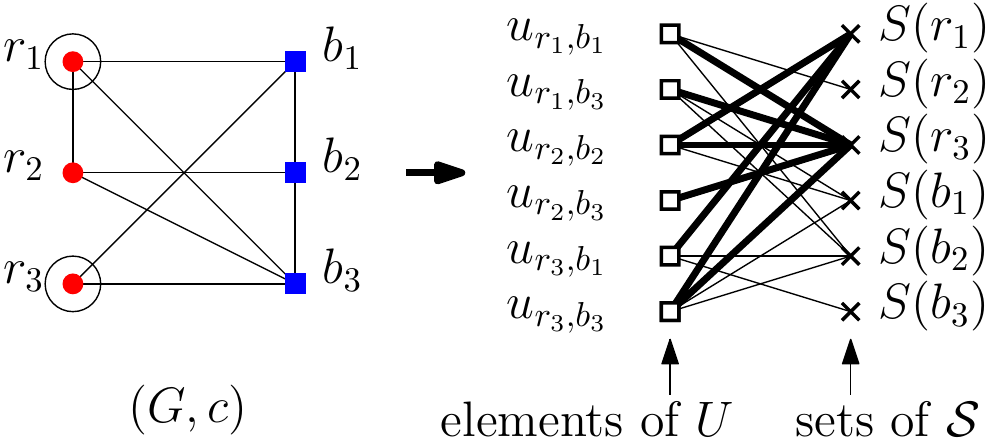}
  \caption{A reduction instance from \PBSEP{} of $(G, c)$ (where the square vertices are blue, the round ones are red) to \textsc{Set Cover} of $(U, \mathcal S)$. The separating set in the colored graph $(G, c)$ and the corresponding set cover in the set system $(U, {\cal S})$ has been highlighted.}
  \label{fig:lognAppx}
\end{figure}

\begin{proposition}\label{rbsep-approx}
  \PBSEP{} has a polynomial-time $(2 \ln n)$-factor approximation algorithm, where $n$ is the input graph's order.
  %\todo[inline]{Improve the factor by 2.}
\end{proposition}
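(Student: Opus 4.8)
The plan is to encode \PBSEP{} as an instance of \textsc{Set Cover} so that a solution of size $t$ on one side corresponds to a solution of size roughly $2t$ on the other, and then invoke the classical $\ln(\cdot)$-approximation for \textsc{Set Cover}. First I would set up the ground set $U$ of the \textsc{Set Cover} instance to be the set of red-blue pairs $\{r,b\}$ that need to be separated (there are at most $n^2/4$ of them, which is fine since the approximation factor depends on $|U|$ only logarithmically, giving $\ln(n^2/4) \leq 2\ln n$). For each vertex $s \in V(G)$, I would introduce a corresponding set in $\mathcal{S}$ consisting of exactly those pairs $\{r,b\}$ that $s$ separates, i.e.\ the pairs for which $N[s]$ contains exactly one of $r,b$. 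By definition, a subset $S \subseteq V(G)$ is a red-blue separating set of $(G,c)$ if and only if the corresponding subfamily of $\mathcal{S}$ covers $U$. Hence $\sepRB(G,c)$ equals the optimum of this \textsc{Set Cover} instance exactly, and the construction is clearly polynomial-time (Figure~\ref{fig:lognAppx} illustrates this correspondence).

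The second step is to run the greedy algorithm for \textsc{Set Cover}, which returns a cover of size at most $(\ln |U| + 1)$ times the optimum, or more precisely within a factor $H(|U|) \le 1 + \ln|U|$ of optimum; translating back through the bijection above yields a red-blue separating set of size at most $(1+\ln|U|)\cdot \sepRB(G,c)$. Since $|U| \le \binom{n}{2} < n^2$, we get $1 + \ln|U| < 1 + 2\ln n$, and a small amount of care (e.g.\ handling the trivial case $n$ small, or noting $|U| \le n^2/4$) brings this cleanly under $2\ln n$ for $n$ large enough; for the finitely many remaining small cases one can solve the instance by brute force in constant time. This gives the claimed polynomial-time $(2\ln n)$-factor approximation.

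The only mild subtlety — and the step I would be most careful about — is the bookkeeping on the approximation constant: the \textsc{Set Cover} guarantee is naturally phrased in terms of $\ln |U|$ or the harmonic number $H(d)$ of the maximum set size, and one must make sure the additive $+1$ and the passage from $\binom{n}{2}$ to $n^2$ do not push the bound above $2\ln n$. This is routine — using $|U|\le n^2/4$ one has $1+\ln(n^2/4) = 1 + 2\ln n - 2\ln 2 = 2\ln n + (1 - 2\ln 2) < 2\ln n$ since $2\ln 2 > 1$ — so no genuine obstacle arises; the essential content is simply the exact correspondence between separating sets and set covers established in the first step.
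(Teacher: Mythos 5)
Your proposal is correct and follows essentially the same route as the paper: the identical reduction to \textsc{Set Cover} with ground set the red-blue pairs and one set per vertex, followed by the greedy $(\ln|U|+1)$-approximation and the bound $\ln(n^2/4)+1\leq 2\ln n$. Your extra care with the constant ($1-2\ln 2<0$) only makes explicit the arithmetic the paper leaves implicit.
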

\begin{proof}
  We reduce \PBSEP{} to \textsc{Set Cover}. Let $((G, c), k)$ be an input instance of \PBSEP. We reduce it to an instance $((U, \mathcal S), k)$ of \textsc{Set Cover}. For each red-blue vertex pair $(r, b)$ in $G$, create an element in $U$. For each vertex $v$ in $G$ create a set in $\mathcal S$ with elements $(r, b)$ in $U$ such that $v$ is in the closed neighborhood of exactly one of $r$ and $b$ in $G$. See Figure \ref{fig:lognAppx}. Observe that a set cover $\mathcal C$ of size $k$ corresponds to a separating set $S$ of size at most $k$ and vice versa. The greedy algorithm for \textsc{Set Cover} has an approximation factor of $\ln |U| + 1$. Since, in our case $|U| \leq n^2 / 4$, the resulting approximation factor for \PBSEP{} is at most $\ln (n^2 / 4) + 1 \leq 2 \ln n$.
  % \begin{claim}
  %   $(G, c)$ has a red-blue separation set of size $k$ if and only if $(U, \mathcal S)$ has a set cover of size at most $k$.
  % \end{claim}
  % \begin{proof}
  %   ($\implies$) Let $S \subseteq V(G)$ be a red-blue separating set of size $k$. For all red-blue pair $(r, b)$ of vertices in $(G, c)$, since $r$ and $b$ are separated by $S$, there exists $v \in S$ such that $v$ is adjacent either to $r$ or to $b$ but not to both. Pick the corresponding element of $v$ in $\mathcal S$ in the set cover $\mathcal C$. Since all reb-blue vertex pairs in $G$ are separated by $S$
  % \end{proof}
\end{proof}

%\begin{proposition}\label{rbsep-approx}
%  \PBSEP{} has a polynomial time $(3 \ln(\Delta(G)) + 1)$-factor approximation algorithm.
%\end{proposition}
%\begin{proof}
%transfer the proof from \cite{GKM08} for identifying codes \todo{work out the details}
%\end{proof}

\begin{proposition}\label{prop:triangle-free}
Let $(G,c)$ be a red-blue colored triangle-free and twin-free graph with the two color classes $R$ and $B$. Then, $\sepRB(G,c)\leq 3\min\{|R|,|B|\}$.
\end{proposition}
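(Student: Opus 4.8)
The plan is to show that for each vertex in the smaller color class, a small number of cleverly chosen vertices suffice to separate it from all vertices of the other color. Assume without loss of generality that $|R|\le|B|$, so we want a separating set of size at most $3|R|$. The key observation is the following: if $S$ is a set of vertices such that every red vertex has a nonempty code with respect to $S$ while every blue vertex still might have any code, that is not quite enough — we need the red codes to actually differ from the blue codes. So instead I would aim for the stronger property that every red vertex $r$ gets a code $N[r]\cap S$ that \emph{contains $r$ itself or a private-type witness}, in a way that no blue vertex can reproduce.

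Here is the concrete construction I would try. For each red vertex $r\in R$, put $r$ itself into $S$. This already ensures $r\in N[r]\cap S$. The only way a blue vertex $b$ can have the same code as $r$ is if $b\in N[r]$ (so that $r\in N[b]\cap S$) and moreover $N[b]\cap S$ and $N[r]\cap S$ agree on all other red vertices. Since the graph is triangle-free, the neighborhood $N(r)$ is an independent set, so $r$ and any neighbor $b$ of $r$ have no common neighbor other than... actually in a triangle-free graph $r$ and $b$ adjacent have \emph{no} common neighbor at all. This is the crucial structural fact: adjacency in a triangle-free graph forces disjoint open neighborhoods (apart from the two endpoints themselves). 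So to separate $r$ from an adjacent blue $b$, it suffices to add to $S$ one vertex in $N[r]\triangle N[b]$; because the graph is twin-free, such a vertex exists, and by triangle-freeness we can in fact pick it to lie in $(N(r)\cup\{r\})\setminus N[b]$ or symmetrically — I would pick, for each red $r$, at most one "left witness" and at most one "right witness" to kill all of its adjacent blue vertices simultaneously, using the fact that all blue neighbors of $r$ have pairwise-structured neighborhoods.

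The steps, in order: (1) Fix $R$ as the smaller class; initialize $S=R$. (2) Observe that with $S\supseteq R$, any unseparated red-blue pair $(r,b)$ must have $r\sim b$ and $N(b)\cap R = N(r)\cap R \setminus\{?\}$ matching appropriately; deduce using triangle-freeness that the set of blue vertices not yet separated from a given $r$ is $B_r\subseteq N(b)$ with controlled structure. (3) For each $r$, show two extra vertices suffice: since $G$ is twin-free, for each bad blue neighbor $b$ of $r$ there is a vertex in $N[r]\triangle N[b]$, and I claim that across all bad $b$ associated to $r$ these symmetric differences can be hit by at most $2$ vertices — one on the "$r$ side", one covering the rest — using that two distinct blue neighbors $b,b'$ of $r$ are non-adjacent (triangle-free) hence already separated by $b\in N[b]\cap S$, so actually \emph{each $r$ has at most one problematic configuration}, requiring essentially one or two more vertices. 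Adding these gives $|S|\le 3|R|$.

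The main obstacle I anticipate is step (3): making the counting tight, i.e. proving that the "repair" cost per red vertex is at most $2$ rather than proportional to its degree. The resolution should come entirely from triangle-freeness forcing the bad blue vertices around a single red vertex to be mutually non-adjacent and to have neighborhoods that overlap $R$ in a very restricted pattern, so that a single well-chosen vertex (often a neighbor of $r$, or $r$'s unique "twin-breaking" vertex) separates $r$ from all of them at once, with one spare in the worst case. A secondary subtlety is handling red vertices that are isolated or have all-red neighborhoods, but those are separated for free once $S\supseteq R$, so they cost only the $1$ already counted. Assembling these bounds yields $\sepRB(G,c)\le 3\min\{|R|,|B|\}$.
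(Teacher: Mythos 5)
Your overall plan --- start from $S=R$, observe that a blue vertex can then share a code with a red vertex $r$ only if it is a neighbour of $r$, and argue that a constant number of additional vertices per red vertex finishes the job --- is exactly the paper's. The gap is that you never actually prove your step (3), and the justification you sketch for it does not work. The assertion that two distinct blue neighbours $b,b'$ of $r$ are ``already separated by $b\in N[b]\cap S$'' is vacuous at that stage: blue vertices have not been placed in $S$, and blue--blue pairs never needed separating in the first place. Consequently the conclusion you draw from it, that ``each $r$ has at most one problematic configuration'', is unjustified and in fact false: a red vertex $r$ may have arbitrarily many blue neighbours, and after setting $S=R$ each such neighbour $b$ has $r\in N[b]\cap S$, so nothing need yet distinguish $N[b]\cap S$ from $N[r]\cap S$ (for instance when $r$ has no red neighbour and $r$ is the only red neighbour of each such $b$). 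So the per-red-vertex cost of $2$ is exactly the point you leave open, and you say so yourself.

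The missing observation, which is what the paper uses, is short. If $r$ has at least two neighbours, add \emph{any} two of them, $n_1$ and $n_2$, to $S$. Triangle-freeness makes $N(r)$ an independent set, so a blue neighbour $b$ of $r$ satisfies $N[b]\cap\{n_1,n_2\}\subseteq\{b\}$ and hence has at most one of $n_1,n_2$ in its code, whereas $r$ has both; this separates $r$ from \emph{all} of its blue neighbours simultaneously, at cost $2$. If $r$ has a unique neighbour $w$ and $w$ is blue, then twin-freeness gives $w$ a neighbour $z\neq r$, and triangle-freeness guarantees $z\notin N[r]$, so adding $z$ separates $r$ from $w$ at cost $1$. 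This yields $|S|\leq 3|R|$ cleanly; the ``left witness / right witness'' and symmetric-difference hitting machinery you propose is not needed, and without the independence-of-$N(r)$ argument it does not by itself bound the repair cost by a constant.
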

\begin{proof}
%\todo{to-do}
 Without loss of generality, we assume $|R| \leq |B|$. We construct a red-blue separating set $S$ of $(G,c)$. First, we add all red vertices to $S$. It remains to separate every red vertex from its blue neighbors. If a red vertex $v$ has at least two neighbors, we add (any) two such neighbors to $S$. Since $G$ is triangle-free, no blue neighbor of $v$ is in the closed neighborhood of both these neighbors of $v$, and thus $v$ is separated from all its neighbors (see vertex $v_1$ in Figure \ref{fig:triangleFree}). If $v$ had only one neighbor $w$, and it was blue, then we separate $w$ from $v$ by adding one arbitrary neighbor of $w$ (other than $v$) to $S$. Since $G$ is triangle-free, $v$ and $w$ are separated (see vertex $v_2$ in Figure \ref{fig:triangleFree}). Thus, we have built a red-blue separating set $S$ of size at most $3|R|$.\end{proof}

\begin{figure}[h]
  \centering
  \includegraphics[width = .3 \textwidth]{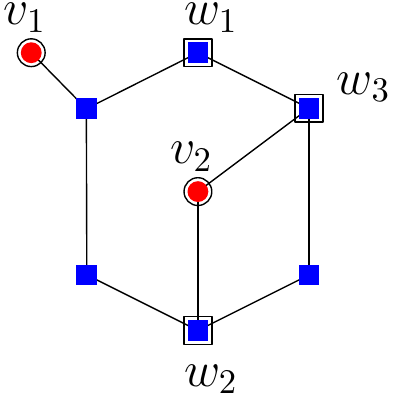}
  \caption{Illustration of Proposition \ref{prop:triangle-free}: here $v_1$ has just one blue neighbor hence $w_1$ is added in $S$. $v_2$'s neighbors $w_2$ and $w_3$ are also included in $S$. The square vertices are blue, the round ones are red.}
  \label{fig:triangleFree}
\end{figure}
% If a red vertex $v_i$ has only one neighbor, then it can be either red or blue, say $v'_i$. If $v'_i$ is red then we include $v_i$ in $S$ and the red vertices $v_i$ and $v'_i$ are separated from all the vertices in the graph. If $v'_i$ is of blue color then we include $v_i$ as well as a neighbor of $v'_i$ in $S$ in order to separate $v_i$ from all the vertices in the graph. Now in a triangle-free graph, we choose such a red vertex $v_i$ which has more than one neighbors all of whose colors are not red. Note that, if $v_i$ has only red neighbors then we include $v_i$ in $S$ and all the neighbors of $v_i$ are separated from the other blue vertices. Let $v_{i_1}$ and $v_{i_2}$ be any two neighbors of $v_i$.  If $v_{i_1}$ and $v_{i_2}$ are included in $S$ then the vertex $v_i$ is separated from all its neighbors. However, some blue vertex $v'_i$ may also be a neighbor to both $v_{i_1}$ and $v_{i_2}$. Thus, in order to separate $v_i$ from $v'_i$ we include $v_i$ in $S$. Therefore, for each of the $r$ vertices at most $3$ vertices are needed to separate from all vertices in the graph.
%\end{proof}

\begin{figure}
  \centering
  \includegraphics[width = .5 \textwidth]{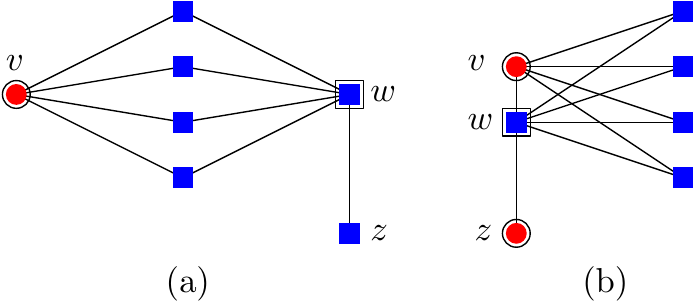}
  \caption{Illustration of Proposition \ref{prop:bounded-degree}: the two cases when the vertices $v$ and $w$ are (a) not adjacent and (b) adjacent. The square vertices are blue, the round ones are red.}
  \label{fig:boundedAlgo}
\end{figure}

\begin{proposition}\label{prop:bounded-degree}
Let $(G,c)$ be a red-blue colored twin-free graph with maximum degree~$\Delta\geq 3$. Then, $\sepRB(G,c)\leq\Delta\min\{|R|,|B|\}$.%\todo[inline]{TL: I think we can get bound of $\sepRB(G,c)\leq\Delta\cdot\min\{|R|,|B|\}$. We choose every vertex in $N(v_i)$. $N[v_i]$ cannot be clique. Hence, if $v_i$ is not separated from a blue vertex, there exists blue $v_k$ with $N(v_k)=N(v_i)$. But now we need only $v_i$ and $v_k$ to separate $v_i$ from everything. If there is a problem when $\deg(v_i)<\Delta$, then we can work like here in those cases. I think this bound could be improved further with some extra work. Cannot be improved beyond $\sepRB(G,c)\leq(\Delta-1)\cdot\min\{|R|,|B|\}$.}
\end{proposition}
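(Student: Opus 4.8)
The plan is to mimic the proof of Proposition~\ref{prop:triangle-free}, replacing the triangle-freeness input (which there keeps a vertex and two of its neighbors from forming a triangle) by a careful local construction. Assume $|R|\le|B|$. For each red vertex $v$ I will build a set $L_v$ of at most $\Delta$ vertices that separates $v$ from \emph{every} blue vertex of $G$. Then $S:=\bigcup_{v\in R}L_v$ is a red-blue separating set of size at most $\Delta|R|=\Delta\min\{|R|,|B|\}$, since any red-blue pair $(r,b)$ is separated by a vertex of $L_r\subseteq S$; exchanging the roles of the colors yields the same bound with $|B|$, hence with $\min\{|R|,|B|\}$.

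Two observations guide the construction of $L_v$. First, $v$ by itself separates $v$ from every blue vertex $b'$ that is \emph{not} a neighbor of $v$, because then $v\in N[v]\setminus N[b']$. Second, if $b$ is a blue neighbor of $v$, then twin-freeness gives $N[v]\neq N[b]$, hence $N[v]\triangle N[b]\neq\emptyset$, and since $v,b\in N[v]\cap N[b]$ every vertex of $N[v]\triangle N[b]$ is distinct from $v$ and $b$ and so genuinely separates $v$ from $b$. Let $t$ be the number of blue neighbors of $v$, so $t\le\deg(v)\le\Delta$. If $t\le\Delta-1$, I set $L_v=\{v\}\cup\{w_b: b\text{ a blue neighbor of }v\}$, where $w_b$ is any vertex of $N[v]\triangle N[b]$; then $|L_v|\le 1+t\le\Delta$, the $w_b$'s separate $v$ from its blue neighbors, and $v$ separates it from all remaining blue vertices.

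The delicate case, and the main obstacle, is $t=\Delta$: then $\deg(v)=\Delta$ and all neighbors $b_1,\dots,b_\Delta$ of $v$ are blue, so $N[v]=\{v,b_1,\dots,b_\Delta\}$, and there is no room to keep both $v$ and one private separator for each $b_i$. Here twin-freeness is used once more: no $b_i$ can be adjacent to all of the other $b_j$'s, since that would force $\deg(b_i)=\Delta$ and hence $N[b_i]=N[v]$, a contradiction; so each $b_i$ has a non-neighbor $b_j$ with $b_j\in N[v]\setminus N[b_i]$, and therefore the $\Delta$-element set $\{b_1,\dots,b_\Delta\}$ already separates $v$ from all of its neighbors. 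It then remains to cover the blue non-neighbors of $v$ within the same budget of $\Delta$; I plan to show that either one $b_j$ is redundant for separating the neighbors (so it can be swapped for $v$, which handles all non-neighbors), or else the graph induced on $\{b_1,\dots,b_\Delta\}$ is forced to be so dense --- essentially $K_\Delta$ minus a perfect matching, with no $b_i$ having an outside neighbor --- that a direct degree count forbids any blue non-neighbor of $v$ from being adjacent to all of $b_1,\dots,b_\Delta$, whence $\{b_1,\dots,b_\Delta\}$ separates $v$ from every blue vertex. This last bookkeeping in the case $t=\Delta$, where both alternatives lean on the degree bound $\Delta$ and on the hypothesis $\Delta\ge 3$, is the only real difficulty; the case $t\le\Delta-1$ and the assembly of the $L_v$'s into $S$ are routine.
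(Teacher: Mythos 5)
Your overall architecture --- a local set $L_v$ of at most $\Delta$ vertices per red vertex, whose union gives the bound $\Delta\min\{|R|,|B|\}$ --- is the same as the paper's, and your case $t\le\Delta-1$ is correct and complete, as is the observation that in the case $t=\Delta$ twin-freeness forces each $b_i$ to have a non-neighbour among the other $b_j$'s. The genuine gap is in the case $t=\Delta$, which you only sketch and which is indeed where all the difficulty sits. The dichotomy you propose there --- either some $b_j$ is redundant and can be swapped for $v$, or $G[N(v)]$ is $K_\Delta$ minus a perfect matching with no $b_i$ having a neighbour outside $N[v]$ --- does not exhaust the possibilities. Concretely, take $\Delta=4$, let $N(v)=\{b_1,b_2,b_3,b_4\}$ induce $K_4$ minus the matching $\{b_1b_2,b_3b_4\}$, and add a blue vertex $u$ adjacent to all four $b_i$ but not to $v$, with every vertex except $v$ coloured blue. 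This graph is twin-free of maximum degree~$4$. Here no single $b_j$ is droppable when separating $v$ from its neighbours inside $N[v]$ (the only separator of $v$ and $b_i$ lying in $N[v]$ is its matching partner), every $b_i$ does have an outside neighbour (namely $u$), and $u$ is a blue non-neighbour of $v$ adjacent to all of $b_1,\dots,b_4$, so $\{b_1,\dots,b_4\}$ fails to separate $v$ from $u$; indeed $N[v]\triangle N[u]=\{v,u\}$, so only $v$ or $u$ themselves can separate that pair. Neither branch of your plan applies to this configuration.

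The missing idea is the case distinction the paper's proof is organized around: ask whether some blue vertex $w$ satisfies $N(v)\subseteq N[w]$. In your case $t=\Delta$ this resolves everything. If such a $w$ exists, it cannot be adjacent to $v$ --- otherwise $\deg(w)\ge\Delta$ would force $N[w]=N[v]$, contradicting twin-freeness --- hence $w\in N[b]\setminus N[v]$ for every neighbour $b$ of $v$, and $L_v=\{v,w\}$ already works ($v$ handles all blue non-neighbours, $w$ handles all blue neighbours); this is exactly what rescues the example above, with $w=u$. If no such $w$ exists, then $L_v=N(v)$ works, since every blue vertex, neighbour or not, misses some vertex of $N(v)$ in its closed neighbourhood. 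With this additional split your argument closes; without it, the bookkeeping you describe for $t=\Delta$ does not cover all configurations. (For what it is worth, the paper itself performs this split for \emph{every} red vertex rather than only in the all-blue-neighbourhood case, and is itself rather terse in the sub-case where the dominating blue vertex $w$ is adjacent to $v$; but in your framework that sub-case is vacuous, so the split stated above is all you need.)
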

\begin{proof}
Without loss of generality, let us assume $|R| \leq |B|$. We construct a red-blue separating set $S$ of $(G,c)$. Let $v$ be any red vertex. If there is a blue vertex $w$ whose closed neighborhood contains all neighbors of $v$ ($w$ could be a neighbor of $v$), we add both $v$ and $w$ to $S$ (see Figure \ref{fig:boundedAlgo}(a)). If $v$ is adjacent to $w$, since they cannot be twins, there must be a vertex $z$ that can separate $v$ and $w$; we add $z$ to $S$ (see Figure \ref{fig:boundedAlgo}(b)). Now, $v$ is separated from every blue vertex in $G$.

If such a vertex $w$ does not exist, then we add all neighbors of $v$ to $S$. Now again, $v$ is separated from every vertex of $G$. Thus, we have built a red-blue separating set $S$ of size at most $\Delta|R|$.
%Without loss of generality, let us assume $|R| \leq |B|$ and $|R| = r$. Let $S$ denote the red-blue separating set of $(G,c)$. Let us consider a vertex $v_i$ of color red. If we include $v_i$ in $S$ then the vertex $v_i$ is separated from all other vertices except its neighbors. We now have to separate $v_i$ from all its neighbors. Since the degree of $v_i$ can be at most $\Delta$ and there are no red-blue twins in the graph, each neighbor $v_j$ of $v_i$ will have at least one vertex $v_k$ which is not a neighbor of $v_i$ but a neigbor of $v_j$\todo{TL: If, for example, $G$ is a star, and $v_i$ is the center vertex, then $v_j$ does not have any other neighbors. Something extra is needed here.}. For a vertex $v_i$, we might have to select at most $\Delta$ such vertices. For $v_i$ to be separated from all the vertices in the graph, at most $(1+\Delta)$ vertices might be needed. Thus for $r$ red vertices, we will require at most $(r + \Delta\cdot r)$ i.e. $(\Delta+1)\cdot r$ vertices for the graph to be separated.
\end{proof}

The previous propositions imply that \PBSEP{} can be solved in XP time for the parameter ''size of a smallest color class'' on triangle-free graphs and on graphs of bounded degree (by a brute-force search algorithm). This is in contrast with the fact that in general graphs, it remains hard even when the smallest color class has size~1 by Theorem~\ref{thm:split-hard}.

\begin{theorem} 
  \PBSEP{} on graphs whose vertices belong to the color classes $R$ and $B$ can be solved in time $O(n^{3\min\{|R|,|B|\}})$ on triangle-free graphs and in time $O(n^{\Delta\min\{|R|,|B|\}})$ on graphs of maximum degree $\Delta$.
\end{theorem}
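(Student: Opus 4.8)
The plan is to combine a brute-force search over small vertex subsets with the a priori size bounds on an optimal solution supplied by Propositions~\ref{prop:triangle-free} and~\ref{prop:bounded-degree}. Put $t=\min\{|R|,|B|\}$. On a triangle-free twin-free graph we know $\sepRB(G,c)\le 3t$, and on a twin-free graph of maximum degree $\Delta\ge 3$ we know $\sepRB(G,c)\le \Delta t$ (the case $\Delta\le 2$ being trivial, as twin-free graphs of maximum degree at most $2$ are short paths); write $b$ for this bound ($b=3t$ or $b=\Delta t$ according to the case). Thus a smallest red-blue separating set has size at most $b$, and to answer the decision question it suffices to decide whether $\sepRB(G,c)\le\min\{k,b\}$.

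First I would enumerate every subset $S\subseteq V(G)$ with $|S|\le b$; there are $\sum_{i=0}^{b}\binom{n}{i}=O(n^{b})$ of them. For each candidate $S$ I would test in polynomial time whether it is a red-blue separating set: compute the code $N[x]\cap S$ of each vertex $x$ in time $O(n|S|)$, and then check that no red vertex and blue vertex share a code, for instance by sorting the codes (viewed as bit-vectors) together with their colours and scanning for a monochromatic violation. The algorithm answers YES iff some enumerated set of size at most $\min\{k,b\}$ passes this test; equivalently one scans sizes $0,1,\dots,b$ and thereby computes $\sepRB(G,c)$ exactly. Correctness is immediate: every red-blue separating set of size $\le\min\{k,b\}$ appears among the enumerated candidates, and conversely any candidate of that size that passes the test witnesses $\sepRB(G,c)\le k$; and by the two propositions the search range $[0,b]$ is guaranteed to contain the optimum.

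The running time is $O(n^{b})$ candidates times a polynomial verification cost, which yields the stated bounds $O(n^{3\min\{|R|,|B|\}})$ on triangle-free graphs and $O(n^{\Delta\min\{|R|,|B|\}})$ on graphs of maximum degree $\Delta$, the polynomial verification factor being absorbed into the exponent. I do not expect any real obstacle here: all the substantive work has already been done in Propositions~\ref{prop:triangle-free} and~\ref{prop:bounded-degree}, and the only point that needs care is that those propositions bound the optimum $\sepRB(G,c)$ itself — not merely the size of one specific construction — so that truncating the brute-force search at size $b$ cannot miss an optimal solution.
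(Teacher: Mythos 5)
Your proposal is correct and matches the paper's intended argument exactly: the paper gives no separate proof, stating only that the theorem follows from Propositions~\ref{prop:triangle-free} and~\ref{prop:bounded-degree} by a brute-force search over all vertex subsets of size at most the respective bound, which is precisely what you do. Your additional remarks (handling $\Delta\le 2$, and the observation that the propositions bound the optimum itself so truncating the search is safe) are sound and only make the argument more complete.
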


%This contrasts with the NP-hardness on split graphs when $\min\{|R|,|B|\}=1$ and with the W[2]-hardness for parameter $\min\{|R|,|B|\}$.

\section{Extremal values and bounds for $\maxsepRB$}\label{sec:bounds}

We denote by $\sep(G)$ the smallest size of a (non-colored) separating set of $G$, that is, a set that separates \emph{all} pairs of vertices. We will use the relation $\maxsepRB(G)\leq \sep(G)$, which clearly holds for every twin-free graph $G$.

\subsection{Lower bounds for general graphs}

We can have a large twin-free colored graph with solution size~2 (for example, in a large blue path with a single red vertex, two vertices suffice). We show that in every twin-free graph, there is always a coloring that requires a large solution.

\begin{theorem}\label{thm:LB}
For any twin-free graph $G$ of order $n\geq 1$ and $n\not\in \{8,9,16,17\}$, we have $\maxsepRB(G)\geq \lfloor \log_2(n)\rfloor$.
\end{theorem}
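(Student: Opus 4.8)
The plan is to produce, for any twin-free graph $G$ on $n$ vertices, a red-blue coloring $c$ for which every red-blue separating set has size at least $\lfloor\log_2 n\rfloor$. The natural idea is a counting/pigeonhole argument: if $S$ is a red-blue separating set of size $s$, then the map $x\mapsto N[x]\cap S$ assigns to each vertex a subset of $S$, i.e. one of $2^s$ possible codes. The separating condition does not force all codes to be distinct — only that no code is shared by a red and a blue vertex — so a code may be reused, but only within a single color class. Hence if we can find a coloring in which the codes are forced to ``spread out'' across both color classes, a small $S$ cannot work. Concretely, I would first observe that $S$ partitions $V(G)$ into code classes $V_1,\dots,V_t$ (vertices with equal code), $t\le 2^s$, and the separating requirement is exactly that each $V_i$ is monochromatic. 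So $\sepRB(G,c)\le s$ for \emph{some} $s$-set $S$ iff there is an $S$ of size $s$ whose code partition refines the color partition; equivalently, $\maxsepRB(G)\le s$ iff \emph{every} 2-coloring is refined by the code partition of some $s$-set. To get a lower bound I want a coloring that is refined by \emph{no} small $S$.

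The key step is therefore: for a fixed small $s$, show that the number of distinct code partitions realizable by $s$-subsets $S$ is too small to cover all $2^n/2$ (roughly) relevant 2-colorings, OR — more constructively — build one bad coloring directly. I expect the cleaner route is the direct/extremal one. Since $G$ is twin-free, the full vertex set separates all pairs, and more usefully one can extract, for increasing $j$, a vertex $x_j$ and a set of $j$ ``test'' vertices realizing $2^j$ distinct traces on some $j$-subset is too strong; instead I would use the weaker fact that in a twin-free graph one can greedily build a sequence that distinguishes vertices and track how color classes must split. A robust approach: suppose $S$ has size $s$ with $2^s<n$; since the code partition has at most $2^s<n$ parts, by pigeonhole some part $V_i$ has $\ge 2$ vertices, and we are free — when choosing the adversarial coloring — to color any such part non-monochromatically \emph{unless} it is a singleton in \emph{every} candidate $S$. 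So the real content is: one must show there is a single coloring that is bad for \emph{all} $S$ simultaneously, which is where the probabilistic method enters. I would pick $c$ uniformly at random and estimate, for a fixed $S$ of size $s=\lfloor\log_2 n\rfloor-1$, the probability that $c$ is refined by $S$'s code partition: this is $\prod_i 2^{1-|V_i|} = 2^{t-n}$ where $t\le 2^s\le n/2$, so the probability is at most $2^{n/2-n}=2^{-n/2}$. Union-bounding over all $\binom{n}{s}\le n^s = n^{\lfloor\log_2 n\rfloor}\le 2^{(\log_2 n)^2}$ choices of $S$, we need $2^{(\log_2 n)^2}\cdot 2^{-n/2}<1$, i.e. $(\log_2 n)^2<n/2$, which holds for all sufficiently large $n$; the finitely many small exceptions are exactly why the statement excludes $n\in\{8,9,16,17\}$ (and presumably a direct check dispatches the remaining small $n$).

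The main obstacle will be the bookkeeping at small $n$: the probabilistic bound only bites once $(\log_2 n)^2 < n/2$ comfortably, so one must verify by hand (or by a sharper deterministic argument) that the bound $\maxsepRB(G)\ge\lfloor\log_2 n\rfloor$ still holds for all small twin-free graphs except the four listed exceptions — and, conversely, exhibit twin-free graphs on $8,9,16,17$ vertices where it genuinely fails, to justify the exclusion. A secondary subtlety is tightening the union bound: $\binom{n}{s}$ with $s\approx\log_2 n$ is only quasi-polynomial, so the crude estimate $2^{-n/2}$ per set is exactly what makes it go through — any looser per-set bound (e.g. if $t$ could be as large as $2^s$ with many singletons) would fail, so I would argue that the worst case for the adversary is when the code partition has as few non-singleton parts as possible, and bound $\sum_i(|V_i|-1)=n-t\ge n-2^s$ carefully. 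I would also double-check the edge case where $S$ itself forces singletons (e.g. $S=V(G)$) is irrelevant here since $|S|=s\ll n$. Modulo these small-case checks, the probabilistic argument gives the bound cleanly for all large $n$, and I expect the authors' proof to follow essentially this outline, perhaps replacing the random coloring by an explicit entropy-type or greedy construction to handle the intermediate range of $n$ without case analysis.
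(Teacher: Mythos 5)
Your proposal is correct and is essentially the paper's own argument: the probabilistic union bound (probability $2^{t-n}$ per candidate set $S$, summed over $\binom{n}{s}\le n^s$ sets) is exactly the paper's counting inequality $2^n\le\binom{n}{k}2^{2^k}$, i.e.\ $n\le k\log_2 n+2^k$, with $k=\lfloor\log_2 n\rfloor-1$ and a finite computational check for small $n$ (which is where the exceptions $n\in\{8,9,16,17\}$ arise --- because the counting bound fails there, not because counterexample graphs are known).
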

\begin{proof}
Let $G$ be a twin-free graph of order $n$ with $\maxsepRB(G)=k$. %By considering that swapping the colors or a red-blue coloring of $G$ yields an equivalent coloring, there are $2^{n-1}$ different red-blue colorings of $G$.
There are $2^{n}$ different red-blue colorings of $G$. For each such coloring $c$, we have $\sepRB(G,c)\leq k$. Consider the set of vertex subsets of $G$ which are separating sets of size $k$ for some red-blue colorings of $G$. Notice that each red-blue coloring has a separating set of cardinality $k$. There are at most ${n\choose k}\leq n^k$ such sets.%\todo{justify (or improve) this bound: ref?}\todo{TL: $\leq (en/k)^k$ also holds}

Consider such a separating set $S$ and consider the set $I(S)$ of subsets $S'$ of $S$ for which there exists a vertex $v$ of $G$ with  $N[v]\cap S=S'$. Let $i_S$ be the number of these subsets: we have $i_S\leq 2^{|S|}\leq 2^k$. If $S$ is a separating set for $(G,c)$, then all vertices having the same intersection between their closed neighborhood and $S$ must receive the same color by $c$. Thus, there are at most $2^{i_S}\leq 2^{2^k}$ red-blue colorings of $G$ for which $S$ is a separating set. %Overall, we thus have $2^n \leq \binom{n}{k} 2^{2^k}\leq n^k 2^{2^k}$, and thus $n\leq k\log_2(n)+2^k$.
Hence, we have

%\fullversion{
$$
\begin{array}{rcl}
2^n & \leq & \binom{n}{k} 2^{2^k}\text{, which implies}\\
2^n & \leq & n^k 2^{2^k}\text{, which implies}\\
n & \leq &k\log_2(n)+2^k
\end{array}
$$
%}

We now claim that this implies that $k\geq \log_2(n-\log_2(n)\log_2(n))$. Suppose to the contrary that this is not the case, then we would obtain:\vspace{-3mm}

$$
\begin{array}{rcl}
n & < & \log_2(n-\log_2(n)\log_2(n))\log_2(n) + n-\log_2(n)\log_2(n)\\
n & < & \log_2(n)\log_2(n) + n-\log_2(n)\log_2(n)
\end{array}
$$

And thus $n<n$, a contradiction. Since $k$ is an integer, we actually have $k\geq\lceil\log_2(n-\log_2(n)\log_2(n))\rceil$. To conclude, one can check that whenever $n\geq 70$, %\todo{TL: also when $48\leq n\leq 63$} 
we have $\lceil\log_2(n-\log_2(n)\log_2(n))\rceil\geq \lfloor\log_2(n)\rfloor$. Moreover, if we compute values for $2^n-\binom{n}{k} 2^{2^k}$ when $1\leq n\leq 69$ and $k=\lfloor\log_2(n)\rfloor-1$, then we observe that this is negative only when $n\in\{8,9,16,17\}$. Thus, $\lfloor\log_2(n)\rfloor$ is a lower bound for $\maxsepRB(G)$ as long as $n\not\in \{8,9,16,17\}$.
\end{proof}

The bound of Theorem~\ref{thm:LB} is tight for infinitely many values of $n$.

\begin{proposition}\label{pro:2kexample}
For any integers $k\geq 1$ and $n=2^k$, there exists a graph $G$ of order $n$ with $\maxsepRB(G)=k$.
\end{proposition}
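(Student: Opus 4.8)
The plan is to construct, for each $k \geq 1$ and $n = 2^k$, an explicit graph $G$ on $n$ vertices together with an argument that (a) $\maxsepRB(G) \leq k$ for every coloring, and (b) some coloring needs a separating set of size at least $k$. Since Theorem~\ref{thm:LB} already gives $\maxsepRB(G) \geq \lfloor \log_2 n \rfloor = k$ for all $n$ outside the exceptional set $\{8,9,16,17\}$, the real content is the matching \emph{upper bound} $\maxsepRB(G) \leq k$, and we must handle the exceptional cases $n \in \{8,16\}$ (i.e.\ $k=3,4$) separately or choose a graph for which the lower bound can be checked directly.

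The natural candidate is a graph whose closed neighborhoods realize, on a fixed set $S$ of $k$ vertices, \emph{all} $2^k$ distinct subsets of $S$ — that way $S$ itself is a separating set for every coloring, since distinct codes force separation regardless of colors. Concretely, I would take $S = \{s_1, \dots, s_k\}$ and, for each subset $T \subseteq \{1,\dots,k\}$, introduce a vertex $v_T$ whose code with respect to $S$ is exactly $\{s_i : i \in T\}$; this uses exactly $2^k = n$ vertices if we are careful to let the $s_i$ play double duty as some of the $v_T$'s (the vertex $s_i$ has $s_i \in N[s_i]$, so its own code contains $i$). A clean way to force the adjacencies is to make $G[S]$ a clique (so every $s_j \in N[s_i]$ once we also put the non-$S$ vertices adjacent appropriately) or, more simply, build a bipartite-like incidence structure: for each $T$, vertex $v_T$ is adjacent to $s_i$ iff $i \in T$, and we check twin-freeness holds (all closed neighborhoods distinct) and that $|V(G)| = 2^k$ after identifying the $k$ "standard basis plus..." vertices with the $s_i$'s. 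One must verify the closed-neighborhood bookkeeping carefully so that no two vertices end up with equal closed neighborhoods in $G$ (not just equal traces on $S$), possibly by adding a sparse set of extra edges among the $v_T$'s or ordering them so that a vertex of degree-type $T$ is distinguished from one of type $T'$.

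For the lower bound on this specific $G$: either invoke Theorem~\ref{thm:LB} directly (valid when $n \notin \{8,9,16,17\}$), or give a direct argument that works for all $k$. A direct argument: color $v_T$ red if $|T|$ is even and blue if $|T|$ is odd (or some other balanced rule); then argue that any separating set $S'$ of size $< k$ cannot distinguish all red–blue pairs because $S'$ induces at most $2^{|S'|} < 2^k = n$ codes, so by pigeonhole two vertices $v_T, v_{T'}$ of opposite colors share a code — but this needs the coloring to be chosen \emph{after} knowing that codes collide, i.e.\ one shows that for \emph{every} set $S'$ with $|S'| \leq k-1$ there exist two vertices with equal code, and then picks a coloring making some such pair bichromatic; since the number of candidate sets $S'$ is finite one can even pick a single coloring defeating all of them via a counting/probabilistic argument mirroring the proof of Theorem~\ref{thm:LB}. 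The cleanest route is probably: show $\maxsepRB(G) \leq k$ by the explicit set $S$, and $\maxsepRB(G) \geq k$ by adapting the counting in Theorem~\ref{thm:LB} but now noting equality is forced precisely because our $G$ attains $i_S = 2^k$ and $\binom{n}{k}$ is as large as possible relative to $n = 2^k$ — for the four exceptional small cases $k \in \{3,4\}$ one checks by hand (or by a small computation) that this particular $G$ still requires $k$.

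The main obstacle I anticipate is the twin-freeness and exact vertex count: we need a graph on \emph{exactly} $2^k$ vertices that is twin-free (so that $\maxsepRB$ is even well-defined) while still admitting a $k$-set on which all $2^k$ traces appear. Realizing all $2^k$ subsets as traces of closed neighborhoods naively wants $2^k$ "outside" vertices plus the $k$ vertices of $S$, which overshoots; the fix of folding the $s_i$ into the family of $v_T$'s is what makes the count work, but it requires checking that the closed neighborhood of $s_i$ (which includes $s_i$ and possibly other $s_j$'s, depending on $G[S]$) is consistent with being "the $v_T$ for the right $T$," and that after all identifications no accidental closed-neighborhood twins are created. A convenient concrete model that I would try first: let $V(G) = \{0,1\}^k$, make $G[S]$ a clique on the $k$ unit vectors $e_1,\dots,e_k$, and join $v \in \{0,1\}^k$ to $e_i$ iff the $i$-th coordinate of $v$ is $1$ — then $N[v] \cap S$ records exactly $v$, giving all $2^k$ traces, and a short case check (using $k \geq 1$, with tiny cases $k=1,2$ inspected directly) should confirm twin-freeness and hence $\maxsepRB(G) = k$.
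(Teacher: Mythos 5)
Your overall strategy (a $k$-set $S$ on which all $2^k$ closed-neighborhood traces are realized, giving $\maxsepRB(G)\leq\sep(G)\leq k$, plus a matching lower bound) is the same as the paper's, but two of your concrete steps fail. First, the model you commit to at the end --- $V(G)=\{0,1\}^k$ with $G[S]$ a \emph{clique} on the unit vectors and $v\sim e_i$ iff $v_i=1$ --- does not work: with the clique, $N[e_j]\cap S=S$ for every $j$, which coincides with $N[\mathbf{1}]\cap S=S$, so $S$ fails to separate $e_j$ from $\mathbf{1}$ (and for $k=2$ the vertices $e_1,e_2$ are actual twins, so the graph is not even admissible). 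The upper-bound argument only goes through in your \emph{edgeless-on-$S$} incidence variant, where $N[v]\cap S$ is exactly the support of $v$; you offer both options but the one you pick is the broken one. The paper's construction avoids this by making the non-$S$ vertices (one per subset of size $\geq 2$, plus one isolated vertex) a clique $T$ and keeping $S$ out of it, so that $N[v_{S'}]\cap S=S'$ exactly.

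Second, and more importantly, your lower bound is not established. You rely either on Theorem~\ref{thm:LB}, which is explicitly stated to fail for $n\in\{8,16\}$ (i.e.\ precisely $k=3,4$), or on a counting argument of the form $\binom{n}{k-1}2^{2^{k-1}}<2^n$, which one can check is false for $n=2^k$ when $k\in\{2,3,4\}$; you defer these cases to an unperformed ``check by hand or small computation.'' This is the crux of the proposition, and a generic count does not close it. The paper instead gives a direct structural argument tailored to its graph: for $k\geq 3$, color $v_S$ blue and everything else red; because all the subset-vertices lie in the clique $T$, one gets $N[v_S]\triangle N[v_{S'}]=\{s_i\}$ for each $S'=S\setminus\{s_i\}$ of size $k-1$, so every $s_i$ is forced into any separating set, yielding $\maxsepRB(G)\geq k$ (with ad hoc colorings for $k=1,2$). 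Your graph lacks the clique among the $v_T$'s, so the symmetric differences $N[v_{[k]}]\triangle N[v_{[k]\setminus\{i\}}]$ contain three vertices rather than one, and this forcing argument does not transfer; you would need a genuinely different (and currently missing) lower-bound proof for your graph.
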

\begin{proof}
We build $G$ as follows. Let $S=\{s_1,\ldots,s_k\}$ be a set of $k$ vertices. Let $T$ be another set of vertices (disjoint from $S$). For every subset $S'$ of $S$ of size at least~$2$, we add a vertex $v_{S'}$ to $T$ and we join it to all vertices of $S'$ and $T$. The set $T$ induces a clique. Finally, we add an isolated vertex $v_\emptyset$ to $G$. To see that $\maxsepRB(G)\leq k$, notice that $S$ is a separating set of $G$ (regardless of the coloring).

If $k=1$, the coloring with $s_1$ Red and $v_\emptyset$ Blue shows that $\maxsepRB(G)\geq 1$.

If $k=2$, color $s_1$ and $s_2$ Red and $v_{\{1,2\}}$ and $v_\emptyset$ Blue. To separate $v_{\{1,2\}}$ from $s_1$ (respectively $s_2$), $s_2$ must belong to any separating set (respectively $s_1$), which shows that $\maxsepRB(G)\geq 2$.

If $k\geq 3$, color $v_{S}$ Blue and the other vertices Red. For each subset $S'$ of $S$ with $|S'|=k-1$, in order to separate $v_{S}$ from $v_{S'}$, any separating set needs to contain $s_i$ where $\{s_i\}=S\setminus S'$. This shows that $\maxsepRB(G)\geq k$.
%
%If $k\geq 4$, we color each vertex $v_{S'}$ with $|S'|=2$ Red and each vertex $v_{S'}$ with $|S'|=3$ Blue. All other vertices are colored arbitrarily. Let $s_i$ be a vertex of $S$ (say, without loss of generality, $i=1$). To separate $v_{\{1,2,3\}}$ from $v_{\{2,3\}}$, $s_1$ needs to belong to any separating set. By a similar argument, each vertex of $S$ belongs to any separating set, implying that $\maxsepRB(G)\geq k$.
\end{proof}

We next relate parameter $\maxsepRB$ to other graph parameters. %We denote by $\sep(G)$ the smallest size of a separating set in graph  $G$.

\begin{theorem}\label{ratio}
Let $G$ be a graph on $n$ vertices. Then, $\sep(G)\leq\min\{\lceil\log_2(n)\rceil\cdot\maxsepRB(G),\lceil\log_2(\Delta(G)+1)\rceil\cdot\maxsepRB(G)+\gamma(G)\}$, where $\gamma(G)$ is the domination number of $G$ and $\Delta(G)$ its maximum degree.

\end{theorem}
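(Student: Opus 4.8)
The plan is to establish the two upper bounds on $\sep(G)$ separately, each time building an explicit (non-colored) separating set of $G$ out of red-blue separating sets. The guiding idea is that a single red-blue coloring only separates red from blue pairs, so to separate \emph{all} pairs we should cleverly iterate over several colorings and take the union of the resulting solutions; since each solution has size at most $\maxsepRB(G)$, we just need to control how many colorings are needed.

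\textbf{First bound: $\sep(G)\leq\lceil\log_2 n\rceil\cdot\maxsepRB(G)$.} Fix an arbitrary injective labeling $\ell\colon V(G)\to\{0,1,\dots,n-1\}$ and write each label in binary using $t:=\lceil\log_2 n\rceil$ bits. For $i=1,\dots,t$, let $c_i$ be the coloring that colors $v$ red if the $i$-th bit of $\ell(v)$ is $1$ and blue otherwise. Let $S_i$ be a red-blue separating set of $(G,c_i)$ with $|S_i|\leq\maxsepRB(G)$, and put $S=\bigcup_{i=1}^t S_i$. I claim $S$ separates every pair $u\neq v$: since $\ell(u)\neq\ell(v)$, their binary expansions differ in some bit $i$, so $u$ and $v$ get different colors under $c_i$, hence they are separated by some vertex of $S_i\subseteq S$. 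Thus $|S|\leq t\cdot\maxsepRB(G)$, as desired.

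\textbf{Second bound: $\sep(G)\leq\lceil\log_2(\Delta+1)\rceil\cdot\maxsepRB(G)+\gamma(G)$.} Here I would first take a minimum dominating set $D$ of $G$, of size $\gamma(G)$, and include it in our separating set; this ensures every vertex has nonempty code, so it only remains to separate pairs $u,v$ with $N[u]\cap D=N[v]\cap D$, and in particular such $u,v$ both lie in $N[w]$ for a common $w\in D$. The key observation is that any vertex $w$ dominates at most $\Delta+1$ vertices, so within each ``cluster'' $N[w]$ it suffices to distinguish at most $\Delta+1$ vertices; proceed as in the first bound but with only $\lceil\log_2(\Delta+1)\rceil$ bits. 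Concretely, for each $w\in D$ fix an injective labeling of $N[w]$ into $\{0,\dots,\Delta\}$; one has to be a little careful because a vertex can lie in several $N[w]$'s and thus receive several labels — but we only need, for each unseparated pair $u,v$, \emph{some} common dominator $w$ in whose neighborhood they get different labels, which holds automatically since they are distinct vertices of $N[w]$. For bit $i\in\{1,\dots,\lceil\log_2(\Delta+1)\rceil\}$ define a coloring $c_i$ (globally, making an arbitrary choice of dominator/label for vertices lying in several clusters won't work directly — so instead I would argue cluster by cluster, or equivalently observe it suffices that the bit-$i$ colorings, as $w$ ranges over $D$, collectively separate the pair; taking one separating set per bit still suffices because the relevant pair is monochromatic-free in at least one $c_i$). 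Take $S_i$ a red-blue separating set for $c_i$ of size $\leq\maxsepRB(G)$, and output $D\cup\bigcup_i S_i$.

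\textbf{Main obstacle.} The first bound is routine. The delicate point in the second bound is the labeling-consistency issue just flagged: a vertex belonging to several closed neighborhoods $N[w_1],\dots,N[w_m]$ cannot simultaneously realize an arbitrary label in each, so one must phrase the colorings $c_i$ so that, for every pair $u,v$ that $D$ fails to separate, there is still an index $i$ making $u,v$ bichromatic. The clean way is: the pair $u,v$ shares a common dominator $w$; inside $N[w]$ they have distinct labels, differing in some bit $i$; then it is enough that $c_i$ assigns $u$ and $v$ different colors, which we can guarantee by defining $c_i$ via (say) the label of each vertex in the first dominator in a fixed order — then $u,v$ still differ at bit $i$ provided that first common-order dominator is $w$, which may fail. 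The correct and standard fix is to not merge clusters but to take, for each $w\in D$ separately, $\lceil\log_2(\Delta+1)\rceil$ colorings restricted appropriately; however since colorings are global, the honest argument iterates over all $|D|\cdot\lceil\log_2(\Delta+1)\rceil$ such colorings — giving the weaker bound $\gamma(G)+\gamma(G)\lceil\log_2(\Delta+1)\rceil\maxsepRB(G)$. To recover the stated bound one exploits that the \emph{same} bit index $i$ can be handled by a single coloring across all clusters as long as we are willing to lose nothing: color $v$ by the $i$-th bit of its label in \emph{every} cluster it belongs to is ill-defined, so the resolution the authors presumably use is that one only needs $\lceil\log_2(\Delta+1)\rceil$ colorings total because distinctness of $u$ and $v$ inside the shared $N[w]$ forces their labels to differ, and one picks the labelings greedily (e.g. order $V(G)$ and give $v$ the label equal to its rank among the already-labeled vertices of its lowest-indexed dominator) so that consistency is maintained — verifying this greedy choice works is the one genuinely fiddly step, and it is where I would spend the most care.
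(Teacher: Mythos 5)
Your first bound is correct and is essentially the paper's own argument: assign distinct $\lceil\log_2 n\rceil$-bit labels to the vertices, use one coloring per bit position, and take the union of the corresponding red-blue separating sets.

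For the second bound, however, you have correctly identified a difficulty and then left it unresolved; as written this is a genuine gap. The issue disappears once you choose the right clustering. Do not cluster the vertices by ``membership in $N[w]$ for a fixed $w\in D$'' (which is not a partition, hence your labeling-consistency problem); instead, partition $V(G)$ into the equivalence classes of the relation $u\sim v \iff N[u]\cap D=N[v]\cap D$. Since $D$ is dominating, each class has a nonempty common code, so all its members lie in $N[w]$ for any $w$ in that code, and therefore each class has at most $\Delta(G)+1$ elements. Because these classes partition $V(G)$, every vertex receives exactly one label when you injectively label each class with $\lceil\log_2(\Delta(G)+1)\rceil$-bit words, so the colorings $c_i$ (color $v$ by the $i$-th bit of its label) are globally well defined with no greedy choice or consistency check needed. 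Then $D$ separates any pair lying in different classes, and for a pair in the same class the labels differ in some bit $i$, so $S_i$ separates it; hence $D\cup\bigcup_i S_i$ is a separating set of size at most $\gamma(G)+\lceil\log_2(\Delta(G)+1)\rceil\cdot\maxsepRB(G)$. This is the intended reading of the paper's (terse) proof, which speaks of ``vertices having the same closed neighborhood in $D$'' --- i.e.\ exactly these equivalence classes --- and your fallback bound with an extra factor of $\gamma(G)$ is not necessary.
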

\begin{proof}
Let $G$ be a graph on $2^{k-1}+1\leq n\leq 2^k$ vertices for some integer $k$. We denote each vertex by a different $k$-length binary word $x_1x_2\cdots x_k$ where each $x_i\in\{0,1\}$. Moreover, we give $k$ different red-blue colorings $c_1,\dots, c_k$ such that vertex $x_1x_2\cdots x_k$ is red in coloring $c_i$ if and only if $x_i=0$ and blue otherwise. For each $i$, let $S_i$ be an optimal red-blue separating set of $(G,c_i)$. We have $|S_i|\leq \maxsepRB(G)$ for each $i$. Let $S=\bigcup_{i=1}^k S_i$. Now, $|S|\leq k\cdot \maxsepRB(G)=\lceil\log_2(n)\rceil \cdot\maxsepRB(G)$. We claim that $S$ is a separating set of $G$. Assume to the contrary that for two vertices $x=x_1x_2\cdots x_k$ and $y=y_1y_2\cdots y_k$, $N[x]\cap S=N[y]\cap S$. For some $i$, we have $y_i\neq x_i$. Thus, in coloring $c_i$, vertices $x$ and $y$ have different colors and hence, there is a vertex $s\in c_i$ such that $s\in N[y]\triangle N[x]$, a contradiction which proves the first bound.% Thus, $S$ is a separating set and the claim follows.

Let $S$ be an optimal red-blue separating set for such a coloring $c$ and let $D$ be a minimum-size dominating set in $G$; $S\cup D$ is also a red-blue separating set for coloring $c$. At most $\Delta(G)+1$ vertices of $G$ may have the same closed neighborhood in $D$. Thus, we may again choose $\lceil\log_2(\Delta(G)+1)\rceil$ colorings and optimal separating sets for these colorings, each coloring (roughly) halving the number of vertices having the same vertices in the intersection of separating set and their closed neighborhoods. Since each of these sets has size at most $\maxsepRB(G)$, we get the second bound.
\end{proof}

We do not know whether the previous bound is reached, but as seen next, there are graphs $G$ such that $\sep(G)=2\maxsepRB(G)$.

\begin{proposition}\label{prop:complete-partite}
Let $G=K_{k_1,\dots,k_t}$ be a complete $t$-partite graph for $t\geq2$, $k_i\geq5$ odd for each $i$. Then $\sep(G)=n-t$ and $\maxsepRB(G)=(n-t)/2$.
\end{proposition}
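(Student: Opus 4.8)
I want to show two equalities for $G=K_{k_1,\dots,k_t}$: the uncolored separation number $\sep(G)=n-t$, and $\maxsepRB(G)=(n-t)/2$. Note $n-t=\sum_i(k_i-1)$ is even since each $k_i$ is odd, so $(n-t)/2$ is an integer. The structure of $G$ is very rigid: two vertices in the same part $V_i$ have the same open neighborhood, and their closed neighborhoods differ only in themselves; two vertices in different parts are adjacent and have closed neighborhood equal to all of $V(G)$ minus the vertices of the other part's remaining members. So the only hard pairs to separate are pairs within a single part $V_i$: to separate $u,v\in V_i$, a separating set $S$ must contain $u$ or $v$ (since $N[u]\triangle N[v]=\{u,v\}$). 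For pairs in distinct parts $V_i,V_j$, a vertex $s$ separates them iff $s\in (V_i\cup V_j)\setminus\{u,v\}$ essentially — actually any $s$ outside $V_i\cup V_j$ is adjacent to both, and $s\in V_i\setminus\{u\}$ is adjacent to $v$ but not $u$, so such pairs are easy.

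\textbf{Step 1: $\sep(G)=n-t$.} Within each part $V_i$, the vertices are pairwise twins except for themselves, so any separating set must contain at least $k_i-1$ of the $k_i$ vertices of $V_i$ (it can omit at most one). Hence $\sep(G)\ge\sum_i(k_i-1)=n-t$. Conversely, taking $S$ to be all vertices except one chosen vertex from each part gives $|S|=n-t$; I check this $S$ separates everything: within-part pairs are separated since at most one vertex per part is missing, and cross-part pairs are separated as noted above (pick a surviving vertex in one of the two parts, which exists since $k_i\ge5$). So $\sep(G)=n-t$.

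\textbf{Step 2: $\maxsepRB(G)\le(n-t)/2$.} Fix any coloring $c$. In each part $V_i$ with $r_i$ red and $b_i$ blue vertices ($r_i+b_i=k_i$), I only need to separate red-blue pairs. Cross-part red-blue pairs: a red $u\in V_i$, blue $v\in V_j$, $i\neq j$ — again separated by any surviving vertex of $V_i$ or $V_j$ in $S$, as long as $S$ meets $(V_i\cup V_j)\setminus\{u,v\}$; I should arrange $S$ to hit every part in at least one vertex, or handle this carefully. Within-part red-blue pairs in $V_i$: I must pick, for each such pair $\{u,v\}$, one of $u,v$ into $S$; the minimum way is a vertex cover of the complete bipartite "red-blue" graph on $V_i$, which has size $\min\{r_i,b_i\}$. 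So $\sepRB(G,c)\le\sum_i\min\{r_i,b_i\}$ plus possibly a small additive term for cross-part pairs — but I'll argue that choosing the smaller color class in each part as the selected set already ensures $S$ meets every part (unless some part is monochromatic, in which case I add one vertex, but then I can absorb it elsewhere; this bookkeeping is the delicate point). Since $\min\{r_i,b_i\}\le(k_i-1)/2$ (as $k_i$ is odd, $r_i\neq b_i$), summing gives $\sum_i\min\{r_i,b_i\}\le\sum_i(k_i-1)/2=(n-t)/2$. The main obstacle is making the cross-part separation argument clean without blowing the budget — I expect the right statement is that $S=\bigcup_i(\text{smaller color class of }V_i)$ works verbatim whenever at least two parts are non-monochromatic, and the monochromatic corner cases are handled by a direct ad hoc argument using $k_i\ge5$.

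\textbf{Step 3: $\maxsepRB(G)\ge(n-t)/2$.} I exhibit a coloring forcing this. In each part $V_i$, color $(k_i-1)/2$ vertices one color and $(k_i+1)/2$ the other — but I need to choose the colors so that within-part pairs are genuinely forced and there's no cheaper cross-part trick. Color each part so that it has $(k_i-1)/2$ red and $(k_i+1)/2$ blue (say). Then in $V_i$ there are $(k_i-1)/2\cdot(k_i+1)/2$ red-blue pairs forming a complete bipartite graph, whose minimum vertex cover has size $(k_i-1)/2$, and every separating set restricted to $V_i$ must be such a vertex cover (here I use that no vertex outside $V_i$ can separate a within-$V_i$ pair, since $N[u]\triangle N[v]=\{u,v\}$ for $u,v\in V_i$). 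Hence any red-blue separating set has size at least $\sum_i(k_i-1)/2=(n-t)/2$, giving $\maxsepRB(G)\ge(n-t)/2$. Combined with Step 2, $\maxsepRB(G)=(n-t)/2$. The one thing to double-check is that this forcing coloring is consistent — it is, since the lower bound only needs the within-part structure and ignores cross-part pairs entirely. I anticipate the write-up's main effort is the cross-part accounting in Step 2; everything else is the rigid twin structure of complete multipartite graphs.
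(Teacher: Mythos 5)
Your Steps 1 and 3 are correct and essentially identical to the paper's argument: within a part $V_i$ the relation $N[u]\triangle N[v]=\{u,v\}$ forces at least $k_i-1$ vertices of $V_i$ into any (uncolored) separating set, giving $\sep(G)=n-t$; and the coloring with $(k_i-1)/2$ vertices of one color and $(k_i+1)/2$ of the other in each part forces a vertex cover of the within-part red-blue pairs, of size $(k_i-1)/2$ per part, giving $\maxsepRB(G)\geq (n-t)/2$.

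The gap is in Step 2, exactly at the point you flag as delicate, and your proposed patch does not close it. You correctly record that a cross-part pair $u\in V_i$ (red), $v\in V_j$ (blue) is separated if and only if $S$ meets $(V_i\setminus\{u\})\cup(V_j\setminus\{v\})$, but you then weaken this to ``$S$ meets every part'', which is not sufficient: if $S\cap V_i=\{u\}$ and $S\cap V_j=\{v\}$, then $N[u]\cap S=N[v]\cap S=S$ and the pair is not separated. Consequently your claimed ``right statement'' --- that the union of the minority color classes works verbatim whenever at least two parts are non-monochromatic --- is false. Concretely, take $t=2$, $k_1=k_2=5$, with $V_1$ having one red vertex $a$ and four blue vertices, and $V_2$ having one blue vertex $b$ and four red vertices; both parts are non-monochromatic, the minority-class set is $S=\{a,b\}$, and the red vertex $a$ and the blue vertex $b$ receive the identical code $S$. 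The correct fix, which is what the paper does, is to require that $S$ contain at least \emph{two} vertices from every part (padding with majority-color vertices whenever the minority class has fewer than two); then for any cross-part pair some vertex of $S\cap(V_i\setminus\{u\})$ exists and separates it, the set $V_i\setminus S$ remains monochromatic so within-part pairs are fine, and the budget is unaffected because $k_i\geq 5$ gives $(k_i-1)/2\geq 2$. With that single correction your Step 2 goes through and the whole proof matches the paper's.
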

\begin{proof}
Let us have $t$ parts $G_1,\dots, G_t$ with vertex sets $V(G_i)=\{v^i_1,\dots,v^i_{k_i}\}$. Let $c$ be such a coloring in $G$ that $\sepRB(G,c)=\maxsepRB(G)$. Observe that $S$ is a separating set in $G$ if and only if $|S\cap V(G_i)|\geq k_i-1$. Indeed, if we have $v^i_j,v^i_h\not\in S$, then $N[v^i_j]\cap S=N[v^i_h]\cap S$ since $N(v^i_j)=N(v^i_h)$. Moreover, each vertex in $S$ is separated from vertices not in $S$ and each vertex in $V(G_i)$ is separated from vertices in $V(G_j)$. Thus, $\sep(G)=n-t$.

Next we form a red-blue separating set $S'$ for coloring $c$. For each part, we choose to set $S'$ every vertex which has the less common color in that part (however, we choose at least two vertices to $S'$ from each part). Observe that $|S'|\leq\sum_{i=1}^t\frac{k_i-1}{2}=(n-t)/2$. Moreover as above, we can see  that $S'$ is a red-blue separating set with the help of fact that if $v^i_j\not \in S'$ and $v^i_h\not\in S'$, then $v^i_j$ and $v^i_h$ have the same color.

Finally, we show that $\maxsepRB(G)\geq(n-t)/2$. Observe that if we have $u,v\in V(G_i)$ and $u,v\not\in S$, then $N[u]\cap S=N[v]\cap S$. Thus, any two vertices $u,v\in V(G_i)\setminus S$ must have the same color. If each part has $(k_i+1)/2$ red vertices and $(k_i-1)/2$ blue vertices, then we require at least $\sum_{i=1}^t(k_1-1)/2=(n-t)/2$ vertices in $S$.\end{proof}

\subsection{Upper bound for general graphs}

We will use the following classic theorem in combinatorics to show that we can always spare one vertex in the solution of \PBMAXSEP.

\begin{theorem}[Bondy's Theorem \cite{B72}]
Let $V$ be an $n$-set with a family $\mathcal{A}=\{\mathcal{A}_1,\mathcal{A}_2,\ldots,\mathcal{A}_n\}$ of $n$ distinct subsets of $V$. There is an $(n-1)$-subset $X$ of $V$ such that the sets $\mathcal{A}_1\cap X, \mathcal{A}_2\cap X, \mathcal{A}_3\cap X,\ldots, \mathcal{A}_n\cap X$ are still distinct.
\end{theorem}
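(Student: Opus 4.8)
The plan is to argue by contradiction, converting a hypothetical failure of \emph{every} $(n-1)$-subset into a cycle with distinctly labelled edges in an auxiliary graph, which leads to an impossible identity involving symmetric differences.

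First I would set up the contradiction hypothesis: assume that for every $x\in V$, the set $X=V\setminus\{x\}$ fails, i.e.\ there are indices $i\neq j$ with $\mathcal{A}_i\cap X=\mathcal{A}_j\cap X$. Since the $\mathcal{A}$'s are distinct, this is equivalent to $\mathcal{A}_i\triangle\mathcal{A}_j=\{x\}$. I would then fix, for each $x\in V$, one such pair, and build the graph $H$ on vertex set $\{1,\dots,n\}$ (one vertex per set $\mathcal{A}_i$) whose edges are the chosen pairs $e_x$, $x\in V$. The key observation at this stage is that the endpoints of $e_x$ determine $x$ (namely $x=\mathcal{A}_i\triangle\mathcal{A}_j$), so distinct elements yield distinct edges; hence $H$ has $n$ vertices and $n$ edges, each carrying a distinct label from $V$.

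Next I would invoke the elementary fact that a graph with at least as many edges as vertices is not a forest, hence contains a cycle $i_1i_2\cdots i_\ell i_1$ with $\ell\geq 3$. Letting $x_t$ be the label of the cycle edge $\{i_t,i_{t+1}\}$ (indices modulo $\ell$), the $x_t$ are pairwise distinct since the cycle's edges are. I would then telescope around the cycle: using associativity and commutativity of $\triangle$ together with $A\triangle A=\emptyset$,
\[
\emptyset \;=\; (\mathcal{A}_{i_1}\triangle\mathcal{A}_{i_2})\triangle(\mathcal{A}_{i_2}\triangle\mathcal{A}_{i_3})\triangle\cdots\triangle(\mathcal{A}_{i_\ell}\triangle\mathcal{A}_{i_1}) \;=\; \{x_1\}\triangle\{x_2\}\triangle\cdots\triangle\{x_\ell\} \;=\; \{x_1,\dots,x_\ell\},
\]
where the last equality holds because the $x_t$ are distinct. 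Since $\ell\geq 3$, the right-hand side is nonempty, a contradiction; hence some $(n-1)$-subset $X$ keeps all the traces $\mathcal{A}_i\cap X$ distinct.

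The main obstacle — indeed essentially the only idea in the proof — is recognizing that the contradiction hypothesis manufactures a graph with exactly as many edges as vertices (hence a cycle), and then the bookkeeping that the cycle's edge labels are pairwise distinct, which is precisely what makes the telescoped symmetric difference a genuinely nonempty set rather than collapsing to $\emptyset$. Everything else is routine.
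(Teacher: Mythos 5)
Your proof is correct: the reduction of ``every $(n-1)$-subset fails'' to $\mathcal{A}_i\triangle\mathcal{A}_j=\{x\}$ for each $x$, the injectivity of the labelling $x\mapsto e_x$ giving a simple graph with $n$ vertices and $n$ edges, and the telescoping symmetric difference around a cycle are all sound. Note that the paper does not prove this statement at all --- it is quoted as a known result from Bondy's 1972 paper --- so there is no in-paper argument to compare against; what you have written is the standard graph-theoretic proof of Bondy's theorem and would serve as a complete, self-contained justification.
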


\begin{corollary}\label{cor:n-1}
For any twin-free graph  $G$ on $n$ vertices, we have $\maxsepRB(G)\leq \sep(G)\leq n-1$.
\end{corollary}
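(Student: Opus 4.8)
The plan is to deduce both inequalities with almost no work: the left inequality $\maxsepRB(G)\le\sep(G)$ was already recorded at the start of Section~\ref{sec:bounds} (it holds for every twin-free graph, since a set separating \emph{all} pairs in particular separates every red-blue pair), so the only thing to prove is $\sep(G)\le n-1$, and for this I would invoke Bondy's Theorem directly.

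Concretely, first I would set $V=V(G)$, which is an $n$-set, and consider the family $\mathcal{A}=\{N[v]\mid v\in V(G)\}$ of closed neighborhoods. The key observation is that because $G$ is twin-free, the map $v\mapsto N[v]$ is injective, so $\mathcal{A}$ is a family of exactly $n$ \emph{distinct} subsets of the $n$-set $V$, exactly the hypothesis of Bondy's Theorem. Applying the theorem yields an $(n-1)$-subset $X\subseteq V$ such that the traces $N[v]\cap X$ remain pairwise distinct over all $v\in V(G)$.

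Next I would simply unwind the definition of a separating set: saying the sets $N[v]\cap X$ are pairwise distinct is exactly saying that for every pair $u,v$ of vertices, some vertex of $X$ lies in $N[u]\triangle N[v]$, i.e.\ $X$ is a (non-colored) separating set of $G$. Hence $\sep(G)\le|X|=n-1$. Chaining this with $\maxsepRB(G)\le\sep(G)$ gives $\maxsepRB(G)\le\sep(G)\le n-1$, as claimed.

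I do not anticipate a real obstacle here; the only point that needs a moment of care is the bookkeeping that $\mathcal{A}$ has precisely $n$ members and that they are distinct, which is exactly where the twin-free hypothesis is used (and also why the statement is restricted to twin-free graphs). Everything else is a direct quotation of Bondy's Theorem and of the already-established relation between $\maxsepRB$ and $\sep$.
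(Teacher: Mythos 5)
Your proof is correct and follows exactly the paper's route: the paper also derives $\sep(G)\leq n-1$ by applying Bondy's Theorem to the (distinct, by twin-freeness) closed neighborhoods and then chains with the already-noted inequality $\maxsepRB(G)\leq\sep(G)$. You merely spell out the details that the paper's one-line proof leaves implicit.
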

\begin{proof}
Regardless of the coloring, by Bondy's theorem, we can always find a set of size $n-1$ that separates \emph{all} pairs of vertices. 
\end{proof}

This bound is tight for every even $n$ for complements of \emph{half-graphs} (studied in the context of identifying codes in~\cite{FGKNPV1l}).

\begin{definition}[Half-graph \cite{EH84}]
For any integer $k\geq 1$, the \emph{half-graph} $H_k$ is the bipartite graph on vertex sets $\{v_1,\ldots,v_k\}$ and $\{w_1,\ldots,w_k\}$, with an edge between $v_i$ and $w_j$ if and only if $i\leq j$.

The complement $\overline{H_{k}}$ of $H_k$ thus consists of two cliques $\{v_1,\ldots,v_k\}$ and $\{w_1,\ldots,w_k\}$ and with an edge between $v_i$ and $w_j$ if and only if $i>j$.
\end{definition}

\begin{proposition}\label{prop:half-graph}
For every $k\geq 1$, we have $\maxsepRB(\overline{H_{k}})=2k-1$.
\end{proposition}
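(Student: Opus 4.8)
The plan is to get the upper bound for free and then build one bad colouring for the lower bound. Write $V(\overline{H_k})=\{v_1,\dots,v_k\}\cup\{w_1,\dots,w_k\}$, so $n=2k$; by definition these two sets are cliques and $v_iw_j\in E$ exactly when $i>j$. Hence
\[ N[v_i]=\{v_1,\dots,v_k\}\cup\{w_1,\dots,w_{i-1}\},\qquad N[w_j]=\{w_1,\dots,w_k\}\cup\{v_{j+1},\dots,v_k\}. \]
From these formulas $\overline{H_k}$ is twin-free (the $N[v_i]$ have pairwise different numbers of $w$-vertices, the $N[w_j]$ pairwise different numbers of $v$-vertices, and every $N[v_i]$ contains $v_1$ while no $N[w_j]$ does), so Corollary~\ref{cor:n-1} immediately gives $\maxsepRB(\overline{H_k})\le\sep(\overline{H_k})\le n-1=2k-1$. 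It therefore suffices to exhibit one colouring $c$ with $\sepRB(\overline{H_k},c)\ge 2k-1$.

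The idea is to choose $c$ so that the $2k-1$ ``almost-twin'' pairs with smallest symmetric differences are all bichromatic. The displayed formulas yield $N[v_i]\triangle N[v_{i+1}]=\{w_i\}$ for $1\le i\le k-1$, $N[w_j]\triangle N[w_{j+1}]=\{v_{j+1}\}$ for $1\le j\le k-1$, and $N[v_k]\triangle N[w_1]=\{v_1,w_k\}$. I would take $c(v_i)$ red iff $i$ is odd, and $c(w_j)$ red iff $j+k$ is odd (equivalently, $j$ and $k$ have different parities). Then any two consecutive $v$-vertices, and any two consecutive $w$-vertices, get different colours, and a one-line parity check gives $c(v_k)\ne c(w_1)$ whether $k$ is even or odd. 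This last point is the only one in the whole argument that needs a little care: we must keep $v_k$ and $w_1$ bichromatic without disturbing the alternation along the $w$'s, which is exactly why $c(w_j)$ is made to depend on $j+k$ rather than on $j$ alone.

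Finally, let $S$ be any red-blue separating set of $(\overline{H_k},c)$. For each $1\le i\le k-1$, the bichromatic pair $v_i,v_{i+1}$ can only be separated by the unique vertex of $N[v_i]\triangle N[v_{i+1}]=\{w_i\}$, so $w_i\in S$; symmetrically, the bichromatic pairs $w_j,w_{j+1}$ force $v_{j+1}\in S$ for $1\le j\le k-1$, i.e.\ $v_2,\dots,v_k\in S$. These $2k-2$ vertices are pairwise distinct and none of them is $v_1$ or $w_k$. Moreover the bichromatic pair $v_k,w_1$ forces $S$ to meet $N[v_k]\triangle N[w_1]=\{v_1,w_k\}$, so $|S|\ge(2k-2)+1=2k-1$, that is $\sepRB(\overline{H_k},c)\ge 2k-1$. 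Combined with the upper bound this gives $\maxsepRB(\overline{H_k})=2k-1$ (if desired one can also check directly that $\{w_1,\dots,w_{k-1},v_2,\dots,v_k,v_1\}$ is a separating set for $c$, showing the bound is attained, but this is not needed). I do not expect any real obstacle here beyond the parity bookkeeping noted above.
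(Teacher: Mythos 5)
Your proof is correct and follows essentially the same route as the paper's: the upper bound via Corollary~\ref{cor:n-1}, and for the lower bound an alternating coloring (your parity condition $c(w_j)$ depending on $j+k$ is just a compact restatement of the paper's case split on the parity of $k$) together with the observation that each consecutive bichromatic pair has a singleton symmetric difference forcing $w_1,\dots,w_{k-1},v_2,\dots,v_k$ into any separating set, plus one of $v_1,w_k$ for the pair $v_k,w_1$. The explicit closed-neighborhood formulas and the twin-freeness check are welcome additions but do not change the argument.
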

\begin{proof}
The upper bound follows from Corollary~\ref{cor:n-1}.

Consider the red-blue coloring $c$ such that $v_i$ is Blue whenever $i$ is odd and Red, otherwise. If $k$ is odd, $w_i$ is Red whenever $i$ is odd and Blue, otherwise. If $k$ is even, $w_i$ is Blue whenever $i$ is odd and Red, otherwise.

For any integer $i$ between $1$ and $k-1$, $v_i$ and $v_{i+1}$ have different colors and can only be separated by $w_{i}$. Likewise, $w_i$ and $w_{i+1}$ have different colors and can only be separated by $v_{i+1}$. This shows that $\{w_1,\ldots, w_{k-1}\}$ and $\{v_2,\ldots, v_{k}\}$ must belong to any separating set of $(\overline{H_{k}}, c)$. Finally, consider $w_1$ and $v_k$. They also have different colors and can only be separated by either $v_1$ or $w_k$. This shows that we need at least $n-1$ vertices in any separating set.
\end{proof}

\subsection{Upper bound for trees}

We will now show that a much better upper bound holds for trees.

Degree-1 vertices are called \textit{leaves} and the set of leaves of the tree $T$ is $L(T)$. Vertices adjacent to leaves are called \textit{support vertices}, and the set of support vertices of $T$ is denoted $S(T)$. We denote $\ell(T)=|L(T)|$ and $s(T)=|S(T)|$. The set of support vertices with exactly $i$ adjacent leaves is denoted $S_i(T)$ and the set of leaves adjacent to support vertices in $S_i(T)$ is denoted $L_i(T)$. Observe that $|L_1(T)|=|S_1(T)|$. Moreover, let $L_+(T)=L(T) \setminus L_1(T)$ and $S_+(T)=S(T)\setminus S_1(T)$. We denote the sizes of these four types of sets by $s_i(T),\ell_i(T), s_+(T)$ and $\ell_+(T)$, respectively.

%\subsection{Upper bound}

%\fullversion{
  For trees, we can show the following result, which is in contrast with the situation for general graphs (or even split graphs, as highlighted by Theorem~\ref{thm:split-hard}).
\begin{theorem}Let $T$ be a tree on $n\geq3$ vertices and let $c$ be a coloring with exactly one red (or blue) vertex. We have $\sepRB(T,c)\leq2$.
\end{theorem}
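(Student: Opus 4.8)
The plan is to analyze where the single red vertex $r$ sits relative to the leaf/support structure of $T$ and exhibit an explicit separating set of size at most~$2$ in each case. Since only one vertex (say $r$) is red, a set $S$ separates $(T,c)$ if and only if every blue vertex $b$ has $N[b]\cap S\neq N[r]\cap S$; equivalently, no blue vertex has the same trace on $S$ as $r$. The governing idea is that picking $r$ itself into $S$ already handles every blue vertex except those in $N(r)$, so it remains to separate $r$ from its (blue) neighbors, and one extra well-chosen vertex should suffice once $n\geq 3$.

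First I would take $S_0=\{r\}$ and observe that the only blue vertices not yet separated from $r$ are those $b\in N(r)$ with $N[b]\cap\{r\}=\{r\}=N[r]\cap\{r\}$, i.e.\ all of $N(r)$. So the task reduces to: find a single vertex $z$ such that $z$ separates $r$ from every neighbor of $r$, meaning $z\in N[r]\triangle N[b]$ for all $b\in N(r)$. If $\deg(r)=1$, say $N(r)=\{w\}$: since $n\geq 3$, $w$ has another neighbor $w'\neq r$ (as $T$ is a tree on $\geq 3$ vertices and can have at most... well, if $w$ had no other neighbor the component would be $K_2$), and then $z=w'$ works since $w'\in N[w]\setminus N[r]$. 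If $\deg(r)\geq 2$, I would try $z=r$ is already in; I need a second vertex that simultaneously distinguishes $r$ from all its neighbors. A natural candidate is a neighbor $u$ of $r$ that is itself not a leaf, or a vertex at distance~$2$ from $r$; the point is that in a tree, two distinct neighbors $b_1,b_2$ of $r$ are non-adjacent, so a vertex $z$ adjacent to $b_1$ but not to $r$ (and not equal to or adjacent to the other neighbors) lies in $N[b_1]\setminus N[r]$ and in general separates.

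The main obstacle is the case where $r$ has several neighbors that are \emph{all} leaves except possibly forcing a conflict — but if $r$ is adjacent to two leaves then those two leaves are twins (both have closed neighborhood $\{r,\text{leaf}\}$... actually closed neighborhoods $\{r,\ell_1\}$ and $\{r,\ell_2\}$, which differ), so twin-freeness is not immediately violated, yet separating $r$ from a leaf $\ell_i$ requires $z\in\{r,\ell_i\}\cup(N(r)\setminus\{r\})$-type structure. Here is where I expect to do the real work: I would argue that if $r$ is adjacent to $\geq 2$ leaves, then since $T$ is twin-free and $n\geq 3$, $r$ has at most... hmm, actually $r$ can be adjacent to at most one leaf in a twin-free tree only if... no: a star $K_{1,m}$ with $m\geq 2$ has all leaves pairwise twins. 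So twin-freeness forces $r$ to be adjacent to at most one leaf, OR $r$ is a leaf, OR $T=K_{1,m}$ is excluded. This is the key reduction: \emph{twin-freeness implies every support vertex has exactly one leaf neighbor unless $T$ has exactly two vertices}. Wait — $P_3$ is twin-free and its center has one leaf on each side, that's two leaves, and they are twins! So $P_3$ is \emph{not} twin-free. Good — this means in our twin-free $T$ with $n\geq 3$, no vertex is adjacent to two leaves. So $r$ has at most one leaf neighbor.

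With that reduction, I would finish as follows. Case (a): $r$ is a leaf, handled above with $z=$ (the other neighbor of its support vertex). Case (b): $\deg(r)\geq 2$ and $r$ has at most one leaf neighbor. Pick any non-leaf neighbor $u$ of $r$ (exists since $\deg(r)\geq 2$ and at most one neighbor is a leaf, so if $\deg(r)=2$ at least one neighbor is a non-leaf; if $\deg(r)\geq 2$... wait if $\deg(r)=2$ with one leaf neighbor and one non-leaf neighbor $u$). Let $u'$ be a neighbor of $u$ other than $r$. Take $S=\{r,u'\}$. I claim this separates $r$ from all its neighbors: for the leaf neighbor $\ell$ (if any), $u'\notin N[\ell]$ and $u'\notin N[r]$ only if $u'\neq$ a neighbor of $r$ — but $u'$ could equal another neighbor of $r$? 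In a tree, $u'$ adjacent to $u$ which is adjacent to $r$; if $u'$ were also adjacent to $r$ we'd have a triangle $r,u,u'$, impossible. So $u'\notin N[r]$, giving $r$ separated from $\ell$ via $u'\in N[\ell]\triangle N[r]$? No: $u'\notin N[\ell]$ and $u'\notin N[r]$, so $u'$ does \emph{not} separate them. I must instead separate $r$ from $\ell$ using $r$ itself — but $r\in N[r]\cap N[\ell]$, no good either. So $S=\{r,u'\}$ fails on the leaf. Correct fix: take $S=\{u, z\}$ where $u$ is the non-leaf neighbor and $z$ is chosen so that $u$ separates $r$ from its leaf neighbor ($u\in N[r]\setminus N[\ell]$, yes!) and $z$ separates $r$ from its non-leaf neighbors. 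This is exactly the fiddly part: I would carefully pick the second vertex to handle the remaining neighbors of $r$, using that $u$ already kills $\ell$ and any neighbor of $r$ adjacent to $u$ — but neighbors of $r$ are pairwise non-adjacent so $u$ is adjacent to none of them, meaning $u\in N[r]\setminus N[b]$ for \emph{every} neighbor $b$ of $r$. Hence $S=\{u\}$ alone separates $r$ from all its neighbors, and then $S=\{u,r\}$ separates $r$ from everything, size~$2$. That is the clean argument. I expect the write-up's only real content to be (1) establishing no vertex has two leaf neighbors, and (2) the degree-$1$ case, both short.
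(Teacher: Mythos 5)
Your degree-$1$ case is fine and coincides with the paper's (take $r$ together with a vertex at distance $2$). But the case $\deg(r)\geq 2$ contains two genuine errors. First, the ``key reduction'' is false: the paper's twin-freeness is with respect to \emph{closed} neighborhoods, and two leaves $\ell_1,\ell_2$ attached to the same vertex $v$ have $N[\ell_1]=\{v,\ell_1\}\neq\{v,\ell_2\}=N[\ell_2]$, so they are not twins. In particular $P_3$ \emph{is} twin-free, and so is every star $K_{1,m}$. Consequently a non-leaf neighbor $u$ of $r$ need not exist (take $r$ to be the center of a star), and your case (b) has nothing to start from. Second, and fatally, even when such a $u$ exists the set $S=\{r,u\}$ does not work: the claim ``$u\in N[r]\setminus N[b]$ for every neighbor $b$ of $r$'' fails precisely for $b=u$, since $u\in N[u]$. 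Indeed $N[r]\cap S=\{r,u\}=N[u]\cap S$, so the red vertex $r$ and the blue vertex $u$ receive the same code and are not separated.

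The repair is exactly what the paper does when $r$ is not a leaf: put \emph{two neighbors} $u,w$ of $r$ into $S$ and leave $r$ out. Then $N[r]\cap S=\{u,w\}$, while any other vertex whose closed neighborhood contained both $u$ and $w$ would create a triangle or a $4$-cycle through $r$, impossible in a tree; in particular $N[u]\cap S=\{u\}$ and $N[w]\cap S=\{w\}$. So $r$ is the unique vertex with code $\{u,w\}$ and is separated from every blue vertex. This also covers the star case and makes your (correct) observation that distinct neighbors of $r$ are pairwise non-adjacent do the work it was meant to do. Note that with this fix the leaf/support analysis and the (erroneous) two-leaves reduction become unnecessary.
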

\begin{proof}
Let $T$ be a tree on at least three vertices with coloring $c$ such that there is exactly one red vertex $v\in V(T)$.

Let us assume first that $v\not\in L(T)$. Thus, $v$ has at least two neighbors $w$ and $u$. If we now include $w$ and $u$ in the separating set $S$, then $v$ is the only vertex in $T$ which has two adjacent vertices in $S$ and hence, $S$ is a red-blue separating set in $T$ for coloring $c$.

On the other hand, if $v\in L(T)$, $u\in N(v)$ and $w\in N(u)\setminus\{v\}$, then $S=\{v,w\}$ is a red-blue separating set in $T$ for coloring $c$, and $\sepRB(T,c)\leq2.$ 
\end{proof}
%}

To prove our upper bound for trees, we need Theorems \ref{thm:TreeBound} and \ref{thm:IDn-sbound}.% which are heavily influenced by a still unpublished manuscript \cite{FT22} considering identifying codes.
\begin{figure}[h]
  \centering
  \includegraphics[scale = .6]{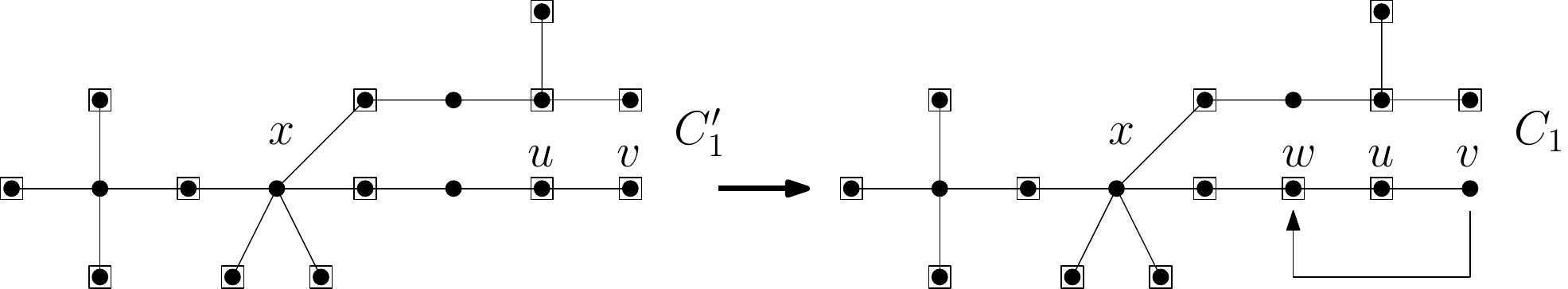}
  \caption{Construction of $C_1$ from $C_1'$ where the boxed elements represent
  members of the set.}
  \label{fig:comparisonc1andc1}
\end{figure}
\begin{theorem}\label{thm:TreeBound}%\todo{TL: This result might work for $C_4$-free bipartite graphs.}
For any tree $T$ of order $n\geq5$, we have $\maxsepRB(T)\leq\frac{n+s(T)}{2}$.
\end{theorem}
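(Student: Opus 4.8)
The plan is to prove $\maxsepRB(T)\leq\frac{n+s(T)}{2}$ by exhibiting, for an \emph{arbitrary} red-blue coloring $c$ of $T$, a red-blue separating set of size at most $\frac{n+s(T)}{2}$. The natural target is a set of the form $S = (V(T)\setminus X)$ where $X$ is a large ``savable'' set: if $|X|\geq\frac{n-s(T)}{2}$ then $|S|\leq\frac{n+s(T)}{2}$. The key observation is that leaves are cheap to handle: if $u$ is a support vertex and $\ell_1,\ell_2$ are two of its leaf neighbours of the same colour, then neither needs to be in $S$ provided $u\in S$ and (for distinguishing them from each other) nothing is needed since they already have the same colour; and a single leaf neighbour can often be swapped for or against its support vertex. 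So first I would set up the accounting: we want to show that from the $\ell(T)$ leaves plus possibly a few more vertices, we can ``spare'' at least $\frac{n-s(T)}{2}$ vertices from a trivial separating set.

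The main structural step is a careful case analysis on support vertices according to the colours of their leaf neighbours. First I would take $S$ to be all of $V(T)$ and then remove vertices. For each support vertex $u\in S_+(T)$ (at least two leaf neighbours): if all its leaves have one colour, we can remove \emph{all but one} of them from $S$ (keeping $u\in S$ so the remaining leaf is separated from $u$, and the removed leaves, all same-coloured, are pairwise fine and, having code $\{u\}\subseteq S$, are separated from everything of the other colour that $u$ dominates); this spares $\deg_L(u)-1$ vertices against a ``budget'' contribution. If $u$ has leaves of both colours, we must keep at least one leaf of each colour (they have distinct codes), sparing $\deg_L(u)-2$. For $u\in S_1(T)$ with its unique leaf $\ell$: here the swap argument of Figure~\ref{fig:comparisonc1andc1} comes in — we can remove one of $\{u,\ell\}$ from $S$ (whichever is better given the rest of the construction), sparing one vertex, but we must be careful that removing $u$ doesn't destroy separation of other vertices that relied on $u$'s code. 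Internal (non-leaf, non-support) vertices stay in $S$. The bookkeeping should give: number of spared vertices $\geq \sum_{u\in S_+}(\deg_L(u)-2) + (\text{something for }S_1) = \ell_+(T) - 2s_+(T) + (\text{term})$, and one then has to check this is $\geq \frac{n-s(T)}{2}$, i.e. that enough leaves exist relative to $n$ and $s(T)$ — which may force treating trees with few leaves (paths, near-paths) separately, perhaps by induction on $n$ removing a leaf or a small pendant structure.

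The hard part, I expect, is verifying that $S = V(T)\setminus X$ really is a separating set after all these removals — in particular that a removed vertex (a leaf $\ell$ with code now $\emptyset$ or $\{u\}$, or a removed $S_1$-support vertex) is still separated from every vertex of the opposite colour. A removed leaf $\ell$ of support vertex $u$ has code $N[\ell]\cap S = \{u\}$ (if $u\in S$) or $\emptyset$ (if $u\notin S$); the danger is another vertex $w$ of the opposite colour with the same code. If $u\in S$, then $w$ would need $N[w]\cap S=\{u\}$, so $w\in N[u]$; the only such vertices are $u$ itself (different code $\supseteq\{u\}$ plus possibly more, but actually $u\in N[u]$ so $u$'s code contains $u$ — fine, unless $u$'s only $S$-neighbour is itself), other leaves of $u$ (same colour by our choice, or we kept one of the other colour in $S$), and the non-leaf neighbours of $u$ which are in $S$ hence have $u$ plus themselves in their code. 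So the genuine threat is two same-support leaves of opposite colours both removed — which our case split forbids — or the interaction between a removed $S_1$-leaf and its support. This is exactly why the $S_1$ case needs the swap lemma and why one keeps \emph{one} leaf per monochromatic support vertex in $S$. I would organize the verification as: (i) vertices kept in $S$ separated from everything (automatic, their code contains themselves); (ii) removed leaves with $u\in S$; (iii) removed leaves with $u\notin S$ (then $u\in S_1$ and we chose to drop $u$, keep $\ell$ — so this sub-case may not arise, or arises with $\ell$ kept instead); (iv) removed $S_1$-support vertices. Getting the quantitative bound and this separation check to be simultaneously satisfiable, especially on trees that are almost paths, is the crux; I anticipate needing an inductive argument (delete a leaf, or a path of support+leaf, apply induction, re-insert) to close those low-leaf cases rather than a purely global counting argument.

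\medskip
\noindent\textbf{Summary of steps.}
\emph{Step 1:} Reduce to: for every coloring $c$, find $X\subseteq V(T)$ with $|X|\geq\frac{n-s(T)}{2}$ such that $V(T)\setminus X$ is red-blue separating.
\emph{Step 2:} Classify support vertices by leaf-colour pattern; for $S_+$-vertices remove all-but-one same-coloured leaves (resp. all-but-two when both colours occur), keeping the support vertex in $S$.
\emph{Step 3:} For $S_1$-vertices, apply the swap construction (Figure~\ref{fig:comparisonc1andc1}) to drop one of support/leaf.
\emph{Step 4:} Verify $V(T)\setminus X$ separates all opposite-coloured pairs, via the code-analysis above.
\emph{Step 5:} Count $|X|$ and check $|X|\geq\frac{n-s(T)}{2}$; handle trees with too few leaves (paths and near-paths) by a separate induction on $n$.
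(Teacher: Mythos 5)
There is a genuine gap, and it is exactly where you flag uncertainty: your construction keeps \emph{every} internal (non-leaf, non-support) vertex in the separating set and draws all of its savings from leaves, so the counting in Step~5 only works when $\ell(T)$ is of order $n$. This is not a fringe case to be patched by a side induction --- it is the heart of the theorem. For a spider with $k$ legs of length $t$ (so $n=kt+1$, $\ell(T)=s(T)=k$), your set has size at least $n-\ell(T)=kt+1-k$, while the claimed bound is $\frac{n+s(T)}{2}\approx\frac{kt}{2}$; the two differ by a factor of about $2$ whenever the legs are long. The ``delete a leaf and induct'' idea does not close this, because deleting a leaf can change $s(T)$ and, more importantly, gives no mechanism for ever omitting internal vertices: some device is needed that discards roughly half of $V(T)\setminus L(T)$ while preserving separation, and that device is absent from the proposal.

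The paper's proof supplies precisely this missing device: a parity argument. Fix a non-leaf root $x$ and form two candidate sets $C_1'$ and $C_2'$, consisting of all leaves together with the internal vertices at odd (respectively even) distance from $x$, with a small shift at $S_1$-support vertices; both are shown to be (full, uncolored) separating sets. One of the two contains at most half of the internal vertices outside $L(T)\cup S_+(T)\cup NS_3(T)$, and that one is then refined using the coloring --- keeping, at each support vertex of $S_+(T)$, only the leaves in the less common color class --- which is where your Steps~2--4 live. So your leaf-handling and the $S_1$ swap are genuinely aligned with the paper's second phase, but without the parity halving of the internal vertices the bound $\frac{n+s(T)}{2}$ is out of reach for any tree whose leaf count is small relative to its order.
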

\begin{proof}
Observe that the claim holds for stars (select the vertices of the smallest color class among the leaves, and at least two leaves). Thus, we assume that $s(T)\geq 2$. Let $c$ be a coloring of $T$ such that $\maxsepRB(T)=\sepRB(T,c)$.

We build two separating sets $C_1$ and $C_2$; the idea is that one of them is small. We choose a non-leaf vertex $x$ and add to the first set $C_1'$ every vertex at odd distance from $x$ and every leaf. If there is a support vertex $u\in S_1(T)\cap C_1'$ and an adjacent leaf $v\in L_1(T)\cap N(u)$, we create a separating set $C_1$ from $C_1'$ by shifting the vertex away from leaf $v$ to some vertex $w\in N(u)\setminus L(T)$. We construct in a similar manner sets $C_2'$ and $C_2$, except that we add the vertices at even distance from $x$ to $C_2'$ (including $x$ itself) and do the shifting when $u\in S_1(T)$ has even distance to $x$. Sets $C_1$ and $C_2$ have been previously considered in \cite[Theorem 6]{FT22}. See Figure \ref{fig:comparisonc1andc1}.

\begin{figure}[h]
  \centering
  \includegraphics[width = .7 \textwidth]{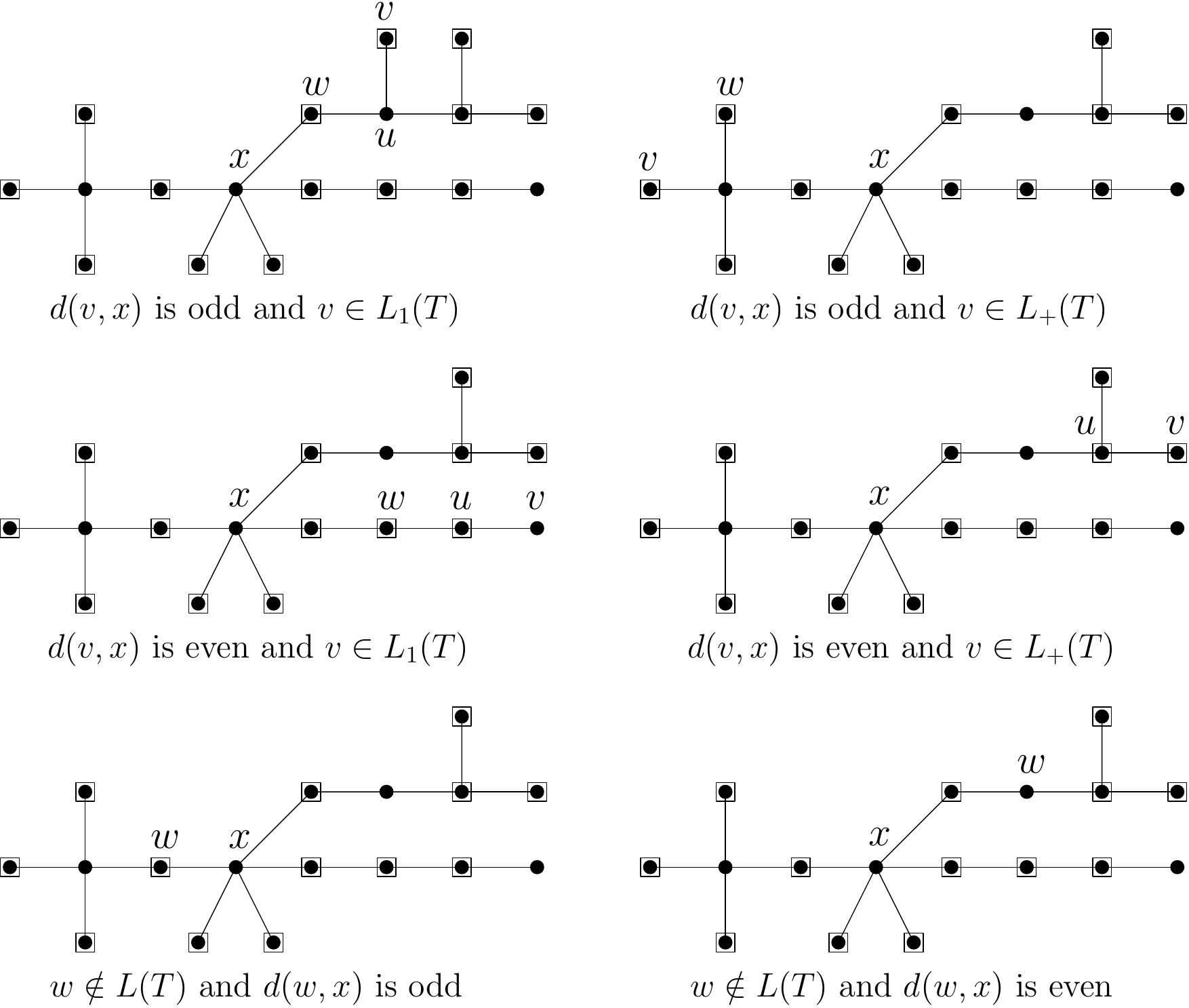}
  \caption{Illustration of various cases for $C_1$ in Claim \ref{Claim:TreeSepSet}.}
  \label{fig:sepSetsCases}
\end{figure}

\begin{claim}\label{Claim:TreeSepSet}
Both $C_1$ and $C_2$ are separating sets.
\end{claim}
\begin{claimproof}
Let us first consider set $C_1$. See Figure \ref{fig:sepSetsCases} for helpful illustrations of different cases we go through in the following arguments. Let us first show that each leaf $v\in L(T)$ is separated from all other vertices by $C_1$. Let $u\in N(v)$ be the adjacent support vertex. If $d(x,v)$ is odd, then $v\in C_1$ and $|N(u)\cap C_1|\geq2$. Hence, $v$ is separated from every other vertex. If $d(v,x)$ is even and $v\in L_+(T)$, then $v\in C_1$ and $|N[u]\cap C_1|\geq3$. Hence, $v$ is again separated from other vertices. If $d(v,x)$ is even and $v\in L_1(T)$, then $v\not\in C_1$, $u\in C_1$ and $|N[u]\cap C_1|\geq2$ due to the shift when we form $C_1$ from $C_1'$. If we have $N[w]\cap C_1=\{u\}$ for some $w\neq u$ and $w\not\in L(T)$, then $d(w,x)$ is even and we have $|N[w]\cap C_1|\geq2$, a contradiction. Thus, any leaf $v$ is separated from each other vertex.

Let us then consider a vertex $w\not\in L(T)$ with odd distance $d(w,x)$, then $w\in C_1$. Moreover, each neighbor $u\in N(w)$ has even distance to $x$ and is either a leaf which is separated from $w$ or has at least two neighbors with odd distances to $x$ and hence, $|C_1\cap N(u)|\geq2$ and $u$ is separated from $w$. Finally, if $w\not\in L(T)$ and $d(w,x)$ is even, then $w\not\in C_1'$ and $|N(w)\cap C_1|\geq2$. Since there are no $3$-cycles nor $4$-cycles in a tree, $w$ is separated from all other vertices.

The proof that also $C_2$ is  a separating set is similar. We only swap evens and odds. Hence, the claim follows. 
\end{claimproof}

Let us denote by $NS_3(T) $ a smallest set of vertices in $T$ such that for each vertex $v\in S_3(T)$ which has $N(v)\cap S_+(T)=\emptyset$, we have at least one vertex $u\in N(v)\setminus L(T)$ in $NS_3(T)$ (such a set exists since $T$ is not a star).

\begin{figure}[h]
  \centering
  \includegraphics[scale = .7]{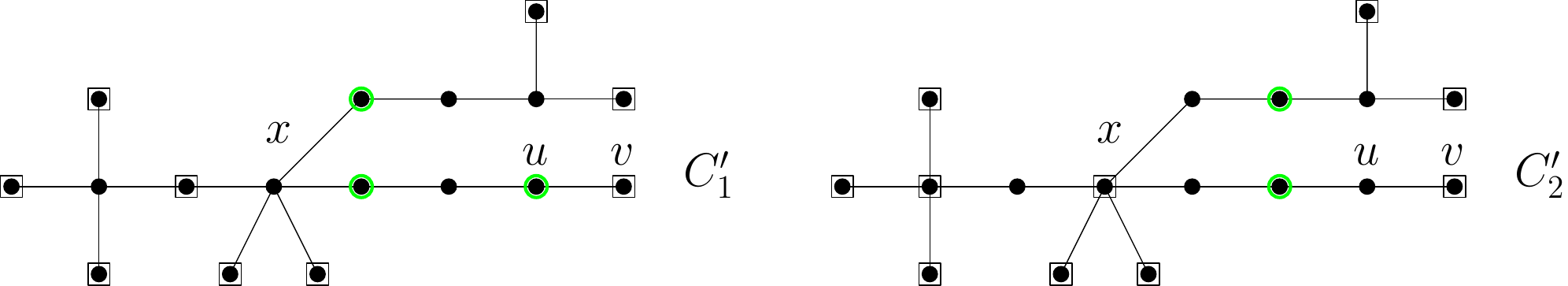}
  \caption{Comparison of the sets $C_1'$ and $C_2'$ where the vertices highlighted in green belong to the sets $C'_1\setminus (L(T)\cup S_+(T)\cup NS_3(T))$ and $C'_2\setminus (L(T)\cup S_+(T)\cup NS_3(T))$, respectively.}
  \label{fig:comparisonc1c2}
\end{figure}

We assume that out of the two sets $C_1'$ and $C_2'$, $C_a'$ ($a\in\{1,2\}$) has less vertices among the vertices in $V(T)\setminus (L(T)\cup S_+(T)\cup NS_3(T))$.   In particular, it contains at most half of those vertices and we have $|C_a'\setminus (L(T)\cup S_+(T)\cup NS_3(T))|\leq (n-\ell(T)-s_+(T)-|NS_3(T)|)/2$. In Figure \ref{fig:comparisonc1c2}, a comparison of the sets $C_1'$ and $C_2'$ is shown. Next, we will construct set $C$ from $C_a'$. Let us start by having each vertex in $C_a'$ be in $C$. Let us then, for each support vertex $u\in S_+(T)$, remove from $C$ every adjacent leaf $w\in L_+(T)\cap N(u)$ such that $w$ is in the more common color class within the vertices in $N(u)\cap L_+(T)$ in coloring $c$. We then add some vertices to $C$ as follows. For $u\in S_i(T)$, $i\geq4$, we add $u$ to $C$ and some leaves so that there are at least two vertices in $N(u)\cap C$. We have at most $|L(T)\cap N[u]|/2+1$ vertices in $C\cap(N[u]\cap L(T)\cup\{u\})$.

For $i=3$, we add $u$ and any $v\in NS_3(T)\cap N(u)\setminus C$, depending on which one already belongs to $C$. Then, if all leaves in $N(u)$ have the same color, we add one of them to $C$. Hence, we have $|C\cap (L_3(T)\cup NS_3(T))|/s_3(T)\leq 2$.

Finally, for $i=2$, if the two leaves have same color and $u\not\in C_a'$, we add $u$ and one of the two leaves to $C$. If the two leaves have the same color and $u\in C_a'$, we add a non-leaf neighbor of $u$ to $C$. If the leaves have different colors, one of them, say $v$, has the same color as $u$. We add $u$ to $C$ and shift the vertex in $C$ in the leaves so that $v$ is in $C$. We added at most two vertices to $C$ in this case. Notice that now we have $S_+(T)\subseteq C$.

Each time, we added to $C$ at most half of the considered vertices in $N(u)$, and at most one other additional vertex. After these changes, we shift some vertices in $C$ away from $L_1(T)$ the same way we built $C_a$ from $C_a'$. As $|C_a'\setminus (L(T)\cup S_+(T)\cup NS_3(T))|\leq (n-\ell(T)-s_+(T)-|NS_3(T)|)/2$, we get:
\begin{flalign*}
|C| & \leq \frac{n-\ell(T)-s_+(T)-|NS_3(T)|}{2}+\ell_1(T)+\frac{\ell_+(T)+|NS_3(T)|}{2}+s_+(T) \\ 
& = \frac{n+\ell_1(T)+s_+(T)}{2}=\frac{n+s(T)}{2}.
\end{flalign*}

\begin{claim}\label{claim:C-is-a-sep-set}
$C$ is a red-blue separating set for coloring $c$.
\end{claim}
\begin{claimproof}
Since $C_a$ is a separating set and $C_a\setminus C\subseteq L_+(T)$, if two vertices $w,v$ are not separated by $C$, then they were separated by a leaf in $L_+(T)$ in $C_a$. %Thus, we may assume that $w\in L_+(T)\cup S_+(T)$. 
Moreover, there exists a support vertex $u\in S_+(T)$ such that $v,w\in N[u]$. Recall that $S_+(T)\subseteq C$ and hence, $u\in C$. If $w$ and $v$ are both leaves, then they have the same color and do not need to be separated. In the following, we go through all the other possibilities for $w$ and $v$.

Assume first that $w,v\not\in L(T)$ and $d(w,v)=1$ %. Since $w,v\in N[u]$, we may assume that $w=u$. If $|N(u)\cap L(T)|\geq3$, then we have a leaf $b\in (N[u]\cap C)\setminus N[v]$. Thus, $|N(u)\cap L(T)|=2$
 and let us say, without loss of generality, that $d(w,x)$ is of such parity that $w\in C_a'$ (and $w\in C$). Notice that we have $w=u$ or $v=u$ in this case. Moreover, since there are no cycles in $T$, the parities of $d(w,x)$ and $d(v,x)$ differ. Thus, there exists some vertex $b\in N(v)\setminus N[w]$ such that the parity of $d(b,x)$ equals to the parity of $d(w,x)$. Thus, $b\in C_a'$.  If $b$ is not a leaf, then $b\in C$ and $b$ separates $w$ and $v$. Thus, $b\in L(T)$. However, if $b\in L_1(T)$, then $b\in C$ and if $b\in L_+(T)$, then $v\in S_+(T)$ and there exists a leaf in $C$ adjacent to $v$ separating $v$ and $w$, a contradiction. When $d(w,v)=2$, neither of $w$ or $v$ are in $C$ since they cannot be separated. Again, we can find some vertex $b$ in $C$ that will separate them as in the previous case.

 Moreover, since there are no cycles in $T$, the parities of $d(w,x)$ and $d(v,x)$ differ. Thus, there exists some vertex $b\in N(v)\setminus N[w]$ such that the parity of $d(b,x)$ equals to the parity of $d(w,x)$. If $b$ is not a leaf, then $b\in C$ and $b$ separates $w$ and $v$. Thus, $b\in L(T)$. However, if $b\in L_1(T)$, then $b\in C$ and if $b\in L_+(T)$, then $v\in S_+(T)$ and there exists a leaf in $C$ adjacent to $v$ separating $v$ and $w$, a contradiction. When $d(w,v)\geq2$, we notice that this case cannot occur since $v,w\in N[u]$ and at least one of these two vertices is adjacent to a leaf in $N(u)$.

Finally, we have the case where exactly one of the two vertices is a leaf, let us say $v\in L(T)$ and since $v\in N(u)$, we have $v\in L_+(T)$. Assume first that $d(v,w)=2$. Thus, $v,w\not\in C$. However, then we again have $b\in C$ such that $b\in N(w)\setminus \{u\}$, either due to the parity of $d(x,b)$ or because $b$ is a leaf. As the last case we have $u=w$ and $v\in L_+(T)\cap N(u)$. If $u\not\in C_a'$, then, by parity, $u$ has an adjacent non-leaf vertex in $C$ since $T$ is not a star. On the other hand, if $u\in C_a'$, then if $u\in S_2(T)$ and the two leaves have the same color, there is again a non-leaf neighbor in $C$. If $u\in S_2(T)$ and the two leaves have different colors, then there is a leaf of the same color, in $N(u)$, as $u$ which is in $C$. If $u\in S_3(T)$, then $u$ has a non-leaf neighbor in $NS_3(T)\cap C$ or in $S_+(T)\cap C$. If $u\in S_i(T)$ for $i\geq4$, then $u$ has at least two adjacent leaves which are in $C$. Hence, $w$ and $v$ are either separated or they have the same color.
\end{claimproof}

This completes the proof of the theorem.
\end{proof}

The upper bound of Theorem~\ref{thm:TreeBound} is tight. Consider, for example, a path on eight vertices. Also, the trees presented in Proposition~\ref{prop:3n/5} are within $1/2$ from this upper bound. %In fact, the bound of the theorem is tight for all even paths except for $P_6$ (and $P_2$). 
In the following theorem, we offer another upper bound for trees which is useful when the number of support vertices is large. Following theorem has been previously considered for total dominating identifying codes (under term differentiating-total dominating set) in \cite{HHH06}.

\begin{theorem}\label{thm:IDn-sbound}
For any tree $T$ of order $n\geq5$, $\sep(T)\leq n-s(T)$.
\end{theorem}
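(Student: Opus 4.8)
\textbf{Proof proposal for Theorem~\ref{thm:IDn-sbound}.}

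The plan is to construct, for any tree $T$ on $n\geq5$ vertices, a separating set of size at most $n-s(T)$ explicitly, i.e. to \emph{exclude} a set of $s(T)$ vertices whose removal still leaves all closed-neighborhood traces distinct. The natural candidate is: for each support vertex $u$, discard exactly one of the leaves attached to $u$. This gives a set $C$ of size $n-s(T)$ (since the sets $L_1(T)$-leaves and the discarded $L_+(T)$-leaves are pairwise disjoint across distinct support vertices, we remove exactly $s(T)$ vertices). First I would root $T$ at some non-leaf vertex and, for each support vertex $u$, among its leaf children pick one leaf $\lambda(u)$ to delete from $V(T)$ to form $C = V(T)\setminus\{\lambda(u): u\in S(T)\}$; the only subtlety is that if $u$ has a single leaf child which is the only vertex that could separate something, we may need to be slightly careful about \emph{which} leaf to keep, but since deleting $\lambda(u)$ keeps $u$ itself in $C$, the closed neighborhood $N[v]\cap C$ of any other vertex $v$ is unaffected except possibly for $v\in\{u,\lambda(u)\}$ and $v$ the other leaves of $u$.

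Next I would verify that $C$ separates all pairs. The key observation is that removing a leaf $\lambda(u)$ only affects the codes of $u$ and of $\lambda(u)$ and of the remaining leaves of $u$; any two vertices at distance $\geq 3$ in $T$ keep disjoint closed-neighborhood witnesses as in a tree these are easy. The cases to check are: (i) two leaves of the same support vertex $u$ — they are separated by any \emph{other} neighbor of $u$ being absent, but since $u\in C$ and the trace of a leaf $v\neq\lambda(u)$ is exactly $\{v,u\}\cap C = \{v,u\}$, distinct leaves get distinct traces; (ii) a support vertex $u$ versus one of its kept leaves $v$ — here $N[v]\cap C=\{v,u\}$ while $N[u]\cap C\supseteq\{u\}\cup(N(u)\cap C)$, and since $T$ has $n\geq5$ vertices and is connected, $u$ has either another neighbor in $C$ or at least two leaf children one of which stays, so the traces differ; (iii) $u$ versus $\lambda(u)$ — after deletion $\lambda(u)\notin C$ so its trace is $\emptyset$ (if $u$ were $\lambda(u)$'s only neighbor) or $\{u\}$; I would argue $u$ itself always has a larger trace, again using connectivity and $n\geq5$; (iv) two support vertices, or a support vertex and a non-support non-leaf vertex, or two internal vertices — these are separated because in the original tree $V(T)$ separated everything (twin-free), and the only lost witnesses are the deleted leaves, which are pendant and hence were witnesses only for pairs involving their own support vertex, already handled. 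The main obstacle I anticipate is case (iii) together with low-degree configurations: a support vertex $u$ of degree $2$ with one leaf child $\lambda(u)$ and one other neighbor $w$ — then $N[u]\cap C=\{u,w\}$ and we must ensure no \emph{other} vertex has trace $\{u,w\}$; this forces a short local case analysis on whether $w\in C$ and on $w$'s other neighbors, which is exactly the kind of pendant-path degeneracy that makes the bound fail for tiny trees (hence the hypothesis $n\geq 5$).

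A cleaner alternative I would also consider is to invoke the relationship with differentiating-total dominating sets of \cite{HHH06}: that paper establishes, for trees, a bound on the minimum differentiating-total dominating set in terms of $n$ and $s(T)$, and a separating set in the closed-neighborhood sense is implied by (indeed weaker than) such a structure; if the exact statement there is $\gamma_t^D(T)\leq n-s(T)$ or similar, then Theorem~\ref{thm:IDn-sbound} follows essentially for free since every vertex of $T$ already has its own code in $V(T)$ and we only need to remove $s(T)$ redundant pendant leaves. Either way, the quantitative heart is the counting identity "one discardable leaf per support vertex," and the verification reduces to checking that pendant leaves are never \emph{sole} separators of a pair not already handled — which holds once $n\geq5$ rules out the pathological small cases. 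I would write the explicit construction as the main argument and mention the \cite{HHH06} connection as a remark.
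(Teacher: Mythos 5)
Your construction is exactly the paper's: delete one chosen leaf per support vertex and keep the remaining $n-s(T)$ vertices, so the proposal is correct and follows essentially the same route. The paper's verification is organized more cleanly than your pair-by-pair case analysis --- deleted leaves have pairwise distinct singleton codes $\{u\}$, every kept vertex $w$ satisfies $|N[w]\cap S|\geq 2$ (separating kept from deleted), and kept vertices are mutually separated because $T[S]$ is an induced subtree on at least three vertices and hence twin-free --- which in particular absorbs the residual pairs your item (iv) glosses over, such as a deleted leaf against a vertex far from its support vertex.
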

\begin{proof}
Let us choose for each support vertex $u\in S(T)$ exactly one adjacent leaf $v\in L(T)$ and say that these vertices form the set $S'$. 
Next, we form the separating set $S=V(T)\setminus S'$. Notice that $|S|=n-s(T)$. In the following, we show that $S$ is a separating set in $T$. 

Observe that if $v\not\in S$, then $v$ is a leaf and no support vertex has two adjacent leaves which do not belong to $S$. Thus, vertices which do not belong to $S$ are pairwise separated.
Since $S'\subseteq L(T)$, $T[S]$ is a connected induced subgraph of $T$. Moreover, as $n\geq5$, we have $|N[w]\cap S|\geq2$ for each vertex $w\in S$. 
Thus, vertices in $S$ are separated from vertices which are not in $S$. Finally, any two vertices $w,w'\in S$ are separated since $|V(T[S])|\geq3$; hence, each closed neighborhood is unique in $T[S]$.
\end{proof}

The following corollary is a direct consequence of Theorems \ref{thm:TreeBound} and \ref{thm:IDn-sbound}. Indeed, we have $\maxsepRB(T)\leq \min\{n-s(T),(n+s(T))/2\}$.

\begin{corollary}\label{cor:2n/3}
For any tree $T$ of order $n\geq5$, we have $\maxsepRB(T)\leq\frac{2n}{3}$.
\end{corollary}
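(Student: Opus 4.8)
The plan is to derive Corollary~\ref{cor:2n/3} purely as an arithmetic consequence of the two preceding theorems, so no new combinatorial work on trees is needed. By Theorem~\ref{thm:TreeBound} we have $\maxsepRB(T)\leq\frac{n+s(T)}{2}$, and by Theorem~\ref{thm:IDn-sbound} together with the relation $\maxsepRB(T)\leq\sep(T)$ (stated at the start of Section~\ref{sec:bounds}) we have $\maxsepRB(T)\leq n-s(T)$. Hence $\maxsepRB(T)\leq\min\bigl\{\tfrac{n+s(T)}{2},\,n-s(T)\bigr\}$, which is exactly the observation recorded in the sentence preceding the corollary.

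The next step is to note that the minimum of these two quantities, viewed as functions of $s=s(T)$ for fixed $n$, is maximized at the value of $s$ where the two expressions coincide. The first expression $\frac{n+s}{2}$ is increasing in $s$ while the second expression $n-s$ is decreasing in $s$; therefore their pointwise minimum is largest when $\frac{n+s}{2}=n-s$, i.e. when $n+s=2n-2s$, i.e. when $3s=n$, i.e. $s=\frac{n}{3}$. Substituting $s=\frac{n}{3}$ into either expression gives the common value $n-\frac{n}{3}=\frac{2n}{3}$. Since for every actual tree $T$ the true value of $s(T)$ is some specific number and $\maxsepRB(T)$ is bounded by the minimum of the two expressions evaluated at that number, we conclude $\maxsepRB(T)\leq\frac{2n}{3}$ regardless of what $s(T)$ is.

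There is essentially no obstacle here: the only thing to be slightly careful about is the hypothesis $n\geq 5$, which is inherited directly from both Theorem~\ref{thm:TreeBound} and Theorem~\ref{thm:IDn-sbound}, so the corollary is stated for $n\geq 5$ as well. One could optionally remark that the bound is not tight for any single tree simultaneously with both input bounds being tight, and that the extremal regime $s(T)\approx n/3$ is what one would aim to realize when constructing near-extremal examples (as alluded to by the reference to Proposition~\ref{prop:3n/5}); but for the proof itself, the two displayed inequalities plus the one-line optimization over $s$ suffice. I would write the proof in two sentences: first cite the two theorems to get the $\min$ bound, then perform the optimization to get $\frac{2n}{3}$.
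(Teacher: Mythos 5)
Your proposal is correct and matches the paper's argument exactly: the paper also derives the corollary by combining $\maxsepRB(T)\leq\frac{n+s(T)}{2}$ from Theorem~\ref{thm:TreeBound} with $\maxsepRB(T)\leq\sep(T)\leq n-s(T)$ from Theorem~\ref{thm:IDn-sbound}, and observes that the minimum of the two bounds is at most $\frac{2n}{3}$, attained when $s(T)=n/3$. Your explicit one-line optimization over $s$ is just a slightly more detailed write-up of the same step the paper leaves implicit.
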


We next show that Corollary~\ref{cor:2n/3} (and Theorem~\ref{thm:TreeBound}) is not far from tight.

\begin{proposition}\label{prop:3n/5}
For any $k\geq 1$, there is a tree $T$ of order $n=5k+1$ with $\maxsepRB(T)=\frac{3(n-1)}{5}=\frac{n+s(T)-1}{2}$.
\end{proposition}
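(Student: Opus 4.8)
The plan is to construct an explicit family of trees and a worst-case coloring that forces a large separating set, matching the upper bound of Theorem~\ref{thm:TreeBound} up to one vertex. The natural building block is a ``spider-like'' gadget attached to a common structure. For each $k\geq 1$, I would take $k$ copies of a small gadget and glue them to a central vertex (or a central path), where each gadget contributes $5$ vertices and exactly one support vertex, so that $n=5k+1$ and $s(T)=k$; then $\frac{n+s(T)}{2}=\frac{5k+1+k}{2}=3k+\tfrac12$, and we aim for $\maxsepRB(T)=3k=\frac{3(n-1)}{5}=\frac{n+s(T)-1}{2}$. A good candidate for the gadget is a path $P_4$ (say $a_i-b_i-c_i-d_i$) whose endpoint $a_i$ is joined to the central vertex $z$, together with one extra leaf hanging off $c_i$ (making $c_i$ a support vertex with two leaves, which tends to be expensive to separate), giving $5$ vertices per gadget; one then checks $T$ is a tree on $5k+1$ vertices with $s(T)=k$ support vertices (only the $c_i$'s), is twin-free, and has order $n\geq 5$ for all $k\geq 1$.

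\textbf{Upper bound direction.} For the inequality $\maxsepRB(T)\leq \frac{3(n-1)}{5}$, I would simply invoke Theorem~\ref{thm:TreeBound}: $\maxsepRB(T)\leq\frac{n+s(T)}{2}=3k+\tfrac12$, hence $\maxsepRB(T)\leq 3k$ since it is an integer. (If the chosen gadget has the property that Bondy-type savings give one more vertex, that is already enough; otherwise one argues directly that some coloring can always be handled with one vertex spared, which is precisely what Theorem~\ref{thm:TreeBound}'s slack provides modulo the integrality.)

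\textbf{Lower bound direction — the main work.} The heart of the proof is exhibiting one coloring $c$ with $\sepRB(T,c)\geq 3k$. The idea is to color each gadget so that, locally, at least $3$ vertices of that gadget's closed neighborhoods are forced into any separating set, and these forced sets are disjoint across gadgets. Concretely, within gadget $i$ one arranges the red/blue pattern on $\{a_i,b_i,c_i,d_i,\ell_i\}$ (where $\ell_i$ is the extra leaf at $c_i$) so that: the two leaves $d_i,\ell_i$ receive different colors, forcing their only common neighbor $c_i$ into $S$ to separate them; consecutive pairs along the path that receive different colors force their unique intermediary into $S$; and one checks case by case that no single choice can do double duty, so at least three distinct vertices per gadget are needed. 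Since the gadgets only meet at $z$, a counting argument then gives $|S|\geq 3k$ minus at most a bounded correction for $z$; the coloring of $z$ (and, if a central path is used instead, its internal vertices) must be chosen carefully so that this correction is zero, yielding exactly $3k$. The main obstacle is precisely this local analysis: verifying that the chosen coloring really forces three vertices in \emph{every} gadget regardless of how the global separating set is chosen, and ensuring the forced vertices are genuinely disjoint (in particular that $z$ or shared neighbors cannot be reused to separate pairs in two different gadgets simultaneously). This requires a disciplined enumeration of which vertex can separate which red-blue pair inside a gadget, using that $T$ is a tree (so closed neighborhoods of non-adjacent vertices overlap in at most one vertex) to rule out sharing. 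Once the per-gadget lower bound of $3$ is established and disjointness is confirmed, the equality $\maxsepRB(T)=3k=\frac{3(n-1)}{5}=\frac{n+s(T)-1}{2}$ follows immediately.
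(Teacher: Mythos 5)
Your overall strategy (a star of $k$ five-vertex branches glued at a hub, a worst-case near-alternating coloring, three forced vertices per branch with disjoint forcing sets, and the upper bound from Theorem~\ref{thm:TreeBound}) is the same as the paper's, and your integrality trick $\maxsepRB(T)\le\lfloor (n+s(T))/2\rfloor=3k$ is a perfectly good way to get the upper bound. However, the lower-bound argument contains a genuine error at its very first step: you claim that coloring the two leaves $d_i,\ell_i$ of the support vertex $c_i$ differently ``forces their only common neighbor $c_i$ into $S$.'' A vertex $s$ separates $u$ and $v$ if and only if $s\in N[u]\triangle N[v]$; since $c_i$ lies in \emph{both} $N[d_i]$ and $N[\ell_i]$, it separates nothing here. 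Two leaves hanging off the same support vertex can only be separated by one of the two leaves themselves, so this pair forces the set $\{d_i,\ell_i\}$, not the singleton $\{c_i\}$.

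This is not just a local slip: it breaks the per-gadget count for your particular gadget. In your gadget the only bichromatic pair whose separator set contains $c_i$ is $(a_i,b_i)$, whose separator set is $\{z,c_i\}$ (and $z$ is shared by all gadgets), while every other separator set of a nearby pair is a subset of the four vertices $\{a_i,b_i,d_i,\ell_i\}$ of size at least two; hence at most two pairwise disjoint forcing sets can be found inside a gadget without touching $z$, and one cannot reach $3$ forced vertices per gadget this way. The paper avoids this by taking each branch to be a path $x\!-\!v^i_1\!-\!\cdots\!-\!v^i_5$ (so each branch has a \emph{single} leaf): with the alternating coloring, the pair $(v^i_5,v^i_4)$ forces the singleton $\{v^i_3\}$, the pair $(v^i_4,v^i_3)$ forces $\{v^i_2,v^i_5\}$, and the pair $(v^i_3,v^i_2)$ forces $\{v^i_1,v^i_4\}$ --- three pairwise disjoint sets entirely inside the branch, giving $3k$ immediately, with the matching upper bound witnessed by the explicit separating set $\{v^i_1,v^i_3,v^i_5 : 1\le i\le k\}$. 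To repair your proof you should replace the two-leaf gadget by such a pendant path (or otherwise redo the case analysis with the correct separation condition).
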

\begin{proof}
Consider the tree $T$ formed by taking $k$ disjoint copies $P^1,\ldots, P^k$ of a path of order~$6$ and identifying one endpoint of each path into one single vertex $x$. We consider the coloring that colors $x$ red, and all other vertices are colored red-blue-red... following the bipartition of the tree. Let $v^i_1,\ldots,v^i_5$ be the vertices of $P^i$ distinct from $x$, where $x$ is adjacent to $v^i_1$. In order to separate $v^i_5$ from $v^i_4$, we need $v^i_3$ in any red-blue separating set. To separate $v^i_4$ from $v^i_3$, we need either $v^i_2$ or $v^i_5$. To separate $v^i_3$ from $v^i_2$, we need either $v^i_1$ or $v^i_4$. Thus, we need at least three vertices of $P^i$ in any red-blue separating set, which shows that $\maxsepRB(T)\geq 3k$. To see that $\maxsepRB(T)\leq 3k$, one can see that the set consisting of all vertices $v_1^i,v_3^i,v_5^i$ separates all pairs of vertices. Finally, since $s(T)=k$ and $n-1=5k$, we get that $\frac{n+s(T)-1}{2}=3k=\frac{3(n-1)}{5}$.
\end{proof}

\section{Algorithmic results for \PBMAXSEP} \label{sec:complexity-MAXSEP}

The problem \PBMAXSEP{}  does not seem to be naturally in the class NP (it is in the second level of the polynomial hierarchy). Nevertheless, we show that it is NP-hard. 

\begin{theorem}\label{thm:MAX-RB-NP-h}
  \PBMAXSEP{} is NP-hard even for graphs of maximum degree 12.
\end{theorem}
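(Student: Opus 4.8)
The plan is to reduce from a suitable NP-hard problem whose structure forces a ``bad'' coloring to emerge. Since \PBMAXSEP{} asks whether \emph{every} coloring admits a small separating set, a natural strategy is to reduce from the complement-flavored perspective: I want to build, from an instance of a known NP-hard problem, a bounded-degree graph $G$ and threshold $k$ such that if the original instance is a NO-instance then some coloring $c$ of $G$ forces $\sepRB(G,c)$ to be large, while if it is a YES-instance then every coloring of $G$ has a separating set of size at most $k$. A promising source is \textsc{Dominating Set} (or a variant such as \textsc{Vertex Cover} / \textsc{3-SAT}) on bounded-degree graphs, mirroring the reductions used earlier in the paper (Theorems~\ref{thm:red-DS} and the one following it), but now one must handle the universal quantifier over colorings.

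The key technical device I would use is a gadget that has a ``canonical worst coloring''. Concretely, attach to the base instance a collection of small gadgets (for instance, short paths or half-graph-like pieces as in Proposition~\ref{prop:half-graph}, or the $P_6$-components from Figure~\ref{fig:sepMaxsep}) whose own $\maxsepRB$ value is known exactly and is achieved by an explicit coloring. The adversary choosing the coloring will, in the worst case, color the gadgets in their forcing way, contributing a fixed large amount $W$ to the separating-set size; the remaining freedom of the adversary is then concentrated on the ``core'' part encoding the NP-hard instance. I would set $k$ just below $W$ plus the cost of the YES-case core, so that the instance is a YES-instance of \PBMAXSEP{} exactly when the core can always be separated cheaply, which in turn should correspond to the original instance being a YES-instance. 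Keeping every gadget of constant size and attaching only $O(1)$ of them per vertex keeps the maximum degree bounded; the constant $12$ suggests the construction reuses the girth-$12$, subcubic \textsc{Dominating Set} hardness of Zverovich~\cite{Zverovich} and pads degrees up to $12$ with pendant gadgets.

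The main obstacle is the universal quantifier: I must argue that in the YES-case, \emph{no} adversarial coloring can make the separating-set size exceed $k$. This requires a robust upper-bound argument that works uniformly over all $2^{n}$ colorings — essentially a constructive scheme that, given any coloring, produces a separating set of size $\le k$ using (a) the known optimal separating structure of each gadget for whatever coloring it received, glued with (b) a small separating set for the core that exists because the original instance is a YES-instance, plus (c) a few ``interface'' vertices to handle pairs straddling a gadget and the core. Verifying that the interface cost stays within budget, and that gadget solutions for arbitrary (not just worst-case) colorings remain small, is the delicate part. I would handle it by designing gadgets so that for \emph{every} coloring their local separating cost is at most the worst-case value $W_{\text{gadget}}$ and the core and gadget can be separated essentially independently (each gadget vertex dominated from within the gadget, each core vertex dominated from within the core), so that cross pairs are separated for free by the closed-neighborhood structure. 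The converse direction (NO-instance $\Rightarrow$ some coloring is expensive) should follow by taking the gadget-forcing coloring together with the coloring that makes the core encode a failing \textsc{Dominating Set} certificate, exactly as in the earlier reductions of Section~\ref{sec:complexity-SEP}.
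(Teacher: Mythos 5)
Your proposal is a strategy outline rather than a proof, and the step you yourself flag as ``the delicate part'' is precisely where the gap lies: you never construct the gadgets, never fix the reduction, and never carry out the uniform upper-bound argument over all $2^{n}$ colorings in the YES case. The paper's proof resolves this difficulty with one clean idea that your sketch circles around but does not land on: it builds (from \textsc{3-SAT-2l}, not \textsc{Dominating Set}) a graph $H$ for which $\maxsepRB(H)=\sep(H)$, by exhibiting a single adversarial coloring whose optimal red-blue separating set is forced to be a separating set for \emph{all} pairs. Once that identity holds, the universal quantifier over colorings disappears for free via the trivial inequality $\maxsepRB(H)\leq\sep(H)$ --- the ``robust scheme that works for every coloring'' is simply a full non-colored separating set --- and the whole reduction collapses to showing $\sep(H)=k$ iff the formula is satisfiable, which is a standard forced-vertex argument. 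Your plan instead proposes to glue per-gadget solutions for arbitrary colorings to a core solution with extra ``interface'' vertices and to verify the budget case by case; without the $\maxsepRB=\sep$ collapse this bookkeeping is exactly the hard part, and you give no argument that it closes.

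Two further concrete problems. First, your NO-direction plan (``gadget-forcing coloring together with the coloring that makes the core encode a failing \textsc{Dominating Set} certificate'') imports the colorings of Theorem~\ref{thm:red-DS}, but those rely on blue vertices receiving the empty code relative to a set dominating the red copy; in the \PBMAXSEP{} setting you must also guarantee that no \emph{cheaper} solution exists for that coloring that separates the red-blue pairs without dominating anything, which is not automatic and is exactly what the paper's domination gadgets (the $K_{10}$ of $p_i$'s and $q_j$'s forcing $u_1,\dots,u_4$ through symmetric differences $N[p_i]\triangle N[q_j]=\{u_h\}$) are engineered to prevent. Second, your guess about the constant $12$ (padding a subcubic girth-$12$ instance with pendant gadgets) is off target: the degree bound comes from the $q_j$ vertices inside the domination gadget and the at-most-twice occurrence of literals in \textsc{3-SAT-2l}. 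As written, the proposal identifies the right obstacle but does not overcome it.
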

\begin{proof} We reduce from the following NP-hard version of \textsc{3-SAT}~\cite{T84}.

\Pb{\textsc{3-SAT-2l}}{A set of $m$ clauses $C = \{c_1, \ldots, c_m\}$ each with at most three literals, over $n$ Boolean variables $X = \{x_1, \ldots, x_n\}$, and each literal appears at most twice.}
{Question}{Is there an assignment of $X$ where each clause has a true literal?}

\paragraph{Construction:} Given an instance $\sigma$ of \textsc{3-SAT-2l} with $m$ clauses and $n$ variables, we create an instance $(H, k)$ of \PBMAXSEP, where $k = 4m + 9n$. We first explain the construction of a domination gadget.

A \emph{domination gadget} on vertices $v_1$ and $v_2$, denoted by $H(v_1, v_2)$, consists of 16 vertices including $v_1$ and $v_2$. This gadget will ensure that $v_1,v_2$ are dominated by the local solution of the gadget (and thus, separated from the rest of the graph). See Figure~\ref{fig:dominationa_gadget} for reference. The vertices $v_1$ and $v_2$ may be connected to each other or to some other vertices outside the domination gadget as represented by the dashed edges incident to them. Both $v_1$ and $v_2$ are also connected to the vertices $u_1$ to $u_4$ as shown in the figure. Next we have a clique $K_{10}$ consisting of the vertices $\{p_1, p_2, \ldots, p_6, q_1, q_2, \ldots, q_4\}$. Every vertex $p_i$ is connected to a unique pair of vertices from $\{u_1, u_2, u_3, u_4\}$ and every vertex $q_j$ is connected to a unique triple of vertices from $\{u_1, u_2, u_3, u_4\}$. For example in the figure we have $p_4$ connected with the pair of vertices $u_2$ and $u_3$ and $q_3$ connected with the triplet of vertices $u_1, u_3$ and $u_4$.

\begin{figure}[h!]
  \centering
  \includegraphics[width = .8 \textwidth]{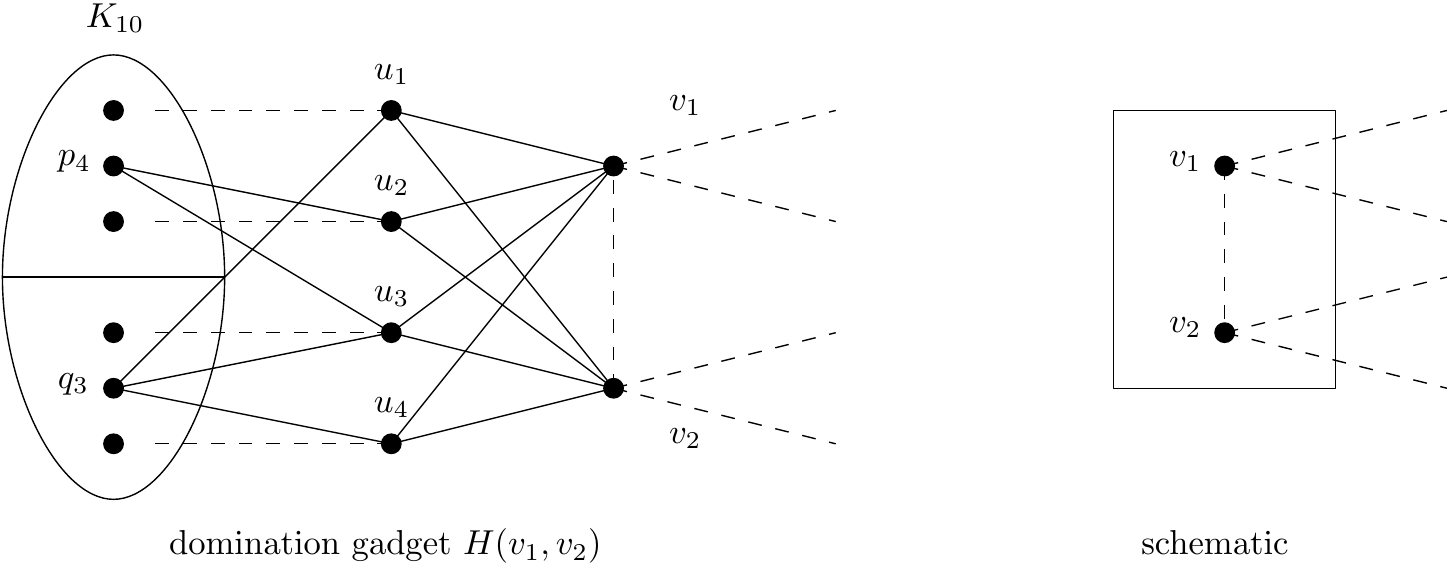}
  \caption{A domination gadget on vertices $v_1$ and $v_2$ and its schematic representation.}
  \label{fig:dominationa_gadget}
\end{figure}

The graph $H$ consists of one \emph{variable gadget} per variable and one \emph{clause gadget} per clause. The variable gadget for a variable $x$ consists of the graph $H(x^a, x^b)$ and $H(x, \x)$ with additional edges $(x^a, x^b), (x^a, x)$ and $(x^a, \x)$. The clause gadget for a clause $C = (x \lor \y \lor z)$ (say) is $H(c^a, c^b)$, where $c^a$ is connected to the vertices $c^b, x, \y$ and $z$. See Figure~\ref{fig:maxhardness} for an illustration. Observe that since each literal can appear at most twice therefore, the maximum degree of $H$ is 12, determined by the $q_i$'s in each domination gadget.

\begin{figure}[h!]
  \centering
  \includegraphics[width = .65 \textwidth]{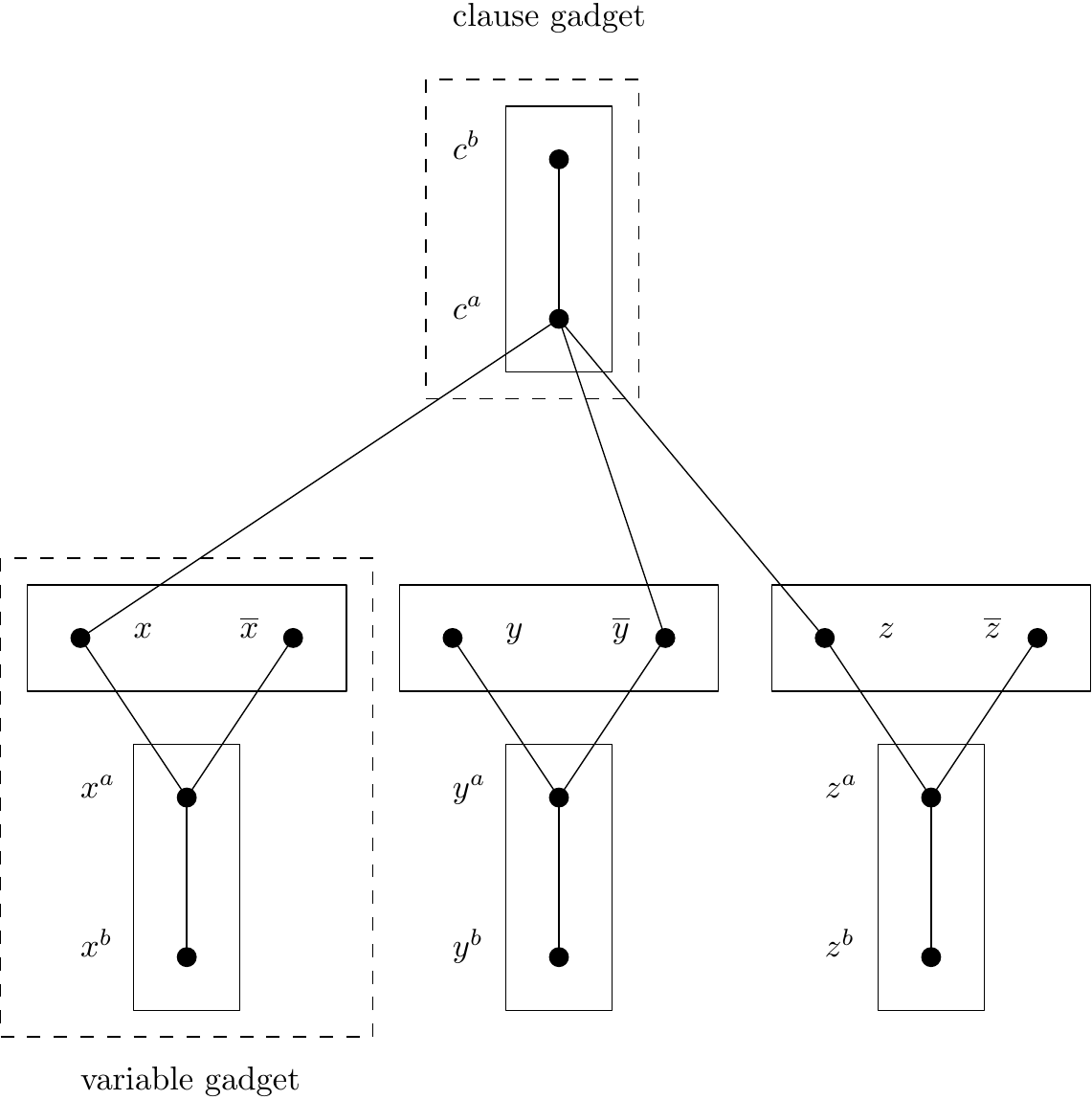}
  \caption{Variable and clause gadgets.}
  \label{fig:maxhardness}
\end{figure}

    Let us illustrate an example instance for the reduction (see Figure~\ref{fig:maxhardnessExample}). In this example, we have the formula as $ (\overline{w} \lor \overline{x} \lor y) \land (x \lor \overline{y} \lor z)$. Thus, we have two clauses $c_1 = \overline{w} \lor \overline{x} \lor y$ and $c_2 = x \lor \overline{y} \lor z$ and four variables $w, x, y$ and $z$. Corresponding to each clause and each variable, we have a clause gadget and a variable gadget respectively (as described earlier). The domination gadget attached with both the clause and variable gadgets is shown below to increase readability. We have shown one possible coloring in this instance with a large solution size.% There are other possible colorings as well.
    
    \begin{figure}[!h]
        \centering
        \includegraphics[width = \textwidth]{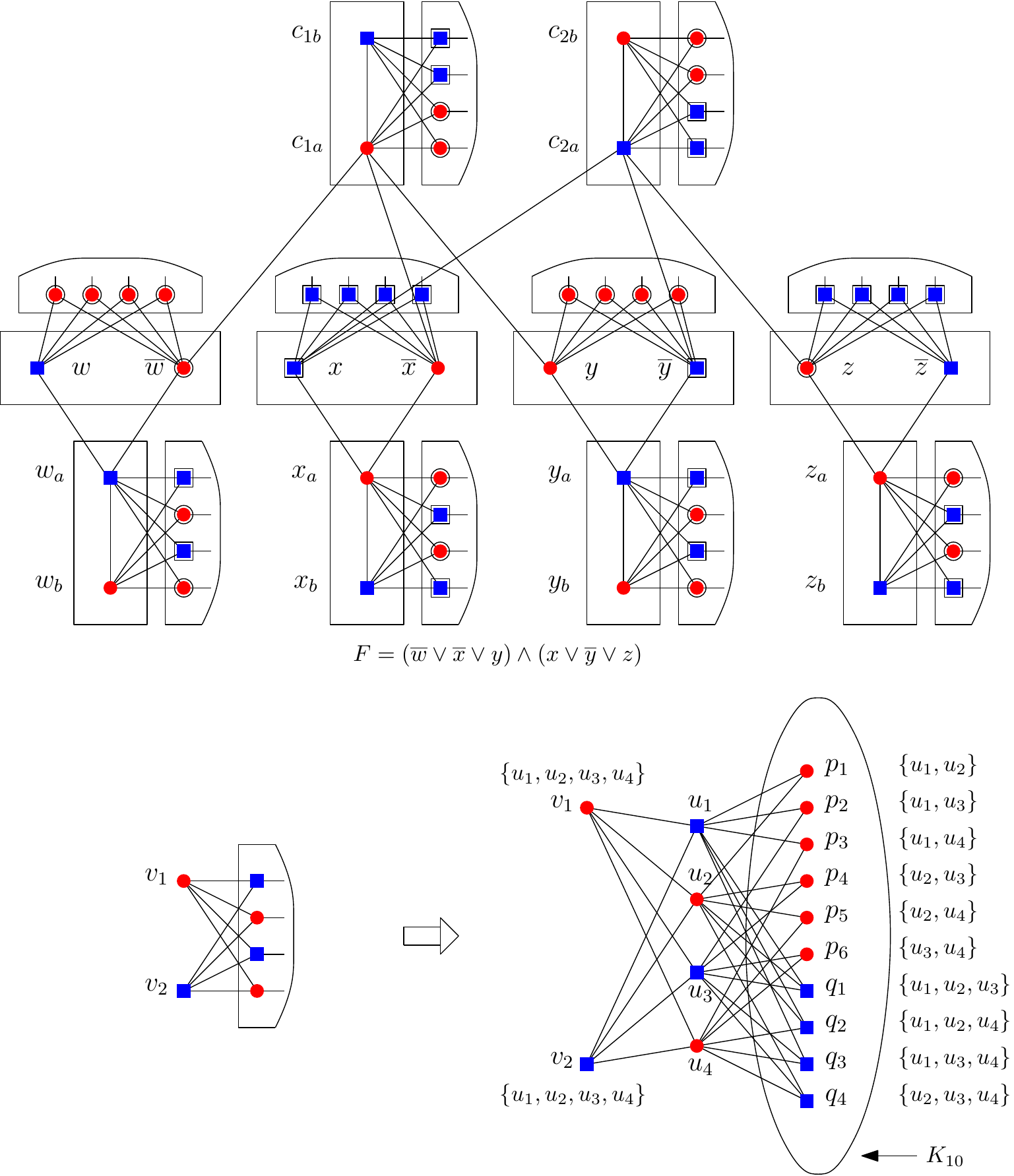}
        \caption{An example to show the reduction from \textsc{3-SAT-2l} to \PBMAXSEP{}. The square vertices are blue, the round ones are red.}
        \label{fig:maxhardnessExample}
    \end{figure}

\begin{claim} \label{claim:maxrbsep-sep}
  $\maxsepRB(H) = \sep(H)$.
\end{claim}

\begin{claimproof}
It is clear that $\maxsepRB(H) \leq \sep(H)$, since any set that separates \emph{all} pairs of vertices also separates all red-blue pairs, for any possible coloring.

We will now prove that $\sep(H)\leq\maxsepRB(H)$ by providing a specific coloring, and showing that for this coloring, any (optimal) solution also separates all pairs of vertices.

  Consider a coloring $c$ of $H$, where for each domination gadget $H(v_1, v_2)$, all $p_i$'s are colored blue and all $q_j$'s are colored red, or vice versa. For each variable $x$, the vertices $x^a$ and $x^b$ are given different colors. For each clause $C$, the vertices $c^a$ and $c^b$ are also colored differently. The rest of the vertices are colored arbitrarily.

  Consider a (minimum) red-blue separating set $S$ of $H$ with respect to the coloring $c$. For each domination gadget $H(v_1, v_2)$, the vertices $u_1$ to $u_4$ must be included in $S$. This is because for each $u_h \in H(v_1, v_2)$, there exist $p_i, q_j \in H(v_1, v_2)$ such that $N[p_i] \triangle N[q_j] = \{u_h\}$, where $\triangle$ denotes the symmetric difference between two sets. Since $p_i$ and $q_j$ are given different colors, $u_h$ is the only vertex which separates them and has to be included in $S$.

  Observe that, for each variable $x$, the vertices $u_1$ to $u_4$ of the domination gadget $H(x^a, x^b)$ separate the vertices $x^a$ and $x^b$ from the rest of $H$. Therefore, they only need to be separated between themselves. The same reasoning holds for the vertices $x$ and $\x$, for each variable $x$, as well as for the vertices $c^a$ and $c^b$, for each clause $C$.

  For each variable $x$, at least one of the vertices $x$ or $\x$ must be included in $S$ in order to separate the vertices $x^a$ and $x^b$ as they are colored differently. This also separates vertices $x$ and $\x$, if they are assigned different colors. For each clause $C = (x \lor \y \lor z)$ (say), at least one of the vertices $x, \y$ or $z$ needs to be included in $S$ (if not already included) in order to separate the vertices $c^a$ and $c^b$ as they are also colored differently. The set $S$ must contain a subset of vertices from the set $\{x, \x \mid x\in X\}$ which separates the vertices $y^a$ and $y^b$ for all variables $y$, and the vertices $c^a$ and $c^b$ for all clauses $C$.

  Since the size of set $S$ is minimum, it is a minimum red-blue separating set of $H$ with respect to the coloring $c$. We will now show that $S$ is also a separating set of (uncolored) $H$. Observe that for each domination gadget $H(v_1, v_2)$, the vertices from $u_1$ to $u_4$ (which are in $S$) separate all (uncolored) $p_i$'s and $q_j$'s from the rest of the vertices in $H$. This is because the intersection of the set $\{u_1, u_2, u_3, u_4\}$ and the closed neighborhood of each $p_i$ or $q_j$ is unique. The vertices $u_1$ to $u_4$ are also separated from the remaining vertices.

  Also, since the vertices $u_1$ to $u_4$ are included in $S$ for all domination gadgets of $H$, therefore going by the previous explanation we only need to separate the vertex pairs $(x^a, x^b), (x, \x)$ and $(c^a, c^b)$. But the construction of $S$ is such that all these pairs of vertices are already separated by $S$. Therefore $S$ is also a separating set of $H$. %The minimum comes from the fact that we have chosen the size of $S$ to be as small as possible.
  This proves our claim.
\end{claimproof}

\begin{claim} \label{claim:sep-k}
  If $\sigma$ is satisfiable, $\sep(H) = k$ and otherwise, $\sep(H) > k$, where $k = 4m + 9n$.
\end{claim}

\begin{claimproof}
  Consider a separating set $S$ of $H$. For each domination gadget $H(v_1, v_2)$, the vertices $u_1$ to $u_4$ must be included in $S$ as these are the only vertices which can separate all the $p_i$'s and $q_j$'s of $H(v_1, v_2)$ and themselves. Also, for each variable $x$, one of the vertices $x$ or $\x$ must also be included in $S$ in order to separate $x^a$ and $x^b$. Thus, a total of $4 (m + 2n) + n = 4m + 9n$ vertices are needed in $S$. This implies that $\sep(H) \geq k = 4m + 9n$.
  
  If $\sigma$ is satisfiable, then with respect to a satisfying assignment, we include for each variable $x$, the vertex $x$ if it is assigned true, or the vertex $\x$, if it is assigned false. These vertices also separate $c^a$ and $c^b$ for all clauses $C$. Therefore, $S$ is a separating set of $H$ and $\sep(H) = k$.

  On the contrary, if $\sigma$ is not satisfiable, then $S$ is not a separating set of $H$ as there exists some clause $C$ for which $c^a$ and $c^b$ are not separated. Since all vertices in $S$ are necessary to be included, any separating set of $H$ is a strict superset of $S$, and $\sep(H) > k$.
\end{claimproof}

  From Claim \ref{claim:maxrbsep-sep} and Claim \ref{claim:sep-k} it follows that $\sigma$ is satisfiable if and only if $\maxsepRB(H) \leq k = 4m + 9n$. Since the maximum degree of $H$ is 12 and \textsc{3-SAT-2l} is NP-hard, \PBMAXSEP{} is also NP-hard for graphs with maximum degree 12.
\end{proof}

We can use Theorem~\ref{ratio} and a reduction to \textsc{Set Cover} to show the following algorithmic result.

\begin{theorem}\label{thm:approx-MAX}
  \PBMAXSEP{} can be approximated within a factor of $O(\ln^2 n)$ on graphs of order $n$ in polynomial time.
\end{theorem}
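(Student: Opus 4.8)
## Proof proposal

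The plan is to combine the reduction to \textsc{Set Cover} from Proposition~\ref{rbsep-approx} with the structural relationship between $\sep$ and $\maxsepRB$ given by Theorem~\ref{ratio}. The key observation is that $\maxsepRB(G)$ is hard to compute directly (it is a max-min quantity, not naturally in NP), so instead of approximating it on the nose, I would approximate the quantity $\sep(G)$, which \emph{is} polynomially related to $\maxsepRB(G)$ from both sides. Indeed, for any twin-free graph we trivially have $\maxsepRB(G) \le \sep(G)$, and Theorem~\ref{ratio} gives $\sep(G) \le \lceil \log_2 n\rceil \cdot \maxsepRB(G)$. So a $c$-approximation of $\sep(G)$ is automatically an $O(c \log n)$-approximation of $\maxsepRB(G)$: if $A$ denotes the value returned by our algorithm on $\sep$, then $A \le c\cdot \sep(G) \le c\lceil\log_2 n\rceil \maxsepRB(G)$, while also $A \ge \sep(G) \ge \maxsepRB(G)$, so $A/\maxsepRB(G) \in [1, O(c\log n)]$.

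The second ingredient is that $\sep(G)$ itself is approximable within a logarithmic factor via a \textsc{Set Cover} reduction, exactly analogous to the one in Proposition~\ref{rbsep-approx} but for \emph{all} pairs rather than only red-blue pairs. Concretely: given $G$ on $n$ vertices, create an element of $U$ for each of the $\binom{n}{2} \le n^2/2$ unordered pairs $\{x,y\}$ of distinct vertices, and for each vertex $v \in V(G)$ create a set $S_v \in \mathcal S$ consisting of those pairs $\{x,y\}$ with $v \in N[x]\triangle N[y]$ (i.e. $v$ separates $x$ from $y$). A subfamily covers $U$ if and only if the corresponding vertex set separates all pairs, so a minimum set cover corresponds to a minimum separating set. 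Running the greedy \textsc{Set Cover} algorithm yields an approximation factor of $\ln|U| + 1 \le \ln(n^2/2) + 1 \le 2\ln n$ (for $n$ large enough; a trivial constant-factor adjustment handles small $n$). Hence $\sep(G)$ admits a polynomial-time $(2\ln n)$-approximation, and therefore $\maxsepRB(G)$ admits a polynomial-time $O(\ln^2 n)$-approximation by the chaining above.

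I should be careful about one subtlety: Theorem~\ref{ratio} as stated requires $G$ to have at least $2^{k-1}+1$ vertices for the relevant $k$ — but this is automatic since $k = \lceil \log_2 n\rceil$ is chosen precisely so that $2^{k-1}+1 \le n \le 2^k$, so the bound $\sep(G) \le \lceil\log_2 n\rceil\cdot\maxsepRB(G)$ holds for every twin-free $G$ with $n \ge 2$. (For $n=1$ there is nothing to prove.) Twin-freeness is guaranteed by the problem definition of \PBMAXSEP. The main obstacle, such as it is, is simply noticing that one should not try to approximate $\maxsepRB$ directly but route through $\sep$; once that is seen, both the \textsc{Set Cover} reduction and the final factor computation are routine. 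One could also state the result slightly more precisely as an approximation factor of roughly $2\lceil\log_2 n\rceil\ln n$, but $O(\ln^2 n)$ suffices.

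\begin{proof}
By the definition of \PBMAXSEP{} we may assume $G$ is twin-free; if $n=1$ the statement is trivial, so assume $n\ge 2$. We first show that $\sep(G)$ can be approximated within a factor of $2\ln n$ in polynomial time. Reduce to \textsc{Set Cover}: let $U$ consist of one element for each unordered pair $\{x,y\}$ of distinct vertices of $G$, so $|U|=\binom{n}{2}\le n^2/2$, and for each vertex $v\in V(G)$ let $S_v=\{\,\{x,y\}: v\in N[x]\triangle N[y]\,\}$, with $\mathcal S=\{S_v : v\in V(G)\}$. A subfamily $\{S_v : v\in D\}$ covers $U$ if and only if $D$ separates every pair of vertices, i.e.\ $D$ is a separating set of $G$; hence the minimum set cover of $(U,\mathcal S)$ has size exactly $\sep(G)$, and a cover of size $t$ yields a separating set of size $t$ and vice versa. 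The greedy algorithm for \textsc{Set Cover} runs in polynomial time and returns a cover of size at most $(\ln|U|+1)\cdot\sep(G)\le(\ln(n^2/2)+1)\cdot\sep(G)\le 2\ln n\cdot\sep(G)$ for all sufficiently large $n$ (and the finitely many remaining values of $n$ can be handled by brute force, or by adjusting the constant). Let $A$ denote the size of the separating set so produced; then $\sep(G)\le A\le 2\ln n\cdot\sep(G)$.

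Now recall the relation $\maxsepRB(G)\le\sep(G)$, valid for every twin-free graph, and Theorem~\ref{ratio}, which (taking $k=\lceil\log_2 n\rceil$, so that $2^{k-1}+1\le n\le 2^k$) gives $\sep(G)\le\lceil\log_2 n\rceil\cdot\maxsepRB(G)$. Combining these with the previous paragraph,
\begin{equation*}
\maxsepRB(G)\;\le\;\sep(G)\;\le\;A\;\le\;2\ln n\cdot\sep(G)\;\le\;2\ln n\cdot\lceil\log_2 n\rceil\cdot\maxsepRB(G).
\end{equation*}
Thus $A$ is a value computable in polynomial time satisfying $\maxsepRB(G)\le A\le O(\ln^2 n)\cdot\maxsepRB(G)$, which is the desired approximation.
\end{proof}
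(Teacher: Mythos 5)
Your proposal is correct and follows essentially the same route as the paper: approximate $\sep(G)$ within a logarithmic factor (the paper cites an existing $(2\ln n+1)$-approximation for \textsc{Separation}, whereas you re-derive it via the standard \textsc{Set Cover} reduction), then chain with $\maxsepRB(G)\le\sep(G)\le\lceil\log_2 n\rceil\cdot\maxsepRB(G)$ from Theorem~\ref{ratio}. The only difference is presentational, so nothing further to compare.
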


\begin{proof}
  Let $A$ be a polynomial-time ($2\ln n +1$)-approximation algorithm for the \textsc{Separation} problem~\cite{GKM08}. For any graph $G$, let $S(G)$ denote the separating set returned by $A$ on the input graph $G$. Using Theorem~\ref{ratio}, we have
  
  \begin{align*}
    |S(G)| & \leq (2\ln n +1) \cdot \sep(G) \\
    & \leq (2\ln n +1) \cdot \lceil\log n \rceil \cdot \maxsepRB(G) 
  \end{align*}
  
  Hence, algorithm $A$ is a polynomial-time $O(\ln^2 n)$-approximation algorithm for \PBMAXSEP.
\end{proof}

\section{Conclusion}\label{sec:conclu}

We have initiated the study of \PBSEP{} and \PBMAXSEP{} on graphs, problems which seem natural given the interest that their geometric version has gathered, and the popularity of its ''non-colored'' variants \textsc{Identifying Code} on graphs or \textsc{Test Cover} on set systems.

When the coloring is part of the input, the solution size of \PBSEP{} can be as small as 2, even for large instances; however, we have seen that this is not possible for \PBMAXSEP{} since $\maxsepRB(G)\geq\lfloor\log_2(n)\rfloor$ for twin-free graphs of order $n$. Moreover, $\maxsepRB(G)$ can be as large as $n-1$ in general graphs, yet, on trees, it is at most $2n/3$ (we do not know if this is tight, or if the upper bound of $3n/5$, which would be best possible, holds). However, the upper bound of Theorem \ref{thm:TreeBound}, which is based on the number of support vertices, is tight for trees. It would also be interesting to see if other interesting upper or lower bounds can be shown for other graph classes.

We have shown that $\sep(G)\leq\lceil\log_2(n)\rceil\cdot\maxsepRB(G)$. Is it true that $\sep(G)\leq 2\maxsepRB(G)$? As we have seen, this would be tight.

We have also shown that \PBMAXSEP{} is NP-hard, yet it does not naturally belong to NP. Is the problem actually hard for the second level of the polynomial hierarchy?

\end{document}